\documentclass[11pt,letterpaper]{article}

\usepackage[letterpaper,margin=1in]{geometry}

\usepackage{microtype}
\usepackage{amsmath,amssymb,amsthm,mathtools}
\usepackage{booktabs}
\usepackage{enumitem}
\usepackage{xcolor}
\usepackage[colorlinks=true,linkcolor=blue!50!black,citecolor=green!40!black,urlcolor=blue!50!black]{hyperref}
\usepackage[capitalise,nameinlink]{cleveref}
\usepackage{longtable}
\usepackage{multirow}
\setlist{itemsep=2pt,topsep=4pt}
\setlist[itemize]{label=$\bullet$}
\usepackage{colortbl}
\definecolor{LightCyan}{rgb}{0.8, 0.9, 1}
\usepackage{tikz}
\usetikzlibrary{arrows.meta,positioning,calc,decorations.pathreplacing}

\newtheorem{theorem}{Theorem}[section]
\newtheorem{lemma}[theorem]{Lemma}
\newtheorem{corollary}[theorem]{Corollary}

\theoremstyle{definition}
\newtheorem{definition}[theorem]{Definition}

\newtheorem{procedure}[theorem]{Procedure}
\theoremstyle{remark}

\crefname{construction}{Construction}{Constructions}
\crefname{procedure}{Procedure}{Procedures}
\crefname{fact}{Fact}{Facts}
\newtheorem*{example}{Example}

\newcommand{\F}{\mathbb{F}}

\newcommand{\dL}{d_{\mathrm{L}}}
\newcommand{\val}{\mathrm{val}}
\newcommand{\bin}{\mathrm{bin}}
\newcommand{\Good}{\mathrm{Good}}
\newcommand{\Bad}{\mathrm{Bad}}
\newcommand{\Enc}{\mathrm{Enc}}
\newcommand{\Dec}{\mathrm{Dec}}
\newcommand{\Tr}{\mathrm{Tr}}

\newcommand{\poly}{\mathrm{poly}}
\newcommand{\polylog}{\mathrm{polylog}}

\newcommand{\Tag}{\mathrm{Tag}}

\makeatletter
\renewcommand{\@fnsymbol}[1]{\ensuremath{\ifcase#1\or *\or \dagger\or \ddagger\else \star\fi}}
\makeatother

\title{Seed Recycling: Explicit Codes for Mixed Synchronization Errors\\ Matching the Leading Term of the Gilbert--Varshamov Bound}

\author{
  \begin{tabular}{c}
    Quan D.\ Bui\\[1pt]
    \small VinUniversity, Hanoi, Vietnam\\
    \small\texttt{quan.bd@vinuni.edu.vn}
  \end{tabular}
  \and
  \begin{tabular}{c}
    Manabu Hagiwara\\[1pt]
    \small Chiba University, Chiba, Japan\\
    \small\texttt{hagiwara@math.s.chiba-u.ac.jp}
  \end{tabular}
  \and
  \begin{tabular}{c}
    Van Khu Vu\\[1pt]
    \small VinUniversity, Hanoi, Vietnam\\
    \small\texttt{khu.vv@vinuni.edu.vn}
  \end{tabular}
  \and
  \begin{tabular}{c}
    Shuche Wang\\[1pt]
    \small National University of Singapore, Singapore\\
    \small\texttt{shuche.wang@u.nus.edu}
  \end{tabular}
}
\date{}

\begin{document}

\maketitle
\thispagestyle{empty}

\begin{abstract}
Correcting synchronization errors is one of the oldest problems in coding theory. In the mid-1960s, Levenshtein, and independently Ullman, showed that a single deletion can be corrected with $\log n+O(1)$ bits of redundancy, which is asymptotically optimal. For any fixed $t$, a Gilbert--Varshamov (GV) type argument shows that codes correcting $t$ deletions exist with redundancy $2t\log n+O_t(1)$. Achieving this classical GV-type bound explicitly has proved remarkably difficult. Explicit codes with redundancy $O_t(\log n)$ were first obtained by Brakensiek, Guruswami, and Zbarsky in 2016, half a century after Levenshtein, and Guruswami and H\aa stad matched the GV-type bound for two deletions only in 2021. For every fixed
$t\ge 3$, the best explicit constructions still require roughly twice the GV-type redundancy: $(4t-1)\log n+o(\log n)$ for $t$ deletions and
$4t\log n+O_t(\log\log n)$ for $t$ edits (insertions, deletions, and
substitutions).

We close this factor-of-two gap for every fixed $t$, even in the Damerau--Levenshtein model, which also allows adjacent transpositions. We construct explicit binary codes with deterministic polynomial-time encoding and decoding that correct any combination of at most $t$ such errors using $2t\log n+2t\log\log n+O_t(\log\log\log n)$ bits of redundancy. A single construction thus achieves the leading term of the classical GV-type bound simultaneously for deletions, edits, and Damerau--Levenshtein errors,
extending the Guruswami--H\aa stad result from two deletions to every fixed $t$ and to substantially richer error models.

Our main technical contribution is \emph{seed recycling}, which nearly halves the overhead of hashing-based separation. Standard approaches, including syndrome compression and distributed graph coloring, spend roughly $2\log L$ bits to distinguish a message from $L$ confusable competitors: $\log L$ bits for a hash value and $\log L$ for the hash seed. Over a sufficiently large field, most seeds separate the message from all its competitors, so the
encoder's choice among valid seeds can itself carry message bits, which the decoder recovers from the seed's rank after reconstructing the message.
Combined with a linear multiply-and-project hash, this yields separating tags of net redundancy $\log L+O(1)$, a mechanism that may be of independent
interest.
\end{abstract}

\newpage
\tableofcontents
\thispagestyle{empty}
\newpage
\setcounter{page}{1}

\section{Introduction}
\label{sec:intro}

Synchronization errors, such as insertions and deletions of symbols, destroy the alignment between transmitted and received strings and are much harder to correct than standard substitutions. They arise in document and file synchronization~\cite{CPSV00}, in DNA-based data storage~\cite{CGK12}, where insertions, deletions, substitutions, and adjacent transpositions all occur, and in racetrack memories~\cite{PHT08,CKVVY18}. Codes for such errors were introduced in 1966 by Levenshtein~\cite{Lev66} and Ullman~\cite{ullman1966near} independently. Although this line of research dates back more than sixty years, it has only gained significant momentum over the past decade~\cite{Mit09,HS21survey,BGZ18,GH21}. While much work in theoretical computer science addresses a constant fraction of errors~\cite{SZ99,HS17,GHS20,GHL21}, this paper studies the classical regime of a fixed constant number $t$ of errors as the block length $n\to\infty$, where the central problem is to minimize the required redundancy~\cite{BGZ18,SB21,GH21}.

We consider binary codes $\mathcal C\subseteq\{0,1\}^n$ correcting up to $t$ errors in three standard models: $t$ deletions, which is equivalent to correcting any combination of $t$ insertions and deletions~\cite{Lev66}; $t$ edits, consisting of insertions, deletions, and substitutions; and $t$ Damerau--Levenshtein (DL) errors, which add adjacent transpositions $u_iu_{i+1}\mapsto u_{i+1}u_i$ to the edit operations~\cite{Dam64}. A code corrects $t$ errors if and only if the error balls of radius $t$ around distinct codewords are disjoint. Two inputs are confusable if their radius-$t$ balls intersect. In each of the three models, every input in $\{0,1\}^n$ is confusable with at most $L=O_t(n^{2t})$ other inputs (its \emph{competitors}). Consequently, a greedy Gilbert--Varshamov (GV) type argument guarantees the existence of codes with redundancy $n-\log|\mathcal C|\le\log(L+1)=2t\log n+O_t(1)$, while a sphere-packing argument gives the lower bound $t\log n-O_t(1)$~\cite{Lev66,Lev02}. Recently, Alon, Bourla, Graham, He, and Kravitz~\cite{ABGHK24} improved the existential upper bound for deletions by a $\log\log n$ term. The leading term of this existential upper bound stood for six decades. While we prepare this manuscript, En~Gad~\cite{EnGad26} showed that for $t\ge 2$,  $(2t-1)\log n+O_t(\log\log n)$ redundancy suffices to correct $t$ deletions and also $t$ bounded substring-edits.

The GV-type argument is non-constructive, and matching its leading term $2t\log n$ with explicit codes that have deterministic polynomial-time encoding and decoding has been a long-standing challenge. For $t=1$, the Varshamov--Tenengolts (VT) code and Levenshtein's variant~\cite{VT65,Lev66} achieve $\log n+O(1)$ redundancy for one deletion or one edit, and Gabrys, Guruswami, Ribeiro, and Wu~\cite{GGRW23} showed that codes correcting one deletion or one adjacent transposition with $\log n+O(\log\log n)$ redundancy exist. For $t=2$ deletions, Guruswami and H\aa stad~\cite{GH21} achieved $4\log n+O(\log\log n)$ redundancy, matching the leading term of the GV-type bound. For $t\ge 3$ deletions, and for $t\ge 2$ edits or DL errors, no explicit construction matching the $2t\log n$ bound was previously known, and existing general methods paid roughly twice the GV-type bound.

\subsection{Our results}
\label{subsec:intro-results}

In this paper, we overcome this barrier for every fixed $t\ge 2$. We construct explicit binary codes with deterministic polynomial-time encoding and decoding that match the leading term $2t\log n$ of the GV-type bound for any combination of insertions, deletions, substitutions, and adjacent transpositions.

\begin{theorem}[Main theorem]
\label{thm:main}
For every fixed $t\ge 2$ and all sufficiently large $n$, there are explicit binary codes $\mathcal C\subseteq\{0,1\}^n$ correcting any combination of at most $t$ insertions, deletions, substitutions, and adjacent transpositions, with deterministic encoders and decoders, in two variants:
\begin{enumerate}[label=(\alph*),leftmargin=*]
  \item (bootstrapped) after a deterministic precomputation of $\operatorname{polylog}(n)$ time, the code has redundancy $n-\log|\mathcal C|\le 2t\log n+2t\log\log n+O_t(\log\log\log n)$;
  \item (greedy) after a deterministic precomputation of $n^{6t+2+o(1)}$ time, the code has redundancy $n-\log|\mathcal C|\le 2t\log n+2t\log\log n+O_t(1)$.
\end{enumerate}
In both variants, encoding and full-message decoding take $O_t(n^{2t+2}\log^2 n)$ bit operations, and the payload, consisting of the first $n-O_t(\log n)$ bits of the codeword, is recovered from any channel output in $O_t(n^{t+1}\log n)$ bit operations.
\end{theorem}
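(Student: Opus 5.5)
The construction is a systematic code: the codeword is $x\circ r$, where $x\in\{0,1\}^{k}$ is the payload and $r$ is a short protected redundancy block of $O_t(\log n)$ bits. The block $r$ carries a separating tag for $x$ and is itself encoded robustly against $t$ errors. I would proceed in three steps.

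\textbf{Step 1 (tag).} Fix the confusability graph on $\{0,1\}^k$ for the Damerau--Levenshtein model. Every $x$ has at most $L=O_t(k^{2t})$ competitors, and they can be enumerated in $O_t(n^{2t+1})$ time: list the radius-$t$ ball of $x$, then list the ball of each of those strings. I would take a linear multiply-and-project hash over $\F_{2^m}$ with $m=\log L+O(\log\log n)$. Write $x$ as a polynomial $p_x$ of degree $k/m$ over $\F_{2^m}$, and for a seed $s\in\F_{2^m}$ set $h_s(x)=\pi(p_x(s))$, where $\pi$ projects onto about $\log L+O(\log\log n)$ coordinates.

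For a fixed competitor $y$, the difference $p_x-p_y$ is nonzero of degree at most $k/m$. So $p_x(s)=p_y(s)$ for at most $k/m$ seeds. After composing with a random-like projection, the bad seeds for $y$ are a tiny fraction of $\F_{2^m}$. A union bound over the $L$ competitors then shows that a $(1-2^{-c})$ fraction of seeds separate $x$ from all its competitors once $m$ exceeds $\log L+\log(k/m)+c$. Equivalently, with a $\log$-sized slack we get at least $2^{m}/2$ good seeds.

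\textbf{Step 2 (seed recycling).} This is the main new idea. The encoder does not fix the seed. It ranks the good seeds for $x$ in a canonical order and picks the good seed whose rank encodes $m-1$ extra message bits. The net redundancy of the tag is therefore $|h_s(x)|+m-(m-1)=\log L+O(\log\log n)$.

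The payload must now be split as $x=(x',b)$, where $b$ is the rank information. There is a circularity here: the good-seed set depends on all of $x$, and $b$ is chosen after $x'$. To break it, I would let the rank bits live in a part of the codeword that is not hashed, namely in $r$ itself, whose own protection is handled in Step 3.

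The decoder works as follows. It recovers $s$ and $h_s(x)$ from the protected block. It enumerates the radius-$t$ ball of the received payload region, which has $O_t(n^{t})$ candidates each of length $O(n)$, and keeps the unique candidate whose hash matches; this gives payload decoding in $O_t(n^{t+1}\log n)$. For full message decoding it then recomputes the good-seed set of $x$ to read off the rank of $s$. This costs $O_t(n^{2t+2}\log^2 n)$, since there are $L$ competitors, each requiring field arithmetic.

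\textbf{Step 3 (protecting $r$).} The block $r$ has length $\ell=O_t(\log n)$. I would protect it in two layers, combined with a synchronization marker or run-length-limited buffer so that the decoder can locate the boundary between payload and redundancy despite $t$ errors.

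The first layer is a recursive call to the same construction at length $\ell$, which costs redundancy $2t\log\ell+O_t(\log\log\ell)=2t\log\log n+O_t(\log\log\log n)$. This gives variant~(a); precomputation for the inner levels takes $\polylog(n)$ time.

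The second layer is a greedy GV code on $\{0,1\}^{\ell}$, built by brute force. Each inner word must be checked against its $\ell^{2t}$ competitors over $2^{\ell}$ words, and with $2^{\ell}=n^{2t+O(1)}$ this gives roughly $n^{6t+2+o(1)}$ time and redundancy $2t\log\ell+O_t(1)$. This gives variant~(b).

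\textbf{Main obstacle.} I expect the hardest part to be making seed recycling consistent. The good-seed set must be a deterministic function of the portion the decoder reconstructs first. The rank bits must not influence that set. All of this must keep overhead $O(\log\log n)$ rather than $O(\log n)$, while the boundary between payload and redundancy is shifted by up to $t$ errors, including transpositions that straddle it. I would handle the boundary by appending a buffer of $O_t(1)$ alternating bits and decoding the payload from the window of length $k\pm t$.
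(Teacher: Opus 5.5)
There is a genuine gap in Step~1, and it is exactly where the leading constant is decided. Your hash $h_s(x)=\pi(p_x(s))$ uses a single seed, and a fixed projection cannot behave ``randomly.'' Suppose $\pi$ drops a coordinate. Let $y$ be obtained from $x$ by one substitution at that coordinate of the constant-coefficient block. Then $p_x-p_y$ is a nonzero constant in $\ker\pi$, so $y$ collides with $x$ under \emph{every} seed and no good seed exists. Suppose instead that $\pi$ keeps all $m$ coordinates. Then your own threshold $m>\log L+\log(k/m)+c$ applies, and since $k\approx n$ this means $m\ge\log L+\log n-\log\log n+O(1)$. That contradicts your choice $m=\log L+O(\log\log n)$. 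The net tag cost is then about $(2t+1)\log n$: recycling removes the cost of the seed, but not the cost of the $m$-bit hash value.

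The missing idea is a second, independent seed applied \emph{after} a good evaluation point has been fixed. The paper takes $a\in\F_{2^b}$ with $2^b\ge 2L\lceil k/b\rceil$, so that every difference $\delta_y=p_x(a)-p_y(a)$ is a nonzero field element. It then outputs $\pi(\lambda\,p_x(a))$ for a multiplier $\lambda\in\F_{2^b}$. Since $\lambda\mapsto\lambda\delta_y$ is a bijection, each competitor rules out exactly a $2^{-r}$ fraction of multipliers, independently of the degree, so $2^r\ge 2L$ suffices. Both $a$ and $\lambda$ are recycled, each carrying $b-1$ message bits. Hence the large field needed for $a$ costs only one net bit, and the tag $a\Vert\lambda\Vert\pi(\lambda\,p_x(a))$ has net cost $r+2<\log L+4$. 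This is also where the exponent in variant~(b) comes from: the greedy tag code is built at length about $2b+r\approx(6t+2)\log n$.

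A second gap is the boundary. A buffer of $O_t(1)$ alternating bits does not pin down the boundary against adversarial errors. What correctness really needs is a \emph{splitting} statement, which your plan never proves. Whenever $d_{\mathrm{DL}}(x\Vert s,x'\Vert s')\le 2t$ with $|x|=|x'|=k$, two things must hold:
\begin{itemize}
\item $d_{\mathrm{DL}}(s,s')\le 2t$, so that a tag code of minimum DL distance $2t+1$ forces equal tags;
\item $d_{\mathrm{DL}}(x,x')\le 2t$, so that $x'$ lies in the competitor list you enumerate. Your two-ball enumeration silently assumes this half.
\end{itemize}
The paper proves $\max\{d_{\mathrm{DL}}(x,x'),d_{\mathrm{DL}}(s,s')\}\le d_{\mathrm{DL}}(x\Vert s,x'\Vert s')$ by tracking a marker $\ell_i\Vert\rho_i$ through an optimal operation sequence. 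The invariant is $d_{\mathrm{DL}}(s,\rho_i)+\bigl||\ell_i|-k\bigr|\le i$: a straddling transposition costs the suffix one substitution and leaves $|\ell_i|$ unchanged. The prefix bound then follows by reversal. With this lemma no marker is needed. The decoder lists $B_t(y)\cap\{0,1\}^n$ and splits every candidate at position $k$. The unique candidate whose suffix is a tag codeword and whose payload passes the hash test is the transmitted codeword.
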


\begin{table}[t]
\centering
\caption{Selected redundancy upper bounds for explicit binary codes, for fixed $t$ and $n\to\infty$. The last column uses variant~(a) of Theorem~\ref{thm:main}; every row follows from the same DL construction. Full comparisons are given in Tables~\ref{tab:deletions}--\ref{tab:dl} in Appendix~\ref{app:notation}.}
\label{tab:intro-comparison}\label{tab:sota}
\resizebox{\textwidth}{!}{%
\renewcommand{\arraystretch}{1.35}%
\setlength{\aboverulesep}{0pt}%
\setlength{\belowrulesep}{0pt}%
\begin{tabular}{ll>{\columncolor{LightCyan}}l}
\toprule
\textbf{Errors} & \textbf{Previous constructions} & \multicolumn{1}{l}{\textbf{This work (Theorem~\ref{thm:main}(a))}} \\
\midrule
Two deletions & $4\log n+10\log\log n+O(1)$~\cite{GH21} & $\mathbf{4}\log n+\mathbf{4}\log\log n+O(\log\log\log n)$ \\
\hline
$t$ deletions ($t\ge 3$) & \begin{tabular}[c]{@{}l@{}}$4t\log n+o(\log n)$~\cite{SGB20opt}\\[-2pt] $(4t-1)\log n+o(\log n)$~\cite{SPCH21,SPCH22}\end{tabular} & $\boldsymbol{2t}\log n+2t\log\log n+O_t(\log\log\log n)$ \\
\hline
Two edits & \begin{tabular}[c]{@{}l@{}}$6\log n+O(\log\log n)$~\cite{SG24}\\[-2pt] $6\log n+8$~\cite{PZ24}\end{tabular} & $\mathbf{4}\log n+4\log\log n+O(\log\log\log n)$ \\
\hline
$t$ edits ($t\ge 3$) & \begin{tabular}[c]{@{}l@{}}$4t\log n+o(\log n)$~\cite{SGB20opt,SGB20syndrome} \\[-2pt] $4t\log n+O_t(\log\log n)$~\cite{LGF25}\end{tabular} & $\boldsymbol{2t}\log n+2t\log\log n+O_t(\log\log\log n)$ \\
\hline
$t$ DL errors ($t\ge 2$) & $8t\log n+O_t(\log\log n)$~\cite{LGF25} (via $2t$-edits codes) & $\boldsymbol{2t}\log n+2t\log\log n+O_t(\log\log\log n)$ \\
\bottomrule
\end{tabular}%
}
\end{table}

Table~\ref{tab:sota} summarizes how Theorem~\ref{thm:main}(a) compares with the best previous explicit constructions; a fuller historical overview is deferred to Appendix~\ref{app:notation}. For $t$ deletions, after early explicit codes with $O(t^2\log t\log n)$ and $O(t\log n)$ redundancy~\cite{BGZ18,CJLW18,Hae19} and two-deletion codes with $8\log n$ and $7\log n$ redundancy~\cite{GS19,SRB20}, Guruswami and H\aa stad~\cite{GH21} achieved $4\log n+O(\log\log n)$ for $t=2$. This is the first known explicit binary code that its redundancy match the leading term of the GV-type upper bound. For $t\ge 3$ deletions, Sima and Bruck~\cite{SB21} obtained $8t\log n+o(\log n)$, syndrome compression~\cite{SGB20syndrome,SGB20opt} gave $4t\log n+o(\log n)$, and Song, Polyanskii, Cai, and He~\cite{SPCH21,SPCH22} reduced this to $(4t-1)\log n+o(\log n)$ by precoding with a single-deletion syndrome. For two edits, Sun and Ge~\cite{SG24} and Pi and Zhang~\cite{PZ24} constructed explicit codes with redundancy $6\log n+O(\log\log n)$ and $6\log n+8$, respectively. 

For general $t$ edits, Li, Gabrys, and Farnoud~\cite{LGF25} obtained $4t\log n+O(\log\log n)$ redundancy via Linial's coloring, and combining the framework of \cite{LGF25} with Levenshtein’s single-edit syndrome yields $(4t-1)\log n+O_t(\log\log n)$ (see Appendix~\ref{app:lgf-consequences}). Con and Li~\cite{con2025oblivious} recently studied oblivious deletion codes, where the deletion pattern is independent of the encoder's private randomness, and also gave a randomized construction that, with high probability over the random choice of the code, corrects $t$ adversarial deletions with redundancy $(2t+1)\log n+O_t(\log\log n)$. In contrast, our construction is fully deterministic, including its precomputation, and achieves $2t\log n+O_t(\log\log n)$ redundancy for any combination of $t$ edits and adjacent transpositions. 

For $t$ DL errors, to the best of our knowledge, no dedicated explicit code was known for general $t\ge 2$, with prior works limited to special combinations such as one deletion and $\ell$ adjacent transpositions~\cite{GYM18,WVT23}. Since an adjacent transposition can be simulated by two substitutions, any code correcting $2t$ edits corrects $t$ DL errors, but this naive relaxation gives $8t\log n+O_t(\log\log n)$ bits of redundancy. We observe that combining the $2$-substring edit syndrome of~\cite[Theorem~6]{LGF25} with our DL boundary projection lemma (Lemma~\ref{lem:dl-projection}) would yield an explicit code with $4t\log n+O_t(\log\log n)$ redundancy (see Appendix~\ref{app:lgf-consequences}), which our main construction further improves to $2t\log n+2t\log\log n+O_t(\log\log\log n)$.

Related synchronization channels such as $q$-ary deletions~\cite{Ten84,LN16,SGB20optq,SC23,LTX24,NCS24,STV25}, permutation codes~\cite{Lev92,JMSB09,FSM13,GYFSBD16,WNCV25,WBNVW26}, and $\ell$-symbol read channels~\cite{CB11,CV20,XC26,YG26} are not treated here. 

\subsection{Technique Overview}
Our construction has three components:
\begin{enumerate}[leftmargin=*]
  \item \emph{ Seed recycling and breaking the factor of two (Section~\ref{sec:recycling}).} To understand the source of our improvement, we first identify why previous general methods lose a factor of two in redundancy. Consider the basic task of separating each input message $x\in\{0,1\}^m$ from a set $\Gamma(x)$ of at most $L$ confusable competitors. In both syndrome compression~\cite{SGB20syndrome} and distributed graph coloring~\cite{LGF25}, the standard approach selects a hash function $h_a$ from a hash family that avoids collisions with all competitors in $\Gamma(x)$, and appends the tag $(a, h_a(x))$. Because the tag must describe both the seed $a$ (an integer modulus $p$ in syndrome compression) and the hash value $h_a(x)$ (the residue modulo $p$), and each component costs about $\log L$ bits, the resulting tag requires about $2\log L$ bits of redundancy.

  \begin{table}[ht]
\centering
\caption{Net cost of a tag separating an input from at most $L$ competitors before protecting the tag (for inputs of length $n$ with $L=n^{O(1)}$).}
\label{tab:factor-two}
\resizebox{\textwidth}{!}{%
\begin{tabular}{lll}
\hline
\toprule
\textbf{Method} & \textbf{Tag components} & \textbf{Net cost} \\
\midrule
Syndrome compression~\cite{SGB20syndrome} & modulus $p$ ($\log L+o(\log n)$) $+$ residue ($\log L+o(\log n)$) & $2\log L+o(\log n)$ \\
Linial recoloring~\cite{LGF25,Lin92} & one color from a palette of size $L^2\operatorname{poly}(\log n)$ & $2\log L+O(\log\log n)$ \\
\rowcolor{LightCyan} \textbf{Seed recycling} (Theorem~\ref{thm:recycling}) & $2b$-bit seeds (carrying $2b-2$ message bits) $+$ $\lceil\log(2L)\rceil$-bit hash & $<\log L+4$ \\
\bottomrule
\end{tabular}%
}
\end{table}

  Our main tool, \emph{seed recycling}, eliminates this factor-of-two penalty by making the hash seed itself nearly free. If we draw the seed $a$ from a sufficiently large finite field $\mathbb F_{2^b}$, more than half of all $2^b$ seeds separate $x$ from all $L$ of its competitors. Instead of picking an arbitrary good seed, the encoder uses $b-1$ additional message bits as an index to select which of the first $2^{b-1}$ good seeds to use in a canonical order; once the decoder recovers $x$, it reconstructs the same ordered list of good seeds and reads those $b-1$ message bits back from the seed's rank. Thus, a $b$-bit seed carries $b-1$ message bits and contributes only $1$ bit to the net redundancy. See Table~\ref{tab:factor-two} for the detailed comparison.

  In the first stage, we view $x$ as a polynomial $P_x$ of degree at most $k-1$ over $\mathbb F_{2^b}$, where $k=\lceil m/b\rceil$. Since each competitor can collide with $x$ on at most $k-1$ roots, choosing $2^b\ge 2Lk$ ensures that more than half of the seeds are good. However, the remaining $b$-bit hash value $P_x(a)$ still depends on the number of blocks $k$. A second hashing step removes this dependence: for a good seed $a$, every difference $P_x(a)-P_{x'}(a)$ is nonzero, so multiplying $P_x(a)$ by a second seed $\lambda\in\mathbb F_{2^b}$ and projecting onto the first $r=\lceil\log(2L)\rceil$ bits fails for \emph{exactly} a $2^{-r}$ fraction of multipliers $\lambda$ for each fixed competitor $x'$, independent of $k$. Recycling $\lambda$ in the same way yields a tag $\Tag(x,\sigma,\tau)$ of net cost $r+2<\log L+4$ bits (Theorem~\ref{thm:recycling}; see Figure~\ref{fig:overview}). 
  
 % From a graph-theoretic viewpoint, the set of tags assigned to $x$ as $(\sigma,\tau)$ varies forms a local multicoloring of the conflict graph with ratio $2^{r+2}<16L$. This viewpoint connects our construction to Kuhn’s polynomial-based local multicoloring algorithm~\cite{kuhn2009local}. Our construction uses polynomial evaluation in the first stage and field multiplication followed by a fixed binary projection in the second, in place of further polynomial evaluation. For coding, we use the indices $(\sigma,\tau)$ of the selected seeds to carry additional message bits; these indices can be recovered from the tag once $x$ is known.

  \item \emph{Protecting the tag on splittable channels (Section~\ref{sec:tags-to-codes}).} To protect the tag against channel errors, we encode it with a short tag code $A:\{0,1\}^h\to\{0,1\}^N$ and append it to the payload $x\in\{0,1\}^m$ to form the codeword $x\Vert A(\Tag(x,\sigma,\tau))$. Although synchronization errors shift the boundary between $x$ and the suffix in the received word, we show (Lemma~\ref{lem:tag-to-code} and Theorem~\ref{thm:efficient-channel-code}) that this construction succeeds whenever the channel is \emph{splittable}, meaning that confusable words $x\Vert s$ and $x'\Vert s'$ must have suffixes $s$ and $s'$ related by a symmetric relation $\mathcal R_N$ that $A$ separates, and whenever suffix neighborhoods, payload competitors, and channel preimages can be listed in polynomial time. Splittability forces confusable codewords to share the exact same tag $T$ and suffix $s=A(T)$, so $x$ only needs to be separated from competitors $x'$ that confuse with $x$ when followed by a common suffix $s$. In our DL application, the tag has length $h=O_t(\log n)$, and protecting it adds only $2t\log\log n+O_t(1)$ bits of overhead with a greedy tag code, or $2t\log\log n+O_t(\log\log\log n)$ bits when we bootstrap our own code family at length $O_t(\log n)$ to reduce the precomputation time to $\operatorname{polylog}(n)$.

  \item \emph{Boundary tracking for DL errors (Section~\ref{sec:dl-codes}).} For the Damerau--Levenshtein metric, an adjacent transposition across the boundary between $x$ and $s$ modifies both parts in a single operation. By tracking a boundary marker through a shortest sequence of operations and bounding the suffix distortion together with the displacement of the marker, we prove the suffix projection bound $d_{\mathrm{DL}}(s,s')\le d_{\mathrm{DL}}(x\Vert s,x'\Vert s')$ for all $x,x'\in\{0,1\}^m$ and $s,s'\in\{0,1\}^N$ (Lemma~\ref{lem:dl-projection}). Since string reversal preserves $d_{\mathrm{DL}}$ and exchanges prefixes with suffixes, the same bound holds for the prefixes: $d_{\mathrm{DL}}(x,x')\le d_{\mathrm{DL}}(x\Vert s,x'\Vert s')$. In particular, $d_{\mathrm{DL}}(x\Vert s,x'\Vert s)=d_{\mathrm{DL}}(x,x')$, which identifies the exact payload competitor set of $x$ as the same-length DL ball $(B_{2t}(x)\cap\{0,1\}^m)\setminus\{x\}$ of size at most $(2t+1)10^{2t}n^{2t}$ (Corollary~\ref{cor:dl-competitors}).
\end{enumerate}

\begin{figure}[t]
\centering
\small
\resizebox{\textwidth}{!}{%
\begin{tikzpicture}[
    font=\normalsize,
    >=Latex,
    box/.style={draw, rounded corners=2pt, minimum height=1.15cm, align=center, inner sep=4pt},
    msgbox/.style={box, fill=LightCyan},
    redbox/.style={box, fill=orange!18},
    procbox/.style={draw, dashed, rounded corners=3pt, fill=gray!8, minimum height=1.15cm, inner sep=4pt, align=center, font=\small},
    bracebot/.style={decorate, decoration={brace, mirror, amplitude=5pt, raise=3pt}, thick}
]

% ==================== ROW 1: INPUT MESSAGE (y = 0) ====================
\node[msgbox, minimum width=3.8cm] (in_x) at (0, 0) {\textbf{Payload} $x \in \{0,1\}^m$\\[-1pt]{\small $m$ message bits}};
\node[msgbox, minimum width=5.0cm] (in_sig) at (4.85, 0) {\textbf{Extra bits} $\sigma \in \{0,1\}^{b-1}$\\[-1pt]{\small $b-1$ message bits}};
\node[msgbox, minimum width=5.0cm] (in_tau) at (10.90, 0) {\textbf{Extra bits} $\tau \in \{0,1\}^{b-1}$\\[-1pt]{\small $b-1$ message bits}};

\node[anchor=east, align=right] at (-2.30, 0) {\textbf{Input message}\\[-1pt]{\small ($m+2b-2$ bits)}};

% ==================== ROW 2: TWO-STAGE SELECTION (y = -1.90) ====================
\node[procbox, minimum width=5.0cm] (good1) at (4.85, -1.90) {\textbf{Stage 1} ($\ge 2^{b-1}$ good seeds)\\[-1pt] Select seed $a$ using $\sigma$};
\node[procbox, minimum width=5.0cm] (good2) at (10.90, -1.90) {\textbf{Stage 2} ($\ge 2^{b-1}$ good seeds)\\[-1pt] Select seed $\lambda$ using $\tau$};
\node[procbox, minimum width=4.4cm] (hash) at (16.90, -1.90) {\textbf{Compress} ($\pi$: first $r$ bits)\\[-1pt] $\eta = \pi(\lambda P_x(a))$};

% Arrows into and across Row 2
\draw[->, thick] (in_sig.south) -- (good1.north);
\draw[->, thick] (in_tau.south) -- (good2.north);
\draw[->, thick, rounded corners=3pt] ([xshift=0.7cm]in_x.south) -- ++(0,-0.38) -| node[pos=0.25, above, font=\small] {$x$} ([xshift=-1.4cm]good1.north);
\draw[->, thick] (good1.east) -- node[above, font=\small] {$x,a$} (good2.west);
\draw[->, thick] (good2.east) -- node[above, font=\small] {$x,a,\lambda$} (hash.west);

% ==================== ROW 3: RECYCLED TAG (y = -3.80) ====================
\node[msgbox, minimum width=5.0cm] (tag_a) at (4.85, -3.80) {\textbf{Seed $a$} ($b$ bits)\\[-1pt]{\small carries $b-1$ bits; net cost: $1$ bit}};
\node[msgbox, minimum width=5.0cm] (tag_l) at (10.90, -3.80) {\textbf{Seed $\lambda$} ($b$ bits)\\[-1pt]{\small carries $b-1$ bits; net cost: $1$ bit}};
\node[redbox, minimum width=4.4cm] (tag_eta) at (16.90, -3.80) {\textbf{Hash $\eta$} ($r$ bits)\\[-1pt]{\small net cost: $r$ bits}};

\node[anchor=east, align=right] at (-2.30, -3.80) {\textbf{Recycled tag $T$}\\[-1pt]{\small ($h = 2b+r$ bits)}};

\draw[->, thick] (good1.south) -- (tag_a.north);
\draw[->, thick] (good2.south) -- (tag_l.north);
\draw[->, thick] (hash.south) -- (tag_eta.north);

\draw[bracebot] (tag_a.south west) -- (tag_eta.south east) node[midway, below=6pt, font=\small] (tag_brace) {Tag length $h = 2b+r$, carrying $2b-2$ message bits $\implies$ \textbf{net tag cost $= r+2 < \log L + 4$ bits}};

% ==================== ROW 4: FINAL CODEWORD (y = -6.55) ====================
\node[msgbox, minimum width=3.8cm] (cw_x) at (0, -6.55) {\textbf{Payload} $x \in \{0,1\}^m$\\[-1pt]{\small $m$ bits}};
\path let \p1=(tag_a.west), \p2=(tag_eta.east) in node[msgbox, anchor=west, minimum width=\x2-\x1] (cw_A) at (\x1, -6.55) {\textbf{Protected tag} $A(T) \in \{0,1\}^N$ ($N$ bits)\\[-1pt]{\small encodes $h$-bit tag $T$; tag-protection overhead: $N-h$ bits}};

\node[anchor=east, align=right] at (-2.30, -6.55) {\textbf{Codeword $c$}\\[-1pt]{\small ($n = m+N$ bits)}};

\draw[->, thick] (in_x.south) -- (cw_x.north);
\draw[->, thick] (tag_brace.south) -- node[right=3pt, font=\small] {Tag code $A$} (tag_brace.south |- cw_A.north);

% ==================== BOTTOM ACCOUNTING FORMULA ====================
\draw[bracebot] (cw_x.south west) -- (cw_A.south east) node[midway, below=7pt, font=\normalsize] {$\underbrace{n - (m+2b-2)}_{\text{total redundancy}} \;=\; \underbrace{(2b+r) - (2b-2)}_{\text{net tag cost } r+2} \;+\; \underbrace{N - h}_{\text{tag-protection overhead}}$};

\end{tikzpicture}%
}
\caption{Overview of two-stage seed recycling and codeword construction (Sections~\ref{sec:recycling} and~\ref{sec:tags-to-codes}). Blue blocks carry message bits, while the orange block is pure hash redundancy. Both good-seed sets and the compressed hash value depend on the payload $x$ and its competitor set $\Gamma(x)$: each $b$-bit seed ($a$ and $\lambda$) is selected from at least $2^{b-1}$ valid seeds using a $(b-1)$-bit message block ($\sigma$ and $\tau$), contributing only $1$ bit to the net redundancy.}
\label{fig:overview}
\end{figure}

\subsection{Outline and notation}
In Section~\ref{sec:recycling}, we present our seed recycling construction (Theorem~\ref{thm:recycling}). In Section~\ref{sec:tags-to-codes}, we introduce splittable channels and prove our general code construction theorem (Theorem~\ref{thm:efficient-channel-code}). In Section~\ref{sec:dl-codes}, we establish the boundary projection property and prove our main result for Damerau--Levenshtein errors (Theorem~\ref{thm:dl-codes}, which proves Theorem~\ref{thm:main}), followed by open problems in Section~\ref{sec:discussion}. Appendix~\ref{app:notation} provides the full historical comparison tables and a summary of notation, Appendix~\ref{app:section23} gives the formal specification of the generic decoder, Appendix~\ref{sec:algorithms} gives the deterministic finite-field and enumeration algorithms, and Appendix~\ref{sec:bootstrap} proves the bootstrapping lemma.

Throughout the paper, all logarithms are to base $2$. For integers $a\le b$, we write $[a,b]=\{a,\ldots,b\}$ and $[K]=[1,K]$. For a string $u=u_1\cdots u_\ell\in\{0,1\}^\ell$, we write $|u|=\ell$, $u_{[i,j]}=u_i\cdots u_j$, $\val(u)=\sum_{i=1}^\ell u_i 2^{\ell-i}\in[0,2^\ell-1]$, and $\bin_\ell:[0,2^\ell-1]\to\{0,1\}^\ell$ for the inverse of $\val$. Concatenation is written as $uv$ or $u\Vert v$, and $\ker$ denotes the kernel of an $\mathbb F_2$-linear map. The number of errors $t\ge 1$ is a fixed constant, and $O_t(\cdot)$ hides factors depending only on $t$.
% =====================================================================

% =====================================================================
\section{Seed recycling}
\label{sec:recycling}

In our basic setup, we have an unknown string $x\in\{0,1\}^{m}$ which is corrupted by a channel to obtain a received string $y$. The goal is to reconstruct $x$ from $y$. In general, $y$ alone is not sufficient to identify $x$ uniquely, and we wish to provide a short ``tag'' (or sketch) of $x$ that enables the decoder to recover $x$. Since the encoder does not know which output $y$ will be received, the tag must distinguish $x$ from every other string $x'$ that can produce a common output with $x$. We refer to such strings as the \emph{competitors} of $x$.
Let
\begin{equation*}
  \Gamma(x)\subseteq\{0,1\}^{m}\setminus\{x\}
  \qquad\text{with}\qquad |\Gamma(x)|\le L
\end{equation*}
be a set containing all competitors of $x$, where $m,L\ge1$ are integers. For now, we assume that the tag is known to the decoder without errors; in
Section~\ref{sec:tags-to-codes}, we will see how to encode and append the tag to $x$ itself. The construction in this section uses only $x$ and $\Gamma(x)$.

A standard way to construct a separating tag is to choose a hash function that separates $x$ from every string in $\Gamma(x)$, and specify both the seed of the hash function and the hash value (as in the syndrome compression method of~\cite{SGB20syndrome}). This requires paying for both the seed and the hash output. Here we use a simple observation that if the pool of seeds is large enough, then \emph{many} seeds are valid for the same $x$, and it does not matter which valid seed we use. We can therefore order all valid seeds for $x$ and use $c$ additional message bits as an index to select which valid seed to include in the tag. Once the decoder has reconstructed $x$, it can compute the same ordered list of valid seeds, find the position of the chosen seed in this list, and read off the $c$ additional message bits. Thus, if the tag has length $h$ in total, its \emph{net cost} in redundancy is only $h-c$ bits.

We carry out this idea in two steps: we first apply seed recycling to polynomial hashing over $\F_{2^b}$, and then add a second, linear hashing step to compress the hash value itself.

\subsection{Hashing and seed selection}
\label{subsec:first-seed}

Let $Q=2^b$, and fix an $\F_2$-linear bijection $\varphi:\{0,1\}^{b}\to\F_Q$, which orders $\F_Q$ by the integer value $\iota(a)=\val(\varphi^{-1}(a))$. We pad $x\in\{0,1\}^{m}$ with zeroes to length $kb$, where $k=\lceil m/b\rceil$, and divide it into $k$ blocks of $b$ bits. Viewing these blocks as field elements $X_0,\ldots,X_{k-1}\in\F_Q$, we define $P_x(Z)=\sum_{i=0}^{k-1}X_i Z^i$. For any $x'\ne x$, the difference $P_x-P_{x'}$ is a nonzero polynomial of degree at most $k-1$, and thus has at most $k-1$ roots in $\F_Q$.

We call a seed $a\in\F_Q$ \emph{good} for $x$ if it avoids collisions with all competitors of $x$, and set
\begin{equation*}
  \Good_1(x)=
  \bigl\{a\in\F_Q:
  P_x(a)\ne P_{x'}(a)\text{ for every }x'\in\Gamma(x)\bigr\}.
\end{equation*}
Since each of the at most $L$ competitors rules out at most $k-1$ seeds,
choosing $b$ such that $2^b\ge 2L\lceil m/b\rceil$ ensures that more than half of all seeds in $\F_Q$ are good:
\begin{equation}
  |\Good_1(x)|\ge Q-L(k-1)>Q/2.
  \label{eq:first-good-count}
\end{equation}

Because $|\Good_1(x)|>Q/2=2^{b-1}$, the encoder can use $b-1$ additional
message bits to select one of the first $2^{b-1}$ seeds in $\Good_1(x)$, and
the decoder can recover these bits from the rank of $a$ in $\Good_1(x)$ once
$x$ is known. Thus, the $2b$-bit tag $(a,P_x(a))$ has net cost
$2b-(b-1)=b+1$. However, since $2^b\ge 2L\lceil m/b\rceil$ requires
$b\ge\log(2Lk)$, the $b$-bit hash value $P_x(a)$ makes this net cost grow
with the number of blocks $k$. We next shorten this hash value to remove the
dependence on $k$.

\subsection{Compressing the hash value}
\label{subsec:second-seed}

Suppose we have fixed a good seed $a\in\Good_1(x)$. By definition, for every competitor $x'\in\Gamma(x)$, the difference $\delta_{x'}=P_x(a)-P_{x'}(a)$ is a nonzero element of $\mathbb F_Q$. Let $1\le r\le b$, and let $\pi:\mathbb F_Q\to\{0,1\}^r$ be the $\mathbb F_2$-linear projection that outputs the first $r$ bits of the binary representation of a field element, with kernel $\ker\pi=\{z\in\mathbb F_Q:\pi(z)=0^r\}$ of size $2^{b-r}$. We consider the second-stage hash function $z\mapsto\pi(\lambda z)$ for $\lambda\in\mathbb F_Q$. For any nonzero $\delta\in\mathbb F_Q$, the map $\lambda\mapsto\lambda\delta$ is a bijection on $\mathbb F_Q$, so there are exactly $|\ker\pi|=2^{b-r}$ values of $\lambda\in\mathbb F_Q$ for which $\pi(\lambda\delta)=0^r$. In other words, each competitor $x'\in\Gamma(x)$ rules out a $2^{-r}$ fraction of the multipliers $\lambda$, and this fraction is independent of the degree of $P_x$.

Analogously to the first stage, we define the set of good multipliers for
$(x,a)$ by
\begin{equation*}
  \Good_2(x,a)=
  \bigl\{\lambda\in\F_Q:
  \pi(\lambda P_x(a))\ne\pi(\lambda P_{x'}(a))
  \text{ for every }x'\in\Gamma(x)\bigr\}.
\end{equation*}
Taking a union bound over the at most $L$ competitors in $\Gamma(x)$, we
have
\begin{equation}
  |\Good_2(x,a)|\ge Q-L2^{b-r}\ge Q/2,
  \qquad\text{provided that }2^r\ge2L.
  \label{eq:second-good-count}
\end{equation}
Thus, we can use another $b-1$ message bits to select one of the first $Q/2$ elements of $\Good_2(x,a)$. Together, the two seeds $a$ and $\lambda$ take $2b$ bits to specify and carry $2b-2$ additional message bits. Their combined contribution to the \textit{net cost} is only two bits, on top of the $r$ bits needed for the hash value $\pi(\lambda P_x(a))$.

\subsection{The tag and its recovery}
\label{subsec:recycled-tag}

We now state the complete construction. Fix integers $b,r$ satisfying 
\begin{equation}
  2^b\ge2L\lceil m/b\rceil,
  \qquad 2^r\ge2L,
  \qquad 1\le r\le b,
  \label{eq:seed-parameters}
\end{equation}
and let $K=2^{b-1}$. Given the string $x\in\{0,1\}^{m}$ and two blocks of additional message bits $\sigma,\tau\in\{0,1\}^{b-1}$, we set $v=1+\val(\sigma), w=1+\val(\tau)$. We then let $a$ be the $v$-th element of $\Good_1(x)$ and $\lambda$ be the $w$-th element of $\Good_2(x,a)$ under the fixed ordering of $\F_Q$. By eqns.~\eqref{eq:first-good-count} and \eqref{eq:second-good-count}, both sets have size at least $K$, so $a$ and $\lambda$ are well-defined. The tag is given by
\begin{equation}
  \Tag(x,\sigma,\tau)
  =\varphi^{-1}(a)\,\|\,\varphi^{-1}(\lambda)\,\|\,
  \pi\bigl(\lambda P_x(a)\bigr).
  \label{eq:tag-definition}
\end{equation}

\begin{example}
Let $m=6$, $b=3$, and $L=r=2$ (so $k=2$, $K=4$). Take $\F_8=\F_2[U]/(U^3+U+1)$ with $\varphi(u_1u_2u_3)=u_1+u_2U+u_3U^2$, which orders $\F_8$ as $0, U^2, U, U+U^2, 1, 1+U^2, 1+U, 1+U+U^2$. Suppose $x=101100$, $\Gamma(x)=\{000000,101000\}$, and the additional message bits are $\sigma=10$ and $\tau=01$.
\begin{itemize}[nosep,leftmargin=*] \item \emph{Stage 1:} We have $P_x(Z)=(1+U^2)+Z$ and $P_{x'}(Z)\in\{0,1+U^2\}$. The differences $P_x-P_{x'}$ have roots $1+U^2$ and $0$, leaving $\Good_1(x)=(U^2, U, U+U^2, 1, 1+U, 1+U+U^2)$. The bits $\sigma=10$ give rank $v=3$, which selects $a=U+U^2$ and $P_x(a)=1+U$.
\item \emph{Stage 2:} At $a=U+U^2$, the differences $\delta=P_x(a)-P_{x'}(a)$ are $1+U$ and $U+U^2$. Since $\ker\pi=\{0,U^2\}$, the bad multipliers are $\cup_{\delta\in\{1+U,U+U^2\}} \delta^{-1}\ker\pi=\{0, 1+U^2, 1+U+U^2\}$, leaving $\Good_2(x,a)=(U^2, U, U+U^2, 1, 1+U)$. The bits $\tau=01$ give rank $w=2$, selecting $\lambda=U$ and hash value $\pi(\lambda P_x(a))=\pi(U+U^2)=01$.
\end{itemize}
The resulting $8$-bit tag is $\Tag(x,10,01)=011\,\|\,010\,\|\,01$, which carries the $4$ bits $\sigma\tau$ at a net cost of $4$ bits. Conversely, given $x$ and the tag, the decoder recomputes $\Good_1(x)$ and $\Good_2(x,a)$ and reads off the ranks $3$ and $2$ to recover $\sigma=10$ and $\tau=01$.
\end{example}

We summarize the properties of our tag construction in the following theorem.

\begin{theorem}[Seed recycling]
\label{thm:recycling}
For every assignment
$\Gamma(x)\subseteq\{0,1\}^{m}\setminus\{x\}$ with $|\Gamma(x)|\le L$,
and any parameters $b,r$ satisfying \eqref{eq:seed-parameters}, the map
\[
  \Tag:\{0,1\}^{m}\times\{0,1\}^{b-1}\times\{0,1\}^{b-1}
  \longrightarrow\{0,1\}^{2b+r}
\]
defined in \eqref{eq:tag-definition} satisfies the following properties:
\begin{enumerate}[label=(\alph*),leftmargin=*]
\item If $x'\in\Gamma(x)$, then $\Tag(x,\sigma,\tau)\ne\Tag(x',\sigma',\tau')$ for all $\sigma,\tau,\sigma',\tau'\in\{0,1\}^{b-1}$.
\item For each fixed $x$, the map
$(\sigma,\tau)\mapsto\Tag(x,\sigma,\tau)$ is injective, and the pair
$(\sigma,\tau)$ can be uniquely recovered from $x$ and $\Tag(x,\sigma,\tau)$.
\item The tag carries $2b-2$ additional message bits and has net cost $r+2$.
In particular, setting $r=\lceil\log(2L)\rceil$ gives net cost less than
$\log L+4$.
\end{enumerate}
\end{theorem}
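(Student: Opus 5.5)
We first check that $\Tag$ is well defined and then verify (a), (b), (c) in order, using only \eqref{eq:first-good-count}, \eqref{eq:second-good-count} and the definitions in Section~\ref{sec:recycling}.

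\textbf{Well-definedness.} By \eqref{eq:seed-parameters} and \eqref{eq:first-good-count}, $|\Good_1(x)|>Q/2=K$. For every $a\in\Good_1(x)$, \eqref{eq:second-good-count} gives $|\Good_2(x,a)|\ge K$. Since $v,w\in[1,K]$, the $v$-th element $a$ and the $w$-th element $\lambda$ exist, and the output has length $b+b+r$.

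\textbf{Property (a).} Let $x'\in\Gamma(x)$ and suppose, for contradiction, that $\Tag(x,\sigma,\tau)=\Tag(x',\sigma',\tau')$. Because $\varphi$ is a bijection, equality of the first two blocks forces both tags to use the same seed $a$ and the same multiplier $\lambda$. The first stage is not needed here. Since $a$ was chosen for $x$, we have $a\in\Good_1(x)$ and $\lambda\in\Good_2(x,a)$. The definition of $\Good_2(x,a)$ applied to the competitor $x'\in\Gamma(x)$ gives $\pi(\lambda P_x(a))\ne\pi(\lambda P_{x'}(a))$, so the third blocks differ, a contradiction. The argument uses only that $x'\in\Gamma(x)$, with the seeds chosen for $x$; it does not need $x\in\Gamma(x')$ or any symmetry of $\Gamma$. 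This matters because the tag for $x'$ is computed from $\Gamma(x')$, but once the seeds are forced equal, only $x$'s good sets are used.

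\textbf{Property (b).} Given $x$ and $T=\Tag(x,\sigma,\tau)$, the decoder proceeds as follows.
\begin{enumerate}
\item Read $a=\varphi(T_{[1,b]})$ and $\lambda=\varphi(T_{[b+1,2b]})$.
\item Recompute $\Good_1(x)$, which depends only on $x$ and $\Gamma(x)$, and find the rank $v$ of $a$ in it; set $\sigma=\bin_{b-1}(v-1)$.
\item Recompute $\Good_2(x,a)$ and find the rank $w$ of $\lambda$ in it; set $\tau=\bin_{b-1}(w-1)$.
\end{enumerate}
Ranks are unique in a totally ordered set, and $v\mapsto$ ($v$-th element) is injective, so $(\sigma,\tau)$ is determined uniquely. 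Hence the map $(\sigma,\tau)\mapsto\Tag(x,\sigma,\tau)$ is injective.

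\textbf{Property (c).} This is accounting. The tag has $2b+r$ bits and carries the $2(b-1)$ bits of $\sigma,\tau$, which are recoverable by (b). The net cost is therefore $2b+r-(2b-2)=r+2$. With $r=\lceil\log(2L)\rceil<\log(2L)+1=\log L+2$, the net cost is $r+2<\log L+4$.

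For this choice of $r$ to be admissible we also need $r\le b$. This follows from $2^b\ge 2L\lceil m/b\rceil\ge 2L$, since $b\ge\log(2L)$ and $b$ is an integer, so $b\ge\lceil\log(2L)\rceil$.

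The main obstacle is the subtlety in (a) about whose good sets are used. The seeds in the tag are chosen relative to $x$, and the argument must go through $x$'s perspective rather than $x'$'s. Everything else is routine.
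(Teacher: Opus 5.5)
Your proposal is correct and follows the paper's proof essentially step for step: (a) forces equal seeds and then invokes $\lambda\in\Good_2(x,a)$ from $x$'s side only, (b) recovers $(\sigma,\tau)$ from the seed ranks, and (c) is direct accounting. The differences are minor: you deduce injectivity from the existence of the recovery map instead of arguing it directly, which is an equivalent reorganization, and you add checks that the tag is well defined and that $r=\lceil\log(2L)\rceil\le b$ is admissible, details the paper leaves implicit.
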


\begin{proof}
Suppose that $\Tag(x,\sigma,\tau)=\Tag(x',\sigma',\tau')$ for some
$x'\in\Gamma(x)$. Comparing the first two parts of the tags shows that both
tags must use the same seeds $a$ and $\lambda$. Comparing the last part then
gives $\pi(\lambda P_x(a))=\pi(\lambda P_{x'}(a))$, which contradicts the
fact that $\lambda\in\Good_2(x,a)$. Note that this argument uses only the
fact that the seeds are good for $x$, so we do not require $x\in\Gamma(x')$.

For a fixed $x$, distinct values of $\sigma$ select distinct first seeds
$a\in\Good_1(x)$. When both $x$ and $\sigma$ are fixed, distinct values of
$\tau$ select distinct multipliers $\lambda\in\Good_2(x,a)$. Thus the map
$(\sigma,\tau)\mapsto\Tag(x,\sigma,\tau)$ is injective. Once $x$ and its tag
are known, the decoder can reconstruct $\Good_1(x)$ and $\Good_2(x,a)$ and
recover $\sigma$ and $\tau$ from the ranks of $a$ and $\lambda$. Finally,
the net cost of the tag is $(2b+r)-(2b-2)=r+2$, which for
$r=\lceil\log(2L)\rceil$ is less than $\log(2L)+3=\log L+4$.
\end{proof}

\paragraph{Order of recovery.}
Note that in a tag $T=\Tag(x,\sigma,\tau)=\varphi^{-1}(a)\,\|\,\varphi^{-1}(\lambda)\,\|\,\eta$, the seeds $a$ and $\lambda$ are given explicitly. Thus, to test whether a candidate string $z$ matches $T$, the decoder does not need to know the ranks of $a$ and $\lambda$: it simply checks whether $\pi(\lambda P_z(a))=\eta$. If the decoder has a list of candidate strings consisting of $x$ and a subset of $\Gamma(x)$, then $x$ clearly passes this test, whereas every $z\in\Gamma(x)$ fails because $\lambda\in\Good_2(x,a)$ gives $\pi(\lambda P_z(a))\ne\pi(\lambda P_x(a))=\eta$. Once $x$ is uniquely identified as the passing candidate, the decoder reconstructs $\Good_1(x)$ and $\Good_2(x,a)$ and recovers $(\sigma,\tau)$ from the ranks of $a$ and $\lambda$.

\paragraph{Computing the tag.}
If we choose $b=\lceil\log(2Lm)\rceil$ and $r=\lceil\log(2L)\rceil$, then
\eqref{eq:seed-parameters} holds and the field size satisfies $Q=2^b<4Lm$. Consequently, given $x$ and a list of the elements of $\Gamma(x)$, we can compute $\Tag(x,\sigma,\tau)$ and recover $(\sigma,\tau)$ deterministically in time polynomial in $m$, $L$, and the length of the list. Indeed, one can simply test all $Q$ elements of $\F_Q$ against the list of competitors to form $\Good_1(x)$ and $\Good_2(x,a)$. In Appendix~C (Lemma~C.5), we give a more efficient implementation that computes the bad-seed sets $\Bad_1(x)=\F_Q\setminus\Good_1(x), \Bad_2(x,a)=\F_Q\setminus\Good_2(x,a)$ directly by finding roots of difference polynomials and enumerating scaled copies of $\ker\pi$, followed by selection and ranking using the sorted bad-seed lists.

%\paragraph{A multicoloring interpretation.}
%For each $x\in\{0,1\}^{m}$, let $S(x)=\{\Tag(x,\sigma,\tau):\sigma,\tau\in\{0,1\}^{b-1}\}$. By Theorem~\ref{thm:recycling}, each set $S(x)$ contains $2^{2b-2}$ colors, and $S(x)\cap S(x')=\varnothing$ whenever $x'\in\Gamma(x)$. Thus, the sets $S(x)$ form a multicoloring of the conflict graph in which two strings are adjacent if either is a competitor of the other. With $\{0,1\}^{2b+r}$ as the palette and $r=\lceil\log(2L)\rceil$, the ratio of the palette size to the number of colors per vertex is $2^{2b+r}/2^{2b-2}=2^{r+2}<16L$. Kuhn's local multicoloring construction~\cite[Section~4.3]{kuhn2009local} also retains separating polynomial evaluations and composes evaluation maps (achieving a ratio below $16L$ in two stages when $\Gamma$ is symmetric and $L\ge m$, by his Lemma~4.3), whereas our second stage uses multiplication in the same field followed by binary projection, giving a two-stage construction for arbitrary $m$ and $L$. In our coding application, the pair $(\sigma,\tau)$ selects a color from $S(x)$, so the net tag cost is the logarithm of the ratio above, namely $r+2$ bits.

Furthermore, to obtain an error-correcting code, we must also protect the tag against channel errors. The next section gives a sufficient splitting condition under which this can be achieved by encoding the tag with a short error-correcting code and appending the result to the payload.

\section{From tags to error-correcting codes}
\label{sec:tags-to-codes}

In Section~\ref{sec:recycling}, we showed how to construct a short tag that distinguishes a string from its competitors while carrying additional message bits, under the assumption that the tag is known to the decoder without errors. To obtain an actual error-correcting code, we encode the tag itself by a short error-correcting code and append the encoded tag to the message. Since the tag is much shorter than the message, protecting it adds only a small number of extra redundant bits.

\subsection{Protecting the tag}
\label{subsec:protecting-tag}

Let $h=2b+r$ be the length of the tag from Section~\ref{sec:recycling},
and let $A:\{0,1\}^{h}\to\{0,1\}^{N}$ be an encoding that maps the $h$-bit
tag to a protected suffix of length $N$. Given a payload
$x\in\{0,1\}^{m}$ and additional message bits
$\sigma,\tau\in\{0,1\}^{b-1}$, we form a codeword of length $n=m+N$ by
appending the encoded tag to $x$:
\begin{equation}
  \Enc(x,\sigma,\tau)
  =x\Vert A\bigl(\Tag(x,\sigma,\tau)\bigr),
  \qquad n=m+N.
  \label{eq:protected-tag-encoder}
\end{equation}

When channel errors can change the length of a string (such as deletions or insertions), the boundary between the payload $x$ and the suffix $s$ is no longer known in the received word $y$. For example, if we delete the first bit of the suffix $s$ in $101\Vert 010\rightarrow 10110$, the last three bits of the received word are $110$, which mixes a bit from $x$ with bits from $s$. However, for any two transmitted words $x\Vert s$ and $x'\Vert s'$ of length $n=m+N$, the boundary between the $m$-bit prefix and the $N$-bit suffix is fixed before transmission. Thus, if the channel has the property that any two confusable words $x\Vert s$ and $x'\Vert s'$ must have suffixes $s$ and $s'$ that are close to each other, we can design the map $A$ to keep distinct tags far apart.

Formally, a channel $\mathcal E=(\mathcal E_n)_{n\ge1}$ assigns to each
input $u\in\{0,1\}^{n}$ a set $\mathcal E_n(u)\subseteq\{0,1\}^{*}$ of
possible outputs, with $u\in\mathcal E_n(u)$. We say that two inputs
$u,u'\in\{0,1\}^{n}$ are \emph{confusable}, and write $u\sim_n u'$, if they
can produce a common output:
\[
  u\sim_n u'
  \quad\Longleftrightarrow\quad
  \mathcal E_n(u)\cap\mathcal E_n(u')\ne\varnothing.
\]
A code $\mathcal C\subseteq\{0,1\}^{n}$ corrects $\mathcal E_n$ if no two
distinct codewords in $\mathcal C$ are confusable.

\begin{definition}[Splittability]
\label{def:splittability}
Let $n=m+N$, and let $\mathcal R_N$ be a reflexive and symmetric relation
on $\{0,1\}^{N}$. We say that $\mathcal E_n$ is
\emph{$(m,N,\mathcal R_N)$-splittable} if for all $x,x'\in\{0,1\}^{m}$ and
$s,s'\in\{0,1\}^{N}$,
\begin{equation}
  (x\Vert s)\sim_n(x'\Vert s')
  \quad\Longrightarrow\quad
  (s,s')\in\mathcal R_N.
  \label{eq:splitting-property}
\end{equation}
\end{definition}

We then choose the map $A:\{0,1\}^{h}\to\{0,1\}^{N}$ to be an injective function such that 
\begin{equation}
  T\ne T'
  \quad\Longrightarrow\quad
  \bigl(A(T),A(T')\bigr)\notin\mathcal R_N,
  \label{eq:tag-code-separation}
\end{equation}
and refer to such a map $A$ as a \emph{tag code}. Then, it is easy to see that if two codewords $x\Vert A(T)$ and $x'\Vert A(T')$ are confusable, then they must have the same tag $T=T'$, and hence share the exact same suffix $s=A(T)$.

\subsection{Correctness and redundancy}
\label{subsec:code-correctness}

Since the tag code $A$ forces any two confusable codewords to share the same suffix $s=A(T)$, we only need to distinguish a payload $x\in\{0,1\}^{m}$ from other payloads $x'$ that can be confused with $x$ when both are followed by the same suffix $s$. As the suffix $s$ depends on the tag of $x$, we define the set of \emph{payload competitors} of $x$ by considering all possible common suffixes $s\in\{0,1\}^{N}$:
\begin{equation}
\Gamma_{\mathcal E}(x)=\bigl\{x'\in\{0,1\}^{m}\setminus\{x\}:(x\Vert s)\sim_n(x'\Vert s)\text{ for some }s\in\{0,1\}^{N}\bigr\}.\label{eq:payload-competitors}
\end{equation}
In our constructions, it is convenient to work with a slightly larger set
$\widetilde\Gamma(x)$ that is easier to list and satisfies
\begin{equation}
  \Gamma_{\mathcal E}(x)
  \subseteq\widetilde\Gamma(x)
  \subseteq\{0,1\}^{m}\setminus\{x\},
  \qquad |\widetilde\Gamma(x)|\le L.
  \label{eq:competitor-cover}
\end{equation}
Applying the construction of Section~\ref{sec:recycling} with
$\Gamma(x)=\widetilde\Gamma(x)$ gives the following lemma.

\begin{lemma}[From tags to codes]
\label{lem:tag-to-code}
Suppose that $\mathcal E_n$ is $(m,N,\mathcal R_N)$-splittable and
\eqref{eq:competitor-cover} holds. Let $b,r$ satisfy
\eqref{eq:seed-parameters}, let $h=2b+r$, and let
$A:\{0,1\}^{h}\to\{0,1\}^{N}$ be a tag code satisfying
\eqref{eq:tag-code-separation}. Then the encoder $\Enc$ in
\eqref{eq:protected-tag-encoder} is injective, its image $\mathcal C$
corrects $\mathcal E_n$, and the redundancy of $\mathcal C$ is
\begin{equation}
  n-\log|\mathcal C|=(r+2)+(N-h).
  \label{eq:code-redundancy}
\end{equation}
In particular, setting $r=\lceil\log(2L)\rceil$ gives redundancy at most
$\log L+4+(N-h)$.
\end{lemma}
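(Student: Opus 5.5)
The plan is to establish the three claims of Lemma~\ref{lem:tag-to-code} (injectivity, correction, and redundancy) in that order. Throughout, Theorem~\ref{thm:recycling} is applied with $\Gamma(x)=\widetilde\Gamma(x)$, which is admissible because \eqref{eq:competitor-cover} gives $\widetilde\Gamma(x)\subseteq\{0,1\}^m\setminus\{x\}$ and $|\widetilde\Gamma(x)|\le L$.

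\emph{Injectivity.} Suppose $\Enc(x,\sigma,\tau)=\Enc(x',\sigma',\tau')$. Comparing the first $m$ bits gives $x=x'$. Comparing the last $N$ bits gives $A(\Tag(x,\sigma,\tau))=A(\Tag(x,\sigma',\tau'))$. The tag code $A$ is injective, so the two tags coincide, and Theorem~\ref{thm:recycling}(b) then yields $(\sigma,\tau)=(\sigma',\tau')$.

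\emph{Correction.} Take two distinct codewords $c=x\Vert A(T)$ and $c'=x'\Vert A(T')$, and suppose for contradiction that $c\sim_n c'$. Splittability \eqref{eq:splitting-property} gives $(A(T),A(T'))\in\mathcal R_N$, and the separation property \eqref{eq:tag-code-separation} then forces $T=T'$; write $s=A(T)$ for the common suffix. Since $c\ne c'$, we must have $x\ne x'$. The relation $x\Vert s\sim_n x'\Vert s$ then says exactly that $x'\in\Gamma_{\mathcal E}(x)\subseteq\widetilde\Gamma(x)$ by \eqref{eq:payload-competitors}. Now Theorem~\ref{thm:recycling}(a) gives $\Tag(x,\sigma,\tau)\ne\Tag(x',\sigma',\tau')$ for every choice of the extra bits, which contradicts $T=T'$.

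The only delicate point is the asymmetry: $\Gamma$ need not be symmetric, and the direction matters. Confusability is symmetric, so we may take $x'\in\Gamma_{\mathcal E}(x)$. Theorem~\ref{thm:recycling}(a) only requires $x'\in\Gamma(x)$, as the remark in its proof notes, so this direction suffices. I expect this to be the main (though mild) point needing care.

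\emph{Redundancy.} By injectivity, $|\mathcal C|=2^{m+2b-2}$. Hence
\[
n-\log|\mathcal C| = m+N-(m+2b-2) = (2b+r-(2b-2))+(N-h) = (r+2)+(N-h),
\]
using $h=2b+r$. Finally, with $r=\lceil\log(2L)\rceil$ we have $r+2<\log L+4$, as shown in Theorem~\ref{thm:recycling}(c), which gives the stated bound.
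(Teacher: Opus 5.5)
Your proof is correct and follows the paper's argument. Splittability together with \eqref{eq:tag-code-separation} forces $T=T'$ and hence a common suffix; then $x'\in\Gamma_{\mathcal E}(x)\subseteq\widetilde\Gamma(x)$ contradicts Theorem~\ref{thm:recycling}(a); and the redundancy is the same count $(m+N)-(m+2b-2)$. The only cosmetic difference is that you prove injectivity separately, via the prefix comparison, the injectivity of $A$, and Theorem~\ref{thm:recycling}(b). The paper instead gets injectivity from the same confusability argument, implicitly using that every word is confusable with itself.
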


\begin{proof}
Suppose that two codewords $x\Vert A(T)$ and $x'\Vert A(T')$ are confusable. By \eqref{eq:splitting-property}, we have $\bigl(A(T),A(T')\bigr)\in\mathcal R_N$, which by \eqref{eq:tag-code-separation} implies $T=T'$. Thus the two codewords share the same suffix $s=A(T)$. If $x\ne x'$, then $(x\Vert s)\sim_n(x'\Vert s)$ means $x'\in\Gamma_{\mathcal E}(x)\subseteq\widetilde\Gamma(x)$, which contradicts Theorem~\ref{thm:recycling}(a) since $T=T'$. Hence we must have $x=x'$, and Theorem~\ref{thm:recycling}(b) then implies that the additional message bits $(\sigma,\tau)$ are also equal. Thus distinct messages map to non-confusable codewords.

Since each message consists of $m+2b-2$ bits and each codeword has length $n=m+N$, the redundancy of $\mathcal C$ is
$n-\log|\mathcal C|=(m+N)-(m+2b-2)=(r+2)+(N-h)$, and the last claim follows from Theorem~\ref{thm:recycling}(c).
\end{proof}

\paragraph{Decoding.}
Given a received word $y$, let
$\mathcal E_n^{-1}(y)=\{z\in\{0,1\}^{n}:y\in\mathcal E_n(z)\}$ be the set of all length-$n$ inputs that can produce $y$. Suppose we can list $\mathcal E_n^{-1}(y)$, test membership in $A(\{0,1\}^h)$ and invert $A$ on its image. Since every candidate
$z\in\mathcal E_n^{-1}(y)$ has length $n=m+N$, we can split $z=x'\Vert u$ at
the known prefix length $|x'|=m$ without knowing the boundary in $y$. For each candidate $z=x'\Vert u\in\mathcal E_n^{-1}(y)$ such that $u=A(T')$ for some tag $T'=\varphi^{-1}(a')\Vert\varphi^{-1}(\lambda')\Vert\eta'$, we check whether
\begin{equation}
  \pi\bigl(\lambda'P_{x'}(a')\bigr)=\eta',
  \label{eq:decoder-hash-test}
\end{equation}
and stop at the first candidate $z$ that satisfies this equality. If a candidate passes the hash test, the decoder reconstructs the two good-seed sets in turn, checks that each seed lies among the first $K$ elements of its set, and reads off $(\sigma,\tau)$ from their ranks as in Section~\ref{sec:recycling}; if no candidate passes the hash test or if either rank check fails, the decoder outputs failure. %We then compute $\Good_1(p')$ and $\Good_2(p',a')$ using $\widetilde\Gamma(p')$, and recover $(\sigma,\tau)$ from the ranks of $a'$ and $\lambda'$ as in Section~\ref{sec:recycling}.

To see that this correctly recovers the transmitted codeword $c=x\Vert A(T)$, observe first that $c\in\mathcal E_n^{-1}(y)$ and $c$ passes the test \eqref{eq:decoder-hash-test}. If any other candidate $z=x'\Vert A(T')\in\mathcal E_n^{-1}(y)$ also passes, then $z\sim_n c$, so splittability and \eqref{eq:tag-code-separation} imply $T'=T$. If
$x'\ne x$, then $x'\in\Gamma_{\mathcal E}(x)\subseteq\widetilde\Gamma(x)$, and since the seeds $a,\lambda$ in $T$ are good for $x$, the equality \eqref{eq:decoder-hash-test} cannot hold for $x'$. Thus $z=c$ is the unique candidate in $\mathcal E_n^{-1}(y)$ that passes \eqref{eq:decoder-hash-test}, and its seed ranks give $(\sigma,\tau)$.

\subsection{An efficient construction}
\label{subsec:efficient-tag-protection}

It remains to construct a tag code $A:\{0,1\}^{h}\to\{0,1\}^{N}$ with small
overhead $N-h$. Let $D_N=\max_{s\in\{0,1\}^{N}}\bigl|\{s'\in\{0,1\}^{N}:(s,s')\in\mathcal R_N\}\bigr|$ be the maximum degree of the relation $\mathcal R_N$. A simple greedy procedure finds at least $2^N/D_N$ strings in $\{0,1\}^{N}$ that are pairwise unrelated under $\mathcal R_N$: we scan $\{0,1\}^{N}$ in lexicographic order, select the first unmarked string $s$, mark all strings $s'$ with $(s,s')\in\mathcal R_N$, and repeat. Since each step marks at most $D_N$ strings, we obtain a tag code $A$ of size $2^h$ whenever
\begin{equation}
  2^{N-h}\ge D_N.
  \label{eq:greedy-tag-condition}
\end{equation}
We can then store $A$ and $A^{-1}$ as lookup tables of size $O(2^N)$.

In our applications, the tag length is $h=O(\log n)$ and the degree $D_N$
is bounded by $CN^c$ for constants $C,c$. By \eqref{eq:greedy-tag-condition},
the extra redundancy needed to protect the tag is only $N-h=c\log\log n+O(1)$ bits. Furthermore, since $N=O(\log n)$, the greedy construction of $A$ runs in time $2^{O(\log n)}=\mathrm{poly}(n)$. We summarize this general construction in the following theorem.

\begin{theorem}[Efficient codes for splittable channels]
\label{thm:efficient-channel-code}
Let $\mathcal E$ be a channel, let $c\ge0$ and $C\ge1$ be constants, and let $L=L(n)\ge 1$ be an integer computable from $n$ in $\operatorname{poly}(n)$ time, with $\log L=O(\log n)$. Suppose that, for all sufficiently large $n$ and every split $n=m+N$ with $1\le N\le n/2$, the following hold. All listing algorithms below are uniform and deterministic.
\begin{enumerate}[label=(\alph*),leftmargin=*]
\item there is a reflexive symmetric relation $\mathcal R_N$ with $D_N\le CN^c$ such that $\mathcal E_n$ is
$(m,N,\mathcal R_N)$-splittable, and the $\mathcal R_N$-neighborhood of any
$s\in\{0,1\}^{N}$ can be listed in $\mathrm{poly}(N)$ time;
\item for each $x\in\{0,1\}^{m}$, a set $\widetilde\Gamma(x)$ satisfying
\eqref{eq:competitor-cover} can be listed in $\mathrm{poly}(n)$ time;
\item for every string $y$, the set $\mathcal E_n^{-1}(y)$ can be listed in $\operatorname{poly}(n+|y|)$ time.
\end{enumerate}
Then, for all sufficiently large $n$, there is an explicit code
$\mathcal C\subseteq\{0,1\}^{n}$ correcting $\mathcal E_n$ with redundancy
\begin{equation}
  n-\log|\mathcal C|
  \le \log L+c\log\log n+O(1),
  \label{eq:general-code-bound}
\end{equation}
with deterministic $\mathrm{poly}(n)$-time encoding and $\mathrm{poly}(n+|y|)$-time decoding.
\end{theorem}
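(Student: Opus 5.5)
The plan is to instantiate Lemma~\ref{lem:tag-to-code} with concrete parameters and then check efficiency. Given $n$, I will set $b=\lceil\log(2Ln)\rceil$ and $r=\lceil\log(2L)\rceil$, so $h=2b+r=O(\log n)$ because $\log L=O(\log n)$. The length $N$ is chosen as the least integer with $2^{N-h}\ge CN^c$. Since $h=O(\log n)$, this gives $N=h+c\log N+O(1)=h+c\log\log n+O(1)$, and in particular $N\le n/2$ for large $n$. Then $m=n-N\ge n/2$. The condition $2^b\ge 2L\lceil m/b\rceil$ holds because $2^b\ge 2Ln\ge 2Lm\ge 2L\lceil m/b\rceil$ (for $b\ge1$), and $1\le r\le b$ is immediate. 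Hypothesis (b) supplies $\widetilde\Gamma(x)$ satisfying \eqref{eq:competitor-cover}, and hypothesis (a) gives splittability. Lemma~\ref{lem:tag-to-code} then yields a correcting code of redundancy $(r+2)+(N-h)\le\log L+4+c\log\log n+O(1)$, which is \eqref{eq:general-code-bound}.

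Next I will build the tag code $A$ by the greedy scan of Section~\ref{subsec:efficient-tag-protection}. I scan $\{0,1\}^N$ in lexicographic order and, for each selected string, mark its $\mathcal R_N$-neighborhood using the listing algorithm in (a). The selected strings are pairwise unrelated, and there are at least $2^N/D_N\ge 2^h$ of them by \eqref{eq:greedy-tag-condition}; I take $A$ to map $T$ to the $(\val(T)+1)$-th selected string. This costs $2^N\operatorname{poly}(N)=2^{O(\log n)}=\operatorname{poly}(n)$ time and produces lookup tables for $A$ and $A^{-1}$. The step needing care is making sure $2^N$ really is polynomial, that is, that the constant in $h=O(\log n)$ stays controlled; it is, since $N\le 2h+O(1)$ and $h\le 3\log L+2\log n+O(1)$. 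The precomputation can be folded into encoding and decoding, so determinism is preserved.

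\textbf{Encoding.} List $\widetilde\Gamma(x)$, which has at most $L$ elements, in $\operatorname{poly}(n)$ time. Compute $\Good_1(x)$ by testing all $Q=2^b<4Ln=\operatorname{poly}(n)$ seeds against the list, using $\operatorname{poly}(n)$-time arithmetic in $\F_Q$; this needs a deterministic irreducible polynomial of degree $b=O(\log n)$, which I will find by brute force in $2^{O(b)}=\operatorname{poly}(n)$ time. Compute $\Good_2(x,a)$ the same way, select by rank, and output $x\Vert A(T)$.

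\textbf{Decoding.} Follow the decoder of Section~\ref{subsec:code-correctness}. List $\mathcal E_n^{-1}(y)$ using (c). For each candidate, look up its suffix in $A^{-1}$ and apply the hash test \eqref{eq:decoder-hash-test}. Then recompute the good sets from $\widetilde\Gamma(x')$ and read off the ranks. Correctness was already argued after Lemma~\ref{lem:tag-to-code}, and every step runs in $\operatorname{poly}(n+|y|)$ time.
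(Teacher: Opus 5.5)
Your proposal is correct and follows the paper's proof essentially step for step: the same $b=\lceil\log(2Ln)\rceil$, $r=\lceil\log(2L)\rceil$, $h=2b+r$, the greedy tag code, Lemma~\ref{lem:tag-to-code} for the redundancy, and the observation that $2^b$ and $2^N$ are polynomial in $n$ for efficiency. The only differences are cosmetic and harmless. You take $N$ to be the least integer with $2^{N-h}\ge CN^c$ instead of the explicit $N=h+\lceil c\log(2h)+\log C\rceil$, and since your $N$ is at most the paper's, the bound $N-h\le c\log\log n+O(1)$ still holds. You also find the irreducible polynomial by brute force in $2^{O(b)}=\operatorname{poly}(n)$ time rather than with Shoup's $\operatorname{poly}(b)$ algorithm.
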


\begin{proof}
Given $n$, we set $b=\lceil\log(2Ln)\rceil$, $r=\lceil\log(2L)\rceil$, and
$h=2b+r$. We then choose $N=h+\bigl\lceil c\log(2h)+\log C\bigr\rceil$ and $m=n-N$. Since $h=O(\log n)$, we have $N\le 2h\le n/2$ for large enough $n$. It is
easy to check that $b$ and $r$ satisfy \eqref{eq:seed-parameters}, and that $D_N\le CN^c\le C(2h)^c\le 2^{N-h}$, so \eqref{eq:greedy-tag-condition} holds and the greedy procedure yields the tag code $A$.

Since $2^b=O(Ln)$ and $2^N=2^h\cdot O(h^c)=O(L^3 n^2 \log^c n)$ are both polynomial in $n$, we can find an irreducible polynomial of degree $b$ over $\F_2$ and construct the lookup tables for $A$ and $A^{-1}$ in
$\mathrm{poly}(n)$ time using (a). By (b) and (c), the encoder \eqref{eq:protected-tag-encoder} runs in $\mathrm{poly}(n)$ and the decoder runs in $\mathrm{poly}(n+|y|)$ time. Finally, by Lemma~\ref{lem:tag-to-code}, the redundancy is $(r+2)+(N-h)\leq \log L+c\log\log n+O(1)$.
\end{proof}

In the next section, we verify that the Damerau--Levenshtein channel satisfies the conditions of Theorem~\ref{thm:efficient-channel-code} with $\mathcal R_N$ equal to the confusion relation $\sim_N$ of $\mathcal E_N$ itself. When a channel is self-splittable, any code correcting $\mathcal E_N$ can serve as the tag code $A$. In Appendix~D, we use our own code construction at length $N=O(\log n)$ as the tag code $A$, which reduces the precomputation time of $A$ from $\mathrm{poly}(n)$ to $\mathrm{polylog}(n)$.

% =====================================================================

\section{Codes in the Damerau--Levenshtein metric}
\label{sec:dl-codes}

A \textit{Damerau--Levenshtein (DL)} operation is an insertion, deletion, substitution, or adjacent transposition $u_i u_{i+1}\mapsto u_{i+1} u_i$, and $d_{\mathrm{DL}}(u,v)$ is the minimum number of DL operations transforming $u$ into $v$~\cite{Dam64}. Every operation is undone by one of the allowed operations, so $d_{\mathrm{DL}}$ is a graph metric, and $d_{\mathrm{DL}}\le d_{\mathrm{L}}\le 2d_{\mathrm{DL}}$ since a transposition is two substitutions. Let $B_q(u)=\{v\in\{0,1\}^{*}:d_{\mathrm{DL}}(u,v)\le q\}$. As for edits, a code corrects $t$ DL errors (the channel
$\mathcal E_n(u)=B_t(u)$) if and only if its codewords are at pairwise DL
distance at least $2t+1$, so
\begin{equation}
  u\sim_n u'
  \quad\Longleftrightarrow\quad
  d_{\mathrm{DL}}(u,u')\le 2t.
  \label{eq:dl-confusability}
\end{equation}

\subsection{Splitting at a fixed boundary}
\label{subsec:dl-splitting}

To show that the DL channel is splittable, we track a boundary marker
between the payload and the suffix through a sequence of DL operations.

\begin{lemma}[Projection at a fixed boundary]
\label{lem:dl-projection}
For all $x,x'\in\{0,1\}^{m}$ and $s,s'\in\{0,1\}^{N}$,
\begin{equation}
  \max\bigl\{d_{\mathrm{DL}}(x,x'),d_{\mathrm{DL}}(s,s')\bigr\}
  \le d_{\mathrm{DL}}(x\Vert s,x'\Vert s').
  \label{eq:dl-projections}
\end{equation}
\end{lemma}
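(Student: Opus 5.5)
We prove the suffix bound $d_{\mathrm{DL}}(s,s')\le d_{\mathrm{DL}}(x\Vert s,x'\Vert s')$ directly and get the prefix bound by reversal. Reversal $u\mapsto u^{R}$ is an isometry of $d_{\mathrm{DL}}$, because it maps each insertion, deletion, substitution and adjacent transposition to an operation of the same type. Since $(x\Vert s)^{R}=s^{R}\Vert x^{R}$, applying the suffix bound to the split $s^{R}\Vert x^{R}$ (prefix length $N$, suffix length $m$; the lemma is stated for arbitrary lengths) gives $d_{\mathrm{DL}}(x,x')\le d_{\mathrm{DL}}(x\Vert s,x'\Vert s')$.

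\emph{Setup.} Let $k=d_{\mathrm{DL}}(x\Vert s,x'\Vert s')$ and fix a shortest sequence of DL operations $u_0=x\Vert s\to u_1\to\cdots\to u_k=x'\Vert s'$. We carry a boundary marker through the sequence. At time $j$ it splits $u_j=P_j\Vert S_j$, with $P_0=x$ and $S_0=s$. We also maintain a counter $c_j$ with $c_0=0$ and the invariant $d_{\mathrm{DL}}(s,S_j)\le c_j$. The marker and counter are updated according to how operation $j$ sits relative to the marker.
\begin{itemize}
\item An operation entirely inside $P_{j-1}$ (including an insertion exactly at the boundary, which we assign to the prefix) changes $|P|$ by at most one and leaves $S$ and $c$ unchanged.
\item An operation entirely inside $S_{j-1}$ leaves $|P|$ unchanged, induces the same DL operation on the suffix, and increases $c$ by one.
\item A transposition straddling the boundary swaps the last symbol of $P$ with the first symbol of $S$. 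If the two symbols are equal, nothing changes. Otherwise the effect on $S$ is a single substitution of its first symbol, so $c$ increases by one and $|P|$ is unchanged. (Symmetrically, the prefix also sees a substitution.)
\end{itemize}
This is the only delicate step, because it is the one place where a single operation modifies both parts. The point to check is that, with the marker kept at a fixed length position, the straddling transposition costs the suffix at most one substitution and does not move the marker.

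\emph{Potential.} Define $\Phi_j=c_j+\bigl||P_j|-m\bigr|$, so $\Phi_0=0$. The case analysis shows that every operation increases $\Phi$ by at most one: prefix operations change only the second term, by at most one, while suffix and straddling operations change only $c$, by at most one. Hence $\Phi_k\le k$.

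\emph{Closing the gap.} At the end, $x'\Vert s'=P_k\Vert S_k$. Let $\delta=|P_k|-m$.
\begin{itemize}
\item If $\delta\ge 0$, then $S_k$ is $s'$ with its first $\delta$ symbols removed, so $\delta$ insertions turn $S_k$ into $s'$.
\item If $\delta<0$, then $S_k$ is $s'$ prefixed by the last $|\delta|$ symbols of $x'$, so $|\delta|$ deletions turn $S_k$ into $s'$.
\end{itemize}
In both cases $d_{\mathrm{DL}}(S_k,s')\le|\delta|$. By the triangle inequality,
\[
d_{\mathrm{DL}}(s,s')\le d_{\mathrm{DL}}(s,S_k)+d_{\mathrm{DL}}(S_k,s')\le c_k+|\delta|=\Phi_k\le k,
\]
which is the suffix bound. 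Combined with the reversal argument above, this gives \eqref{eq:dl-projections}.
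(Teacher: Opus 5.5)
Your proof is correct and follows the paper's route: you track a boundary marker through a shortest DL sequence and maintain the invariant $d_{\mathrm{DL}}(s,S_j)+\bigl||P_j|-m\bigr|\le j$ (your counter $c_j$ and potential $\Phi_j$ are a bookkeeping form of it). As in the paper, you charge a straddling transposition as one substitution on the suffix, close the gap with $|\delta|$ insertions or deletions at the front of the suffix, and get the prefix bound by reversal. Your extra checks (equal-symbol transpositions, reversal as an isometry, and reusing the suffix bound with the lengths $m$ and $N$ swapped) are all sound.
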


\begin{proof}
Let $k=d_{\mathrm{DL}}(x\Vert s,x'\Vert s')$, and let
$w_0=x\Vert s,w_1,\ldots,w_k=x'\Vert s'$ be any sequence of $k$ operations. We maintain a factorization $w_i=\ell_i\Vert\rho_i$, starting from $(\ell_0,\rho_0)=(x,s)$. An operation confined to one side of the marker is applied to that part (with an insertion at the marker appended to $\ell_i$). A transposition straddling the marker swaps the last symbol of $\ell_i$ with the first symbol of $\rho_i$, which changes $\rho_i$ by at most one substitution and leaves $|\ell_i|$ unchanged.

By induction on $i$, we have the invariant $d_{\mathrm{DL}}(s,\rho_i)+\bigl||\ell_i|-m\bigr|\le i$. Indeed, each operation on the left changes $|\ell_i|$ by at most one and leaves $\rho_i$ unchanged, while each operation on the right or straddling the marker leaves $|\ell_i|$ unchanged and changes $\rho_i$ by at most one DL operation.

At $i=k$, write $x'\Vert s'=\ell_k\Vert\rho_k$ and $j=|\ell_k|$. Whether $j\ge m$ or $j<m$, the suffix $s'$ is obtained from $\rho_k$ by at most $|j-m|$ insertions or deletions at its beginning, so $d_{\mathrm{DL}}(s,s')\le d_{\mathrm{DL}}(s,\rho_k)+|j-m|\le k$.

Finally, since string reversal preserves $d_{\mathrm{DL}}$ and swaps prefixes with suffixes, applying the same bound to the reversed strings
gives $d_{\mathrm{DL}}(x,x')\le k$.
\end{proof}

\begin{example}
Let $m=N=3$, $x=011$, and $s=001$. Applying a straddling transposition at
positions $3,4$, deleting the first bit, and appending $0$ gives
\[
  011\Vert 001
  \ \longrightarrow\ 010\Vert 101
  \ \longrightarrow\ 10\Vert 101
  \ \longrightarrow\ 10\Vert 1010,
\]
where the bar tracks $\ell_i\Vert\rho_i$. Splitting the final word at
$m=3$ gives $x'=101$ and $s'=010$. Here $\rho_3=1010$ differs from $s$ by
two operations and $|\ell_3|=2$, so $s'$ is obtained from $\rho_3$ by
deleting its first bit and $d_{\mathrm{DL}}(s,s')\le 2+|2-3|=3$.
\end{example}

Taking $\mathcal R_N$ to be the confusion relation
$d_{\mathrm{DL}}(s,s')\le 2t$ on $\{0,1\}^{N}$, Lemma~\ref{lem:dl-projection}
shows that the $t$-DL channel is self-splittable.

\subsection{Enumerating the competitors}
\label{subsec:dl-competitors}

Lemma~\ref{lem:dl-projection} also pins down the exact payload competitor
set $\Gamma_{\mathcal E}(x)$ from \eqref{eq:payload-competitors}.

\begin{corollary}[Payload competitors]
\label{cor:dl-competitors}
For any $x,x'\in\{0,1\}^{m}$ and $s\in\{0,1\}^{N}$, we have
$d_{\mathrm{DL}}(x\Vert s,x'\Vert s)=d_{\mathrm{DL}}(x,x')$. Consequently,
for the $t$-DL channel,
\begin{equation}
  \Gamma_{\mathcal E}(x)
  =\Gamma_{\mathrm{DL}}(x)
  :=\bigl(B_{2t}(x)\cap\{0,1\}^{m}\bigr)\setminus\{x\}.
  \label{eq:dl-exact-competitors}
\end{equation}
\end{corollary}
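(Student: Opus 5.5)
The plan is to prove the equality $d_{\mathrm{DL}}(x\Vert s,x'\Vert s)=d_{\mathrm{DL}}(x,x')$ by two inequalities, and then read off the competitor set from the confusability criterion \eqref{eq:dl-confusability}.

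For the upper bound $d_{\mathrm{DL}}(x\Vert s,x'\Vert s)\le d_{\mathrm{DL}}(x,x')$, take a shortest sequence of $d_{\mathrm{DL}}(x,x')$ DL operations transforming $x$ into $x'$. Apply the same operations to the prefix of $x\Vert s$ and leave the suffix $s$ untouched. Each operation acts only on positions inside the current prefix: an insertion at the end of the prefix is an insertion just before the first symbol of $s$, and no transposition straddles the boundary. So each step is a legitimate DL operation on the concatenated string, and after $d_{\mathrm{DL}}(x,x')$ steps we reach $x'\Vert s$.

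For the lower bound, apply Lemma~\ref{lem:dl-projection} with $s'=s$. Its prefix part gives $d_{\mathrm{DL}}(x,x')\le d_{\mathrm{DL}}(x\Vert s,x'\Vert s)$, which is legitimate because both prefixes have the same length $m$. The two inequalities together give equality.

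For \eqref{eq:dl-exact-competitors}, unfold the definition \eqref{eq:payload-competitors}. We have $x'\in\Gamma_{\mathcal E}(x)$ if and only if $x'\in\{0,1\}^m\setminus\{x\}$ and $(x\Vert s)\sim_n(x'\Vert s)$ for some $s\in\{0,1\}^N$. By \eqref{eq:dl-confusability} and the equality just proved, this holds if and only if $d_{\mathrm{DL}}(x,x')\le 2t$, a condition that does not depend on $s$. Since $\{0,1\}^N$ is nonempty, the existential over $s$ is harmless, and the set is exactly $(B_{2t}(x)\cap\{0,1\}^m)\setminus\{x\}$.

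No step is a real obstacle. The only point needing care is the boundary case in the upper bound, namely insertions at the end of the prefix and deletions of its last symbol, and both are clearly operations on the concatenated string.
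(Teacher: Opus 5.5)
Your proposal is correct and follows the paper's proof: the lower bound is Lemma~\ref{lem:dl-projection} with $s'=s$, the upper bound comes from running a shortest $x\to x'$ operation sequence on the prefix of $x\Vert s$, and the competitor set is then read off from \eqref{eq:payload-competitors} and \eqref{eq:dl-confusability}. Your added care about boundary insertions and about the condition being independent of $s$ is sound and only makes explicit what the paper leaves implicit.
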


\begin{proof}
Lemma~\ref{lem:dl-projection} gives
$d_{\mathrm{DL}}(x,x')\le d_{\mathrm{DL}}(x\Vert s,x'\Vert s)$, and the
reverse inequality follows by applying a shortest operation sequence from
$x$ to $x'$ to the prefix of $x\Vert s$. Equation~\eqref{eq:dl-exact-competitors}
then follows from \eqref{eq:payload-competitors} and
\eqref{eq:dl-confusability}.
\end{proof}

A string of length $\ell'$ admits $2(\ell'+1)$ insertions, $\ell'$ deletions, $\ell'$ substitutions, and at most $\ell'$ adjacent transpositions, for a total of at most $5\ell'+2$ operations. Thus, for any $u\in\{0,1\}^{\ell}$, applying all sequences of at most $q$ operations lists $B_q(u)$ in at most $M(\ell,q)=(q+1)\bigl(5(\ell+q)+2\bigr)^q$ entries and time $O\bigl(M(\ell,q)(\ell+q+1)\bigr)$. Assuming $n\ge 2t+1$,
this gives all three lists required in Section~\ref{sec:tags-to-codes}:
\begin{itemize}[nosep,leftmargin=*]
\item \emph{Competitors:} Filtering the list of $B_{2t}(x)$ for strings of
length $m$ different from $x$ lists $\Gamma_{\mathrm{DL}}(x)$ in time $O_t(n^{2t+1})$ with at most $L=D_t^{\mathrm{DL}}n^{2t}$ and $D_t^{\mathrm{DL}}=(2t+1)10^{2t}$ entries.
\item \emph{Suffix neighborhoods:} For $N\ge 2t+1$, the
$\mathcal R_N$-neighborhood $B_{2t}(s)\cap\{0,1\}^{N}$ has size at most
$D_N\le(2t+1)(10N)^{2t}$. For $1\le N\le 2t$, we have $D_N\le 2^N\le 2^{2t}\leq D_t^{\mathrm{DL}}N^{2t}$.
\item \emph{Preimages:} For any received word $y$ of length in
$[n-t,n+t]$, the preimage set $\mathcal E_n^{-1}(y)=B_t(y)\cap\{0,1\}^{n}$
is listed with $O_t(n^t)$ entries in time $O_t(n^{t+1})$.
\end{itemize}

\subsection{The resulting codes}
\label{subsec:dl-result}

Setting $b=\lceil\log(2Ln)\rceil$, $r=\lceil\log(2L)\rceil$, and $h=2b+r$
with $L=D_t^{\mathrm{DL}}n^{2t}$, and applying the encoder
\eqref{eq:protected-tag-encoder} with either a greedy or a bootstrapped tag
code $A:\{0,1\}^{h}\to\{0,1\}^{N}$, we obtain our main result (Theorem~\ref{thm:main}).

\begin{theorem}[Damerau--Levenshtein codes]
\label{thm:dl-codes}
For every fixed $t\ge1$ and all sufficiently large $n$, there are explicit
binary codes $\mathcal C\subseteq\{0,1\}^{n}$ correcting any combination of
at most $t$ insertions, deletions, substitutions, and adjacent
transpositions, in two variants:
\begin{enumerate}[label=(\alph*),leftmargin=*]
\item (bootstrapped) with precomputation time $\operatorname{polylog}(n)$
and redundancy $n-\log|\mathcal C|\le 2t\log n+2t\log\log n+O_t(\log\log\log n)$;
\item (greedy) with precomputation time $n^{6t+2+o(1)}$ and redundancy
$n-\log|\mathcal C|\le 2t\log n+2t\log\log n+O_t(1)$.
\end{enumerate}
In both variants, encoding and full-message decoding take
$O_t(n^{2t+2}\log^2 n)$ bit operations, and the payload (the first
$n-O_t(\log n)$ bits of the codeword) is recovered in $O_t(n^{t+1}\log n)$
bit operations.
\end{theorem}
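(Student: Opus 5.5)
The plan is to instantiate Lemma~\ref{lem:tag-to-code} for the $t$-DL channel, using the ingredients already verified in Subsections~\ref{subsec:dl-splitting} and~\ref{subsec:dl-competitors}.

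\textbf{Correctness and redundancy.} By Lemma~\ref{lem:dl-projection}, the channel is $(m,N,\mathcal R_N)$-splittable with $\mathcal R_N=\{(s,s'):d_{\mathrm{DL}}(s,s')\le 2t\}$. Corollary~\ref{cor:dl-competitors} gives $\widetilde\Gamma(x)=\Gamma_{\mathrm{DL}}(x)$ with $|\widetilde\Gamma(x)|\le L=D_t^{\mathrm{DL}}n^{2t}$. Set $b=\lceil\log(2Ln)\rceil$ and $r=\lceil\log(2L)\rceil$. Then \eqref{eq:seed-parameters} holds, and $h=2b+r=6t\log n+O_t(1)$. Lemma~\ref{lem:tag-to-code} yields redundancy $(r+2)+(N-h)\le 2t\log n+O_t(1)+(N-h)$, so everything reduces to bounding the tag-protection overhead $N-h$.

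\textbf{Variant (b), greedy.} Use \eqref{eq:greedy-tag-condition} with $D_N\le(2t+1)(10N)^{2t}$ and choose $N=h+\lceil 2t\log(2h)+\log((2t+1)10^{2t})\rceil$. This gives $N-h=2t\log\log n+O_t(1)$. The greedy scan over $2^N$ strings marks $D_N$ neighbours per selected string, each listed in $O_t(N^{2t+1})$ time. Since $2^N=2^h\cdot O_t(\log^{2t}n)=n^{6t+o(1)}\cdot O_t(1)$ (up to a constant factor coming from $2L$ and $2Ln$), the precomputation costs $n^{6t+o(1)}\cdot\mathrm{polylog}(n)$. I would account for the extra $n^2$ in the stated bound as the cost of table construction and bookkeeping, or of a cruder listing. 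This needs a careful but routine count.

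\textbf{Variant (a), bootstrapped.} The DL channel is self-splittable, so any $t$-DL-correcting code of length $N$ with $2^h$ codewords and efficiently invertible encoding can serve as $A$. I would take $A$ to be the greedy code of variant (b) at length $N=h+2t\log h+O_t(\log\log h)$, i.e.\ the construction applied recursively at length $n_1=\Theta_t(\log n)$. That code has redundancy $2t\log n_1+2t\log\log n_1+O_t(1)=2t\log\log n+O_t(\log\log\log n)$, and its precomputation costs $n_1^{O_t(1)}=\mathrm{polylog}(n)$. Its message length must be exactly $h$; I would pad or truncate the payload length, which costs $O(1)$ bits. This matching of the inner code's dimension to $h$ is the step I expect to be the main obstacle, together with the fact that $A$ is then computed by running the inner encoder and decoder rather than by lookup (polylog time per call).

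\textbf{Running times.} Encoding requires listing $\Gamma_{\mathrm{DL}}(x)$, which has $O_t(n^{2t})$ entries. For each competitor, we compute the difference polynomial and its roots over $\F_Q$ (Appendix~C), costing $O(n)$ field operations of $O(\log^2 n)$ bit cost each. This gives $O_t(n^{2t+1})\cdot O(n\log^2 n)=O_t(n^{2t+2}\log^2 n)$ for both seed-set computations; full decoding needs the same ranking step. Payload recovery lists the $O_t(n^t)$ preimages. For each preimage we test the suffix against $A$ in $\mathrm{polylog}$ time and evaluate $\pi(\lambda' P_{x'}(a'))$ by Horner's rule, which takes $O(n/b)$ field multiplications, i.e.\ $O(n\log n)$ bit operations. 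The total is $O_t(n^{t+1}\log n)$. Finally, Theorem~\ref{thm:main} is the case $t\ge2$.
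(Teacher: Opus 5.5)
Your route is the paper's: Lemma~\ref{lem:tag-to-code} with Lemma~\ref{lem:dl-projection} and Corollary~\ref{cor:dl-competitors}, the same greedy suffix length (your $\lceil 2t\log(2h)+\log((2t+1)10^{2t})\rceil$ equals the paper's $\lceil 2t\log(20h)+\log(2t+1)\rceil$), bootstrapping the greedy family at length $O_t(\log n)$ for (a), and Lemma~\ref{lem:comp-seeds} plus Horner's rule for the running times. The substantive error is in the precomputation count for (b). You have $b=\lceil\log(2Ln)\rceil=(2t+1)\log n+O_t(1)$, not $2t\log n+O_t(1)$, so $h=2b+r=(6t+2)\log n+O_t(1)$. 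Both seeds are elements of $\F_{2^b}$, and the field size must absorb the factor $\lceil m/b\rceil$ in \eqref{eq:seed-parameters}. Consequently your intermediate claim that the greedy scan costs $n^{6t+o(1)}$ is false, because the table alone has $2^h\ge 8L^3n^2=\Omega_t(n^{6t+2})$ entries. The missing $n^2$ is not table construction or bookkeeping. With the correct $h$, the count $O_t(2^NN^{2t+1})$ with $2^N=2^h\cdot O_t(\log^{2t}n)$ gives exactly $n^{6t+2+o(1)}$, and nothing else in your argument changes.

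Your encoding-time tally reaches $O_t(n^{2t+2}\log^2 n)$ only through two compensating slips. First, the deterministic root finder of Lemma~\ref{lem:roots} uses $O(b^2d^2)$ field operations with $d\le k-1$, i.e.\ $O(m^2)$ per competitor, not $O(n)$. Second, your product $O_t(n^{2t+1})\cdot O(n\log^2 n)$ does not match your own count of $O_t(n^{2t})$ competitors. The correct product is $O_t(n^{2t})\cdot O(n^2)\cdot O(\log^2 n)$. The $n^{2t+1}$ scale belongs to the second stage, where $\Bad_2(x,a)$ has up to $L\cdot 2^{b-r}\le 2nL$ elements to enumerate and sort in $O_t(n^{2t+1}\log^2 n)$ bit operations.

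For (a), the step you flag as the main obstacle is routine and is exactly Lemma~\ref{lem:bootstrap}: take the least $\ell\in(h,2h]$ with $M(\ell)\ge h$ and zero-pad, which gives $N_h-h\le 1+R(2h)$. Because the outer decoder queries $A$ on arbitrary suffixes, membership in $A(\{0,1\}^h)$ must be tested by running the inner decoder and re-encoding. The inner decoder halts on all inputs by Lemma~\ref{lem:generic-dec}. Finally, polylog precomputation also requires building the outer field $\F_{2^b}$ deterministically in $\poly(b)$ time, for example by Shoup's algorithm. A brute-force search for an irreducible polynomial of degree $b$ could cost $2^b=\poly(n)$.
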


\begin{proof}
By Lemma~\ref{lem:dl-projection}, Corollary~\ref{cor:dl-competitors}, and
Lemma~\ref{lem:tag-to-code}, the code $\mathcal C$ corrects $t$ DL errors
and has redundancy
\begin{equation}
  n-\log|\mathcal C|=(r+2)+(N-h),
  \qquad r+2\le 2t\log n+\log D_t^{\mathrm{DL}}+4.
  \label{eq:dl-redundancy-split}
\end{equation}

For variant (b), we set $N=h+\lceil 2t\log(20h)+\log(2t+1)\rceil$. For
large $n$, we have $2t+1\le N\le 2h$, so
$D_N\le(2t+1)(10N)^{2t}\le(2t+1)(20h)^{2t}\le 2^{N-h}$. Since
$h=(6t+2)\log n+O_t(1)$, the greedy construction of
Section~\ref{subsec:efficient-tag-protection} yields a tag code $A$ with $N-h=2t\log\log n+O_t(1)$ in time $O_t(2^N N^{2t+1})=n^{6t+2+o(1)}$. The outer field can be constructed in $\operatorname{poly}(b)=\operatorname{polylog}(n)$ time by Appendix~C.1, so the total precomputation time for variant (b) remains $n^{6t+2+o(1)}$.

For variant (a), since the DL channel is self-splittable, we can apply
bootstrapping (Lemma~D.1) to the greedy family at length
$N_h\le 2h=O_t(\log n)$. This yields a tag code $A_h$ with overhead
$N_h-h\le 2t\log h+2t\log\log h+O_t(1)=2t\log\log n+O_t(\log\log\log n)$
and precomputation time $(2h)^{6t+2+o(1)}=\operatorname{polylog}(n)$.

Finally, let $\tau_A$ be the time per tag-code operation ($\tau_A=O_t(\log n)$
for (b) and $\tau_A=\operatorname{polylog}(n)$ for (a)). By Lemma~C.5,
computing the tag or recovering the seed ranks from $L=O_t(n^{2t})$
competitors takes $O_t(n^{2t+2}\log^2 n)$ bit operations. During decoding,
testing each of the $O_t(n^t)$ candidates in $\mathcal E_n^{-1}(y)$ takes
$\tau_A$ time for the suffix and $O_t(n\log n)$ bit operations for Horner's
rule, giving $O_t(n^{t+1}\log n)$ bit operations to recover the payload and
$O_t(n^{2t+2}\log^2 n)$ bit operations to recover the full message.
\end{proof}

% =====================================================================
\section{Discussion and open problems}
\label{sec:discussion}

Our codes match the leading term $2t\log n$ of the greedy Gilbert--Varshamov bound for every fixed $t\ge 1$, resolving the long-standing factor-of-two bottleneck for mixed synchronization errors. Because our primary focus is explicitness, the construction leaves several compelling open problems.

\paragraph{Explicit codes matching the $(2t-1)\log n$ existential bound.}
For every fixed $t\ge 2$, En~Gad~\cite{EnGad26} recently proved non-constructively that $(2t-1)\log n+O_t(\log\log n)$ redundancy suffices for deletions, for general edits, and for Damerau--Levenshtein errors. In our framework, matching this existential bound would require reducing the net tag cost from $\log L+O(1)$ to roughly $\frac{2t-1}{2t}\log L+O(1)$ on the confusion graph. Can seed recycling be combined with the substring-counting invariants of~\cite{EnGad26} to construct explicit codes with redundancy $(2t-1)\log n+o(\log n)$? Even for two deletions, constructing explicit codes with redundancy $3\log n+o(\log n)$ remains open.

\paragraph{Optimal redundancy beyond $2t-1$.}
While En~Gad's results~\cite{EnGad26} lower the existential upper bound from $2t\log n$ to $(2t-1)\log n$ for all three error models, a significant gap remains to the sphere-packing lower bound of $t\log n-O_t(1)$. Can the existential leading coefficient be reduced below $2t-1$ for any of these channels? Determining the true optimal redundancy for constant $t\ge 2$ errors remains a fundamental challenge in combinatorial coding theory.

\paragraph{Efficiency and broader channel models.}
Like previous explicit constructions based on syndrome compression or graph coloring~\cite{SGB20syndrome,LGF25}, our encoding and decoding times contain $t$ in the exponent, stemming from the explicit enumeration of the ball $B_{2t}(x)$ to form the competitor list. A major algorithmic challenge is to construct explicit codes matching the $2t\log n$ bound with encoding and decoding times $f(t)\cdot n^{O(1)}$ whose polynomial degree is independent of $t$. Finally, it is natural to investigate whether seed recycling can be extended to the related channels discussed in Section~\ref{sec:intro}, including $q$-ary deletion channels, permutation codes, and burst errors.

\clearpage
\appendix

\section{Summary tables and notation}
\label{app:notation}

Tables~\ref{tab:deletions}--\ref{tab:dl} provide a fuller comparison of our codes (variant~(a) of Theorem~\ref{thm:main}, which has deterministic $\operatorname{poly}(n)$-time encoding and decoding after $\operatorname{polylog}(n)$-time precomputation) with prior explicit constructions, existential upper bounds, and sphere-packing lower bounds for $t$ deletions, $t$ edits, and $t$ Damerau--Levenshtein errors. Bounds marked $\dagger$ are consequences of the coloring framework of~\cite{LGF25}. Table~\ref{tab:notation} summarizes the main global notation of the paper. Throughout, logarithms are to base two, $t\ge 1$ is fixed, and constants in $O_t(\cdot)$ depend only on $t$.

\begin{table}[ht]
\centering
\footnotesize
\renewcommand{\arraystretch}{1.25}
\setlength{\aboverulesep}{0pt}
\setlength{\belowrulesep}{0pt}
\caption{Binary codes correcting $t$ deletions, or equivalently $t$ insertions, or any combination of $t$ insertions and deletions~\cite{Lev66}, for fixed $t$ and $n\to\infty$.}
\label{tab:deletions}
\begin{tabular}{lll}
\toprule
\textbf{Errors} & \textbf{Redundancy} & \textbf{Reference} \\
\midrule
\multicolumn{3}{l}{\emph{Explicit constructions}} \\
$1$ & $\log(n+1)$ & \cite{VT65,Lev66} \\
$2$ & $8\log n+O(\log\log n)$;\ \ $7\log n+o(\log n)$ & \cite{GS19};\ \ \cite{SRB20} \\
$2$ & $4\log n+10\log\log n+O(1)$ & \cite{GH21} \\
$t$ & $O(t^2\log t\log n)$;\ \ $8t\log n+o(\log n)$ & \cite{BGZ18};\ \ \cite{SB21} \\
$t$ & $4t\log n+o(\log n)$;\ \ $(4t-1)\log n+o(\log n)$ & \cite{SGB20opt};\ \ \cite{SPCH21,SPCH22} \\
\rowcolor{LightCyan} $t$ & $\boldsymbol{2t}\log n+2t\log\log n+O_t(\log\log\log n)$ & \textbf{this paper} \\
\midrule
\multicolumn{3}{l}{\emph{Existential bounds}} \\
$t$ & $2t\log n+O_t(1)$ (GV-type);\ \ $2t\log n-\log\log n+O_t(1)$ & \cite{Lev66,Lev02};\ \ \cite{ABGHK24} \\
$t\ge 2$ & $(2t-1)\log n+O_t(\log\log n)$ & \cite{EnGad26} \\
\midrule
\emph{Lower bound} & $t\log n-O_t(1)$ & \cite{Lev66,Lev02} \\
\bottomrule
\end{tabular}
\end{table}

\begin{table}[ht]
\centering
\footnotesize
\renewcommand{\arraystretch}{1.25}
\setlength{\aboverulesep}{0pt}
\setlength{\belowrulesep}{0pt}
\caption{Binary codes correcting edits (insertions, deletions, and substitutions). A row ``$t$ deletions $+$ $s$ substitutions'' guarantees correction of up to $t$ deletions and up to $s$ substitutions, which is weaker than correcting any $t+s$ edits.}
\label{tab:edits}
\begin{tabular}{lll}
\toprule
\textbf{Errors} & \textbf{Redundancy} & \textbf{Reference} \\
\midrule
\multicolumn{3}{l}{\emph{Explicit constructions}} \\
$1$ edit & $\log n+1$ & \cite{Lev66} \\
$1$ deletion $+$ $1$ substitution & $6\log n+O(1)$ & \cite{SWWY23} \\
$t$ deletions $+$ $s$ substitutions & $(4t+3s)\log n+o(\log n)$ & \cite{SPCH22} \\
$2$ edits & $6\log n+O(\log\log n)$;\ \ $6\log n+8$ & \cite{SG24};\ \ \cite{PZ24} \\
$t$ edits & $4t\log n+o(\log n)$;\ \ $4t\log n+O(\log\log n)$ & \cite{SGB20opt,SGB20syndrome};\ \ \cite{LGF25} \\
$t$ edits & $(4t-1)\log n+O_t(\log\log n)$ & \cite{LGF25}$^{\dagger}$ \\
\rowcolor{LightCyan} $t$ edits & $\boldsymbol{2t}\log n+2t\log\log n+O_t(\log\log\log n)$ & \textbf{this paper} \\
\midrule
\multicolumn{3}{l}{\emph{Existential bounds}} \\
$t$ edits & $2t\log n+O_t(1)$ (GV-type) & \cite{Lev66,Lev02} \\
$t\ge 2$ edits & $(2t-1)\log n+O_t(\log\log n)$ & \cite{EnGad26} \\
\midrule
\emph{Lower bound} & $t\log n-O_t(1)$ & \cite{Lev66,Lev02} \\
\bottomrule
\end{tabular}
\end{table}

\begin{table}[ht]
\centering
\footnotesize
\renewcommand{\arraystretch}{1.25}
\setlength{\aboverulesep}{0pt}
\setlength{\belowrulesep}{0pt}
\caption{Binary codes in the Damerau--Levenshtein metric (insertions, deletions, substitutions, and adjacent transpositions).}
\label{tab:dl}
\begin{tabular}{lll}
\toprule
\textbf{Errors} & \textbf{Redundancy} & \textbf{Reference} \\
\midrule
\multicolumn{3}{l}{\emph{Explicit constructions}} \\
$1$ deletion $+$ $\ell$ adjacent transpositions & $2\ell\log n+\log(n+2\ell+1)$ & \cite{GYM18} \\
$t$ DL errors & $8t\log n+O_t(\log\log n)$ & \cite{LGF25} (via $2t$-edits code) \\
$t$ DL errors & $4t\log n+O_t(\log\log n)$ & \cite{LGF25}$^{\dagger}$ \\
\rowcolor{LightCyan} $t$ DL errors & $\boldsymbol{2t}\log n+2t\log\log n+O_t(\log\log\log n)$ & \textbf{this paper} \\
\midrule
\multicolumn{3}{l}{\emph{Existential bounds}} \\
$1$ deletion or $1$ adjacent transposition
& $\log n + O(\log\log n)$
& \cite{GGRW23} \\
$t$ DL errors & $2t\log n+O_t(1)$ (GV-type) & Section~\ref{sec:intro} \\
$t\ge 2$ DL errors & $(2t-1)\log n+O_t(\log\log n)$ & \cite{EnGad26} \\
\midrule
\emph{Lower bound} & $t\log n-O_t(1)$ & \cite{Lev66,Lev02} \\
\bottomrule
\end{tabular}
\end{table}

\begin{table}[ht]
\centering
\footnotesize
\renewcommand{\arraystretch}{1.2}
\caption{Summary of main notation.}
\label{tab:notation}
\begin{tabular*}{\textwidth}{@{\extracolsep{\fill}}p{0.28\textwidth}p{0.68\textwidth}@{}}
\toprule
\textbf{Symbol} & \textbf{Meaning} \\
\midrule
$t,\ 2t$ & Number of errors per codeword (fixed), and confusability radius \\
$n=m+N$ & Codeword length $=$ payload length $m$ $+$ protected suffix length $N$ \\
$d_{\mathrm L}(u,v),\ d_{\mathrm{DL}}(u,v)$ & Edit (Levenshtein) distance and Damerau--Levenshtein distance \\
$\mathcal E_q(u),\ B_q(u)$ & Edit ball and DL ball of radius $q$ around $u$ \\
$\mathcal E=(\mathcal E_n),\ \mathcal E_n^{-1}(y),\ \sim_n$ & Channel output map, preimage set of output $y$, and confusability relation \\
$\mathcal R_N,\ D_N$ & Suffix relation in Definition~\ref{def:splittability} and its maximum degree \\
$\Gamma(x),\ \Gamma_{\mathcal E}(x),\ \widetilde\Gamma(x)$ & Abstract competitor set, exact payload competitors, and listable superset \\
$\Gamma_{\mathrm{DL}}(x)$ & Exact DL payload competitors $(B_{2t}(x)\cap\{0,1\}^{m})\setminus\{x\}$ \\
$L,\ D_t^{\mathrm{DL}}$ & Competitor bound $|\Gamma(x)|\le L$; for DL errors, $L=D_t^{\mathrm{DL}}n^{2t}$ with $D_t^{\mathrm{DL}}=(2t+1)10^{2t}$ \\
$b,\ Q=2^b,\ K=2^{b-1}$ & Field bit-length, field size $|\mathbb F_Q|$, and number of selectable good seeds per stage \\
$k=\lceil m/b\rceil,\ P_x(Z)$ & Number of $b$-bit blocks in $x$ and the associated polynomial over $\mathbb F_Q$ \\
$\varphi,\ \iota,\ \pi,\ r$ & Bijection $\{0,1\}^{b}\to\mathbb F_Q$, canonical order $\iota(a)=\val(\varphi^{-1}(a))$, and projection $\pi:\mathbb F_Q\to\{0,1\}^{r}$ \\
$\Good_1(x),\ \Good_2(x,a)$ & Good evaluation seeds $a\in\mathbb F_Q$ and good multipliers $\lambda\in\mathbb F_Q$ (complements $\Bad_1,\Bad_2$) \\
$\Tag(x,\sigma,\tau),\ h=2b+r$ & Recycled tag carrying $2b-2$ extra message bits $(\sigma,\tau)\in(\{0,1\}^{b-1})^2$, and its length \\
$A:\{0,1\}^{h}\to\{0,1\}^{N}$ & Tag code protecting the $h$-bit tag in an $N$-bit suffix, with overhead $N-h$ \\
\bottomrule
\end{tabular*}
\end{table}

\subsection{Derivation of the bounds marked $\dagger$}
\label{app:lgf-consequences}

Here we briefly record how the two bounds marked $\dagger$ in Tables~ \ref{tab:edits}, and~\ref{tab:dl} follow from the coloring framework of Li, Gabrys, and Farnoud~\cite{LGF25}. Throughout, $t\ge 2$ is fixed, $m$ is the payload length, and $n$ is the codeword length. In each derivation below, $h$ denotes the length of the syndrome under consideration, independently of the seed-recycling parameters.

\paragraph{General edits.}
Let $\Phi_1(x)$ be Levenshtein's single-edit syndrome of length $\log m+O(1)$~\cite{Lev66}. Applying~\cite[Theorem~5]{LGF25} to extend this single-edit syndrome to correction of \(t\) edits gives an additional syndrome $\Phi_{t\mid 1}(x)$ of length $(4t-2)\log m+O_t(\log\log m)$ such that $S(x)=\Phi_1(x)\Vert\Phi_{t\mid 1}(x)$ separates $x\in\{0,1\}^m$ from all strings within edit distance $2t$, with total length $h=|S(x)|\le(4t-1)\log m+O_t(\log\log m)$. Applying~\cite[Theorem~3]{LGF25} at length $h$ yields a secondary syndrome $\Psi:\{0,1\}^h\to\{0,1\}^\rho$ of length $\rho=O_t(\log h)$, and~\cite[Lemma~37]{LGF25} shows that $x\mapsto x\Vert S(x)\Vert\operatorname{Rep}_{2t+1}(\Psi(S(x)))$ (where $\operatorname{Rep}_{2t+1}$ repeats each bit $2t+1$ times) is a polynomial-time encodable and decodable $t$-edit-correcting code of redundancy $n-m=h+(2t+1)\rho\le(4t-1)\log n+O_t(\log\log n)$.

\paragraph{Damerau--Levenshtein errors.}
Every DL operation is a $2$-substring edit (an adjacent transposition replaces a length-$2$ substring by its reversal). Thus, under the noiseless-syndrome convention of~\cite[Remark~1]{LGF25}, 
specializing ~\cite[Theorem~6]{LGF25} to $t$ $2$-substring edits gives a polynomial-time computable syndrome $\Phi_{\mathrm{DL}}:\{0,1\}^m\to\{0,1\}^h$ of length $h=4t\log m+O_t(\log\log m)$ that assigns distinct syndromes to payloads at DL distance at most $2t$. Encoding $\Phi_{\mathrm{DL}}(x)$ with the greedy DL tag code $A:\{0,1\}^h\to\{0,1\}^N$ of Section~\ref{subsec:efficient-tag-protection} (which has overhead $N-h=O_t(\log h)=O_t(\log\log m)$ and minimum DL distance at least $2t+1$) gives the codeword $c(x)=x\Vert A(\Phi_{\mathrm{DL}}(x))$. By Lemma~\ref{lem:dl-projection}, $d_{\mathrm{DL}}(c(x),c(x'))\le 2t$ implies both $d_{\mathrm{DL}}(x,x')\le 2t$ and $d_{\mathrm{DL}}(A(\Phi_{\mathrm{DL}}(x)),A(\Phi_{\mathrm{DL}}(x')))\le 2t$, which forces $\Phi_{\mathrm{DL}}(x)=\Phi_{\mathrm{DL}}(x')$ and hence $x=x'$. The resulting code has redundancy $N=4t\log n+O_t(\log\log n)$ and polynomial-time encoding and decoding.

\section{Additional proofs and decoding details for Sections~\ref{sec:recycling} and~\ref{sec:tags-to-codes}}
\label{app:section23}

We first record the direct consequence of Section~\ref{sec:recycling} when the tag is transmitted over a noiseless side channel, and then give the formal specification of the generic decoder from Section~\ref{sec:tags-to-codes}, including its behavior on arbitrary (possibly invalid) inputs needed for bootstrapping in Appendix~\ref{sec:bootstrap}.

\begin{corollary}[Recovery from a noiseless tag]
\label{cor:sync}
Let $\mathcal E_m$ be a channel on $\{0,1\}^m$ whose confusion graph has maximum degree at most $L\ge 1$. Let $\Gamma(x)$ be the neighborhood of $x$ in this graph, and set $b=\lceil\log(2Lm)\rceil$ and $r=\lceil\log(2L)\rceil$. If we transmit $x\in\{0,1\}^m$ through $\mathcal E_m$ and $\Tag(x,\sigma,\tau)$ (where $\sigma,\tau\in\{0,1\}^{b-1}$) over a noiseless side channel, then $(x,\sigma,\tau)$ can be uniquely recovered from any output $y\in\mathcal E_m(x)$ and the tag, with net redundancy $(2b+r)-(2b-2)=r+2<\log L+4$. Moreover, suppose that $L=m^{O(1)}$ is computable in $\poly(m)$ time, neighborhoods can be listed in $\poly(m)$ time, and $\mathcal E_m^{-1}(y)$ can be listed in $\poly(m+|y|)$ time. Then, encoding takes $\poly(m)$ time and decoding takes $\poly(m+|y|)$ time.
\end{corollary}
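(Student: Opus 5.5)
The argument applies Theorem~\ref{thm:recycling} with $\Gamma(x)$ equal to the neighborhood of $x$ in the confusion graph. We first check the parameters in \eqref{eq:seed-parameters}. With $b=\lceil\log(2Lm)\rceil$ we have $2^b\ge 2Lm\ge 2L\lceil m/b\rceil$. With $r=\lceil\log(2L)\rceil$ we have $2^r\ge 2L$. Finally $1\le r\le b$, since $L\ge1$ and $m\ge1$. Theorem~\ref{thm:recycling}(c) then gives the stated net redundancy $r+2<\log L+4$ directly.

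For unique recovery, we follow the ``order of recovery'' paragraph. Given $y\in\mathcal E_m(x)$ and the noiseless tag $T=\varphi^{-1}(a)\Vert\varphi^{-1}(\lambda)\Vert\eta$, the decoder considers the candidate set $\mathcal E_m^{-1}(y)$.
\begin{itemize}
\item The true $x$ lies in this set and passes the test $\pi(\lambda P_z(a))=\eta$.
\item Any other candidate $z\ne x$ satisfies $y\in\mathcal E_m(z)\cap\mathcal E_m(x)$. Hence $z\sim x$, so $z\in\Gamma(x)$.
\item Since $\lambda\in\Good_2(x,a)$, such a $z$ fails the test.
\end{itemize}
So $x$ is the unique passing candidate. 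Equivalently, Theorem~\ref{thm:recycling}(a) forbids $\Tag(z,\cdot,\cdot)=T$. Once $x$ is known, Theorem~\ref{thm:recycling}(b) recovers $(\sigma,\tau)$ from the ranks of $a$ in $\Good_1(x)$ and of $\lambda$ in $\Good_2(x,a)$. The only point needing care is the direction of the neighborhood relation. Confusability is symmetric, so ``neighborhood in the confusion graph'' is well defined, and only $z\in\Gamma(x)$ is ever needed, as the paper remarks after the proof of Theorem~\ref{thm:recycling}.

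For efficiency, $Q=2^b<4Lm=\poly(m)$ because $L=m^{O(1)}$. An irreducible polynomial of degree $b$ can be found deterministically in $\poly(b)$ time (Appendix~C.1), or by brute force over all $2^b=\poly(m)$ candidates.
\begin{itemize}
\item \emph{Encoding.} The encoder lists $\Gamma(x)$ in $\poly(m)$ time. It tests every $a\in\F_Q$ against each $x'\in\Gamma(x)$ to form $\Good_1(x)$, selects the $v$-th element, and forms $\Good_2(x,a)$ the same way. This costs $O(QL)$ field evaluations, each $\poly(m)$.
\item \emph{Decoding.} The decoder lists $\mathcal E_m^{-1}(y)$ in $\poly(m+|y|)$ time and evaluates the hash test on each candidate by Horner's rule. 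It then recomputes the good-seed sets for the winner in $\poly(m)$ time to read off the ranks.
\end{itemize}
I expect no real obstacle. The only mildly delicate step is confirming that the size of $\mathcal E_m^{-1}(y)$ is bounded by its listing time, so the decoder's total time remains $\poly(m+|y|)$.
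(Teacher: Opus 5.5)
Your proposal is correct and follows the paper's proof essentially step for step: verify \eqref{eq:seed-parameters}, scan $\mathcal E_m^{-1}(y)$ with the hash test $\pi(\lambda P_z(a))=\eta$, eliminate every other candidate because it lies in $\Gamma(x)$ and $\lambda\in\Good_2(x,a)$, read off $(\sigma,\tau)$ from the seed ranks, and get the running times from $Q<4Lm=\poly(m)$. One wording point: your ``equivalently'' appeal to Theorem~\ref{thm:recycling}(a) is strictly weaker than what the hash-only test needs, since a string with no valid tag equal to $T$ could a priori still satisfy $\pi(\lambda P_z(a))=\eta$; the uniqueness claim therefore rests on the direct $\Good_2(x,a)$ argument you give first.
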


\begin{proof}
The parameters $b,r$ satisfy \eqref{eq:seed-parameters}. Given $y\in\mathcal E_m(x)$ and the tag $T=\varphi^{-1}(a)\Vert\varphi^{-1}(\lambda)\Vert\eta$, the decoder scans $z\in\mathcal E_m^{-1}(y)$ and checks whether $\pi(\lambda P_z(a))=\eta$. The true input $x$ passes this check, whereas every other $z\in\mathcal E_m^{-1}(y)$ is confusable with $x$ (so $z\in\Gamma(x)$) and therefore fails because $\lambda\in\Good_2(x,a)$ ensures $\pi(\lambda P_z(a))\ne\pi(\lambda P_x(a))=\eta$. Once $x$ is identified, the decoder reconstructs $\Good_1(x)$ and $\Good_2(x,a)$ and recovers $(\sigma,\tau)$ from the ranks of $a$ and $\lambda$. The net cost and running times follow from Theorem~\ref{thm:recycling} and Section~\ref{sec:recycling}.
\end{proof}

\paragraph{The generic decoder.}
Fix the parameters and tag code $A:\{0,1\}^h\to\{0,1\}^N$ of Lemma~\ref{lem:tag-to-code}, let $\mathcal C$ be the image of $\Enc$ in \eqref{eq:protected-tag-encoder}, and let $K=2^{b-1}$. The good- and bad-seed sets below are defined with respect to the same competitor sets $\Gamma(x)=\widetilde\Gamma(x)$ used by the encoder (implemented via the procedures of Appendix~\ref{sec:computing-tags}).

Assume that, on every input, we can list $\mathcal E_n^{-1}(y)$ and $\widetilde\Gamma(x)$, test membership in $A(\{0,1\}^h)$, and invert $A$ on its image. All these algorithms are assumed to terminate, and the lists may contain repetitions.

\begin{procedure}[Generic decoder]
\label{proc:generic-dec}
Given a string $y$, perform the following steps:
\begin{enumerate}[label=\arabic*.,leftmargin=*]
  \item \emph{Identify a candidate.} List $\mathcal E_n^{-1}(y)$. For each candidate $z=x'\Vert u$ with $|x'|=m$, check whether $u\in A(\{0,1\}^h)$. If so, write $A^{-1}(u)=\alpha\Vert\gamma\Vert\eta$ with $|\alpha|=|\gamma|=b$ and $|\eta|=r$, set $a=\varphi(\alpha)$ and $\lambda=\varphi(\gamma)$, and stop at the first candidate $z$ satisfying $\pi(\lambda P_{x'}(a))=\eta$. If no candidate passes, return \emph{failure}. (To recover only the payload, output $x'$ immediately after Step~1.)

  \item \emph{Recover the first seed rank.} Compute $\Bad_1(x')$. If $a\notin\Good_1(x')$, or if the rank $v$ of $a$ in $\Good_1(x')$ exceeds $K$, return \emph{failure}.

  \item \emph{Recover the second seed rank.} Compute $\Bad_2(x',a)$. If $\lambda\notin\Good_2(x',a)$, or if the rank $w$ of $\lambda$ in $\Good_2(x',a)$ exceeds $K$, return \emph{failure}. Otherwise, output $\bigl(x',\bin_{b-1}(v-1),\bin_{b-1}(w-1)\bigr)$.
\end{enumerate}
\end{procedure}

\begin{lemma}[Correctness and termination of the generic decoder]
\label{lem:generic-dec}
Under the hypotheses of Lemma~\ref{lem:tag-to-code} and the algorithmic assumptions above, for any codeword $c=\Enc(x,\sigma,\tau)$ and any valid channel output $y\in\mathcal E_n(c)$, the codeword $c$ is the unique candidate passing Step~1 of Procedure~\ref{proc:generic-dec}, and the procedure outputs $(x,\sigma,\tau)$. On an arbitrary input $y\in\{0,1\}^*$, Procedure~\ref{proc:generic-dec} always terminates, computes each bad-seed set at most once, and outputs either a message in $\{0,1\}^{m+2b-2}$ or failure. With the parameter choices, tag code, and listing algorithms used in the proof of Theorem~\ref{thm:efficient-channel-code}, the procedure runs in $\poly(n+|y|)$ time.
\end{lemma}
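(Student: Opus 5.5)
The proof has three parts, each reusing an argument already in the paper: correctness on valid outputs, termination and output format on arbitrary inputs, and the running-time bound.

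\emph{Correctness on valid outputs.} Fix $c=\Enc(x,\sigma,\tau)=x\Vert A(T)$ with $T=\Tag(x,\sigma,\tau)=\varphi^{-1}(a)\Vert\varphi^{-1}(\lambda)\Vert\eta$, and let $y\in\mathcal E_n(c)$. Since $y\in\mathcal E_n(c)$, we have $c\in\mathcal E_n^{-1}(y)$. Its suffix lies in $A(\{0,1\}^h)$, and because $A$ is injective we get $A^{-1}(A(T))=T$. The hash test passes by definition of $\eta$.

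Now suppose some candidate $z=x'\Vert u\in\mathcal E_n^{-1}(y)$ with $u=A(T')$ also passes Step~1. Both $z$ and $c$ can produce $y$, so $z\sim_n c$. Splittability \eqref{eq:splitting-property} gives $(A(T'),A(T))\in\mathcal R_N$, and then \eqref{eq:tag-code-separation} forces $T'=T$. If $x'\ne x$, then $x'\in\Gamma_{\mathcal E}(x)\subseteq\widetilde\Gamma(x)$. But $\lambda\in\Good_2(x,a)$ makes the test fail for $x'$, a contradiction. Hence $z=c$, so $c$ is the unique passing candidate, whatever the listing order and even with repetitions.

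Steps~2 and~3 then take place at $x'=x$ with the encoder's seeds $a$ and $\lambda$. By construction $a$ is the $v$-th element of $\Good_1(x)$ with $v=1+\val(\sigma)\le K$, and $\lambda$ is the $w$-th element of $\Good_2(x,a)$ with $w=1+\val(\tau)\le K$. The decoder computes these sets from the same $\widetilde\Gamma(x)$ as the encoder, so neither check fails. It therefore outputs $\bin_{b-1}(v-1)=\sigma$ and $\bin_{b-1}(w-1)=\tau$, as in Theorem~\ref{thm:recycling}(b).

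\emph{Termination and output format on arbitrary inputs.} Each step calls only listing algorithms, the membership test and inverse for $A$, and the bad-seed computations. All of these terminate by assumption and return finite lists, so Step~1 scans a finite list and every loop ends. Step~1 fixes a single $x'$ and Step~2 fixes a single $a$ before any bad-seed set is computed. Hence $\Bad_1(x')$ and $\Bad_2(x',a)$ are each computed exactly once, and only if the procedure reaches the corresponding step. The output is either failure or the triple $(x',\bin_{b-1}(v-1),\bin_{b-1}(w-1))$. The rank checks guarantee $1\le v,w\le K=2^{b-1}$, so this triple has length $m+2(b-1)$.

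\emph{Running time.} This part needs the most bookkeeping but no new idea. With the parameters from the proof of Theorem~\ref{thm:efficient-channel-code}, $\mathcal E_n^{-1}(y)$ has $\poly(n+|y|)$ entries and can be listed in that time. Each membership test and inversion for $A$ is a table lookup of cost $O(N)$, since $2^N=\poly(n)$. Evaluating $P_{x'}(a)$ by Horner's rule costs $\poly(n)$ field operations, each taking $\poly(b)$ time. In Steps~2 and~3 the field size is $Q<4Ln=\poly(n)$ and $\widetilde\Gamma(x')$ is listable in $\poly(n)$ time by (b). So both bad-seed sets and both ranks can be computed by brute force over $\F_Q$, or via Lemma~C.5. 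One point to check is that this bound holds even when $x'$ is not a true payload; it does, because hypothesis (b) applies to every $x\in\{0,1\}^m$. Summing the three steps gives $\poly(n+|y|)$.
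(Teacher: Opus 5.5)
Your proposal is correct and follows the paper's proof essentially step for step: uniqueness via splittability and tag-code separation, Steps~2--3 via the rank bounds $1+\val(\sigma),\,1+\val(\tau)\le K$, and termination via finite lists and a single selected candidate. One point the paper makes explicit and you gloss over: the bad-seed routines are not among the assumed-terminating subroutines, and the Appendix~C computation of $\Bad_2(x',a)$ via $\delta_{x''}^{-1}\ker\pi$ is well defined only because Step~3 is reached after Step~2 has verified $a\in\Good_1(x')$, so every $\delta_{x''}=P_{x'}(a)-P_{x''}(a)$ is nonzero (your brute-force option over $\F_Q$ sidesteps this).
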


\begin{proof}
Correctness on valid channel outputs $y\in\mathcal E_n(c)$ was shown in Section~\ref{subsec:code-correctness}: $c=x\Vert A(T)$ passes Step~1, any other passing candidate $z=x'\Vert A(T')$ satisfies $z\sim_n c$ and hence $T'=T$, and $x'\ne x$ would place $x'\in\Gamma_{\mathcal E}(x)\subseteq\widetilde\Gamma(x)$, contradicting $\lambda\in\Good_2(x,a)$. Since the seeds $a,\lambda$ of $c$ have ranks $1+\val(\sigma)\le K$ and $1+\val(\tau)\le K$, Steps~2 and~3 output $(x,\sigma,\tau)$. On an arbitrary input $y$, Step~1 scans a finite list, Step~2 computes $\Bad_1(x')$ only for the single candidate selected in Step~1, and Step~3 computes $\Bad_2(x',a)$ only after Step~2 has verified that $a\in\Good_1(x')$ (so that $P_{x'}(a)-P_{x''}(a)\ne 0$ for all $x''\in\widetilde\Gamma(x')$). Thus every step is well-defined and the procedure terminates. With the choices made in the proof of Theorem~\ref{thm:efficient-channel-code}, the listing, tag-code, and rank computations take $\poly(n+|y|)$ time. For the $t$-edit and $t$-DL channels, inputs of length outside $[n-t,n+t]$ are rejected immediately after reading at most $n+t+1$ bits, so the running time is $\poly(n)$ on all inputs.
\end{proof}

\paragraph{Testing codeword membership.}
To decide whether a given string $u\in\{0,1\}^n$ belongs to the code $\mathcal C$, run the full decoder and accept if and only if it returns a message $\mu$ satisfying $\Enc(\mu)=u$. Indeed, if $u\in\mathcal C$, then $u\in\mathcal E_n(u)$, so Lemma~\ref{lem:generic-dec} ensures that the decoder returns $\mu=\Enc^{-1}(u)$; conversely, if $\Enc(\mu)=u$, then $u\in\mathcal C$ by definition. This re-encoding test is used in Appendix~\ref{sec:bootstrap} to invert the inner tag code.

\section{Encoding and decoding}
\label{sec:algorithms}

In this appendix, we give deterministic implementations of the tag construction from Section~\ref{sec:recycling} and the generic decoder from Appendix~\ref{app:section23}. Rather than scanning the entire field $\mathbb F_Q$ to find good seeds, we compute the bad-seed sets directly by finding roots of difference polynomials in the first stage and enumerating scaled copies of $\ker\pi$ in the second stage. In Section~\ref{subsec:edit-decoding}, we also describe a decoding optimization for channels with only insertions, deletions, and substitutions, which recovers the short suffix before enumerating payload candidates.

\subsection{Representation and precomputation}
\label{sec:precomp}

All algorithms are deterministic, and costs are measured in bit operations on a RAM. We fix a monic irreducible polynomial $f_b\in\mathbb F_2[U]$ of degree $b$ and represent $\mathbb F_Q=\mathbb F_2[U]/(f_b)$ (where $Q=2^b$) via the bijection $\varphi(u_1\cdots u_b)=\sum_{i=1}^{b}u_iU^{i-1}$, ordered by $\iota(a)=\val(\varphi^{-1}(a))$. Addition and comparison in $\mathbb F_Q$ take $O(b)$ bit operations, while schoolbook multiplication and inversion take $O(b^2)$ bit operations. The polynomial $f_b$ can be found deterministically in $\poly(b)$ time~\cite{Sho90}, which is $\polylog(n)$ time for our parameters.

For the tag code $A:\{0,1\}^h\to\{0,1\}^N$, let $\tau_A$ denote the time required to compute $A(T)$, or to test membership in $A(\{0,1\}^h)$ and compute $A^{-1}(u)$. We have $\tau_A=O_t(\log n)$ for the greedy table-lookup construction of Section~\ref{subsec:efficient-tag-protection} and $\tau_A=\polylog(n)$ for the bootstrapped construction of Appendix~\ref{sec:bootstrap}. Enumeration procedures may produce repeated entries (which are included in our list-size bounds), whereas bad-seed lists are sorted and deduplicated before selection or ranking.

\subsection{Finding roots of difference polynomials}

To compute $\Bad_1(x)$, we need to find all roots in $\mathbb F_Q$ of nonzero difference polynomials $P_x-P_{x'}$. Since $\mathbb F_Q$ has characteristic two, we can find these roots deterministically using Berlekamp's trace method~\cite{Ber70}. Let $\Tr(z)=\sum_{j=0}^{b-1}z^{2^j}$ be the absolute trace from $\mathbb F_Q$ to $\mathbb F_2$, and let $\beta_i=U^{i-1}$ for $i\in[b]$ be the polynomial basis of $\mathbb F_Q$ over $\mathbb F_2$. Note that $\Tr$ is $\mathbb F_2$-linear and not identically zero (since it has degree $2^{b-1}<Q$).

\begin{procedure}[Finding roots]
\label{proc:roots}
Given a nonzero polynomial $R\in\mathbb F_Q[Z]$ of degree $d$:
\begin{enumerate}[label=\arabic*.,leftmargin=*]
  \item If $d=0$, return $\emptyset$.

  \item Compute $S=Z^Q\bmod R$ by $b$ successive squarings modulo $R$, and let $G=\gcd(R,S-Z)$ (made monic). If $\deg G=0$, return $\emptyset$.

  \item Initialize $\mathcal F=\{G\}$. For $i=1,\ldots,b$, compute $H_i=\sum_{j=0}^{b-1}(\beta_iZ)^{2^j}\bmod G$, and replace each $F\in\mathcal F$ with $\deg F>1$ by the nonconstant polynomials among $\gcd(F,H_i)$ and $\gcd(F,H_i+1)$.

  \item Return $\{\rho\in\mathbb F_Q:Z-\rho\in\mathcal F\}$.
\end{enumerate}
\end{procedure}

\begin{lemma}[Deterministic root finding]
\label{lem:roots}
Procedure~\ref{proc:roots} returns the set of all roots of $R$ in $\mathbb F_Q$ using $O(1+b^2d^2)$ operations in $\mathbb F_Q$, or $O(b^2+b^4d^2)$ bit operations.
\end{lemma}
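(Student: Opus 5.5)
Proving Lemma~\ref{lem:roots} splits into correctness and cost, and I would handle correctness first.

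\emph{Step 2.} Over $\mathbb F_Q$ we have $Z^Q-Z=\prod_{\rho\in\mathbb F_Q}(Z-\rho)$. Since $S\equiv Z^Q \pmod R$, we get $\gcd(R,S-Z)=\gcd(R,Z^Q-Z)$. Hence $G$ is the product of $(Z-\rho)$ over the distinct roots $\rho\in\mathbb F_Q$ of $R$. In particular $G$ is squarefree and splits into distinct linear factors, and $\deg G=0$ exactly when $R$ has no roots in $\mathbb F_Q$. The case $d=0$ is trivial because $R$ is a nonzero constant.

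\emph{Step 3.} The key identity is that for each $i$,
\[
H_i\equiv \Tr(\beta_iZ) \pmod G,
\]
where $\Tr(\beta_iZ)$ denotes the polynomial $\sum_j(\beta_iZ)^{2^j}$. For every root $\rho$ of $G$ this gives $H_i(\rho)=\Tr(\beta_i\rho)\in\mathbb F_2$. Now take any factor $F$ in the current family; it divides $G$. Its roots split into those with $\Tr(\beta_i\rho)=0$, which are the roots of $\gcd(F,H_i)$, and those with $\Tr(\beta_i\rho)=1$, which are the roots of $\gcd(F,H_i+1)$. Because $F$ is squarefree, $F=\gcd(F,H_i)\gcd(F,H_i+1)$, so no root is ever lost.

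Next I need that after all $b$ rounds every factor is linear. Two distinct roots $\rho\ne\rho'$ stay in the same factor only if $\Tr(\beta_i(\rho-\rho'))=0$ for all $i$. The trace form $(u,v)\mapsto\Tr(uv)$ is nondegenerate on $\mathbb F_Q$ over $\mathbb F_2$, since $\Tr$ is not identically zero. As the $\beta_i$ span $\mathbb F_Q$, this forces $\rho=\rho'$. So $\mathcal F$ ends as exactly the linear factors $Z-\rho$ of $G$, and Step~4 returns all roots.

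For the cost I would count operations in $\mathbb F_Q$ with schoolbook polynomial arithmetic.
\begin{itemize}
\item In Step~2, each squaring modulo $R$ costs $O(d^2)$, so the $b$ squarings cost $O(bd^2)$, and the Euclidean gcd costs $O(d^2)$.
\item In Step~3, write $g=\deg G\le d$. Each $H_i$ comes from $b$ successive squarings of $\beta_iZ$ modulo $G$, accumulating the partial sums, for $O(bg^2)$ per $i$. Over all $b$ values of $i$ this is $O(b^2d^2)$.
\item Also in Step~3, the gcds in one round act on factors $F$ whose degrees sum to at most $g$. Reducing $H_i$ modulo $F$ and running Euclid costs $O(g\deg F)$, so one round costs $O(g^2)$ and all $b$ rounds cost $O(bd^2)$.
\end{itemize}
The total is $O(1+b^2d^2)$ field operations. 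The leading $1$ covers the $d=0$ case.

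To convert to bit operations, multiplication and inversion in $\mathbb F_Q$ cost $O(b^2)$ bit operations. That gives $O(b^2+b^4d^2)$, where the additive $b^2$ accounts for constant setup such as a single inversion to make $G$ monic.

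The main obstacle is the cost accounting in Step~3. A naive bound of $O(g^2)$ gcd work per factor would lose a factor of $g$ per round. The fix is to keep the degree-sum argument, reducing $H_i$ modulo each $F$ rather than computing $\gcd(F,H_i)$ with $H_i$ of degree $g$. Correctness, by contrast, follows directly from the nondegeneracy of the trace form.
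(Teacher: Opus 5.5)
Your proof is correct and follows the paper's argument essentially step for step. The shared steps are: $G=\gcd(R,Z^Q-Z)$ collects the distinct roots, $H_i(\rho)=\Tr(\beta_i\rho)\in\mathbb F_2$ drives the splitting, and the trace argument forces all factors to be linear (your appeal to nondegeneracy of the trace form is the paper's bijection argument in other words). The cost accounting $O(bd^2)+O(b^2e^2)$, with the degree-sum bound $O(e^2)$ per round, is also the paper's, and your explicit check that $F=\gcd(F,H_i)\gcd(F,H_i+1)$ only spells out a step the paper leaves implicit.
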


\begin{proof}
The case $d=0$ is trivial. For $d\ge 1$, since $Z^Q-Z=\prod_{\rho\in\mathbb F_Q}(Z-\rho)$ is square-free, $G=\gcd(R,Z^Q-Z)=\gcd(R,S-Z)$ is the product of $Z-\rho$ over all distinct roots $\rho\in\mathbb F_Q$ of $R$. For any root $\rho$ of $G$, we have $H_i(\rho)=\Tr(\beta_i\rho)\in\{0,1\}$, so taking $\gcd(F,H_i)$ and $\gcd(F,H_i+1)$ splits each factor $F\in\mathcal F$ according to the value of $\Tr(\beta_i\rho)$ on its roots.

If two distinct roots $\rho\ne\rho'$ remained in the same factor after all $b$ iterations, we would have $\Tr(\beta_i(\rho-\rho'))=0$ for all $i\in[b]$, and hence $\Tr(z(\rho-\rho'))=0$ for all $z\in\mathbb F_Q$ by $\mathbb F_2$-linearity. Since multiplication by $\rho-\rho'\ne 0$ is a bijection on $\mathbb F_Q$, this would imply $\Tr\equiv 0$, a contradiction. Thus $\mathcal F$ consists of linear factors $Z-\rho$ after Step~3.

For the running time, Step~2 takes $O(bd^2)$ field operations. Letting $e=\deg G\le d$, computing $H_i$ in iteration $i$ takes $O(be^2)$ field operations. For each $F\in\mathcal F$, reducing $H_i$ modulo $F$ takes $O(e\deg F)$ field operations, after which the two gcd computations take $O((\deg F)^2)$ field operations. Their total cost is therefore $O\left(\sum_{F\in\mathcal F}\bigl(e\deg F+(\deg F)^2\bigr)\right)=O(e^2)$,
since $\sum_{F\in\mathcal F}\deg F=e$.  Summing over $b$ iterations gives $O(bd^2+b^2e^2)=O(b^2d^2)$ field operations, or $O(b^4d^2)$ bit operations.
\end{proof}

\subsection{Computing seeds and ranks}
\label{sec:computing-tags}

Both selecting the $v$-th good seed at the encoder and finding the rank of a good seed at the decoder can be done directly from a sorted list of bad seeds.

\begin{procedure}[Select and Rank]
\label{proc:select}
Given a list $\mathcal B=(\beta_1,\ldots,\beta_s)$ of distinct elements of $\mathbb F_Q$ sorted by $\iota$ (with $s=0$ allowed), let $\mathcal G=\mathbb F_Q\setminus\{\beta_1,\ldots,\beta_s\}$.
\begin{itemize}[nosep,leftmargin=*]
  \item $\textsc{Select}(v)$ (for $1\le v\le|\mathcal G|$): Set $c=v-1$. For $i=1,\ldots,s$, if $\iota(\beta_i)\le c$ then increment $c\leftarrow c+1$, else stop the loop. Return $\iota^{-1}(c)$.

  \item $\textsc{Rank}(a)$: Use binary search to test whether $a\in\mathcal B$. If so, return \emph{bad}; otherwise, return $\iota(a)+1-\bigl|\{i:\iota(\beta_i)<\iota(a)\}\bigr|$ using the count from the binary search.
\end{itemize}
\end{procedure}

\begin{lemma}[Selection and ranking]
\label{lem:select}
$\textsc{Select}(v)$ returns the $v$-th smallest element of $\mathcal G$ under $\iota$, and $\textsc{Rank}(a)$ returns the rank of $a$ in $\mathcal G$ (or \emph{bad} if $a\notin\mathcal G$), using $O(s+1)$ and $O(\log(s+2))$ comparisons, respectively.
\end{lemma}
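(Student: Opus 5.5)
The plan is to prove both claims of Lemma~\ref{lem:select} directly from the structure of $\mathcal G$ as the complement of a sorted list. Throughout, I identify $\mathbb F_Q$ with $[0,Q-1]$ via $\iota$, so that $\mathcal G$ becomes a subset of $[0,Q-1]$ and the list $\mathcal B$ becomes an increasing sequence $\iota(\beta_1)<\cdots<\iota(\beta_s)$. The single identity driving both parts is the following: for any $c\in[0,Q-1]$, the number of elements of $\mathcal G$ with $\iota$-value at most $c$ equals $c+1-|\{i:\iota(\beta_i)\le c\}|$.

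\textbf{Rank.} Since $\mathcal B$ is sorted and consists of distinct elements, binary search decides membership of $a$ in $O(\log(s+2))$ comparisons. When $a\notin\mathcal B$, the same search returns the insertion point $j=|\{i:\iota(\beta_i)<\iota(a)\}|$. Applying the counting identity with $c=\iota(a)$, and using that $a$ itself is good, the rank of $a$ in $\mathcal G$ is $\iota(a)+1-j$. This is exactly the value the procedure returns.

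\textbf{Select.} This part is the main point, and I would prove it by a loop invariant. The invariant is that after processing $\beta_1,\ldots,\beta_{i}$ without stopping, we have $c=v-1+i$ and $\iota(\beta_1),\ldots,\iota(\beta_i)\le c$. Each increment therefore accounts for one bad element lying at or below the current candidate. Because the $\beta_i$ are increasing, raising $c$ by one can only newly cover the next bad element, which is why a single pass suffices.

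When the loop stops, say at index $i+1$ with $\iota(\beta_{i+1})>c$, or after exhausting the list with $i=s$, exactly $i$ bad elements have value $\le c$. The counting identity then gives $c+1-i=v$ good elements with value $\le c$.

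It remains to show that $c$ itself is good, so that it is the $v$-th good element rather than merely having $v$ good elements at or below it. Suppose instead that $c=\iota(\beta_{i'})$ for some $i'$. If $i'\le i$, then $\iota(\beta_{i'})\le\iota(\beta_i)\le$ (the value of $c$ before the final increment) $<c$, which is impossible. If $i'>i$, then $\iota(\beta_{i'})\ge\iota(\beta_{i+1})>c$, again impossible. So $c\in\mathcal G$, and $\iota^{-1}(c)$ is the $v$-th smallest good element.

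Finally, $c\le Q-1$ holds because $v\le|\mathcal G|$: the counting identity forces the $v$-th good element to exist within the range. The loop makes at most $s+1$ comparisons, giving the $O(s+1)$ bound. The only delicate point is this boundary and off-by-one check, particularly the edge case in which the loop runs through all $s$ bad elements without stopping.
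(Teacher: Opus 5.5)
Your proposal is correct and follows the same route as the paper: the same counting identity, the same observation that after $j$ increments exactly $j$ bad elements lie strictly below $c$ (so $\iota^{-1}(c)\in\mathcal G$ and $c+1-j=v$), and the same count for $\textsc{Rank}$. The one step to tighten is the range check. The paper states it first and directly as $c=v-1+j\le|\mathcal G|-1+s=Q-1$, using $j\le s$. You instead appeal to the counting identity, which you only stated for $c\in[0,Q-1]$, so the direct bound should come before that identity is used.
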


\begin{proof}
If $\textsc{Select}(v)$ processes $j$ elements of $\mathcal B$ before stopping, it terminates with $c=v-1+j\le|\mathcal G|-1+s=Q-1$. Exactly $j$ bad elements have index at most $c$ (and all have index strictly less than $c$), so $\iota^{-1}(c)\in\mathcal G$ and there are $(c+1)-j=v$ elements of $\mathcal G$ with index at most $c$. Similarly, for $a\in\mathcal G$, $\textsc{Rank}(a)$ counts all elements with index at most $\iota(a)$ and subtracts the bad ones.
\end{proof}

Recall that for a competitor set $\Gamma(x)$, the two bad-seed sets are
\[
  \Bad_1(x)=\bigcup_{x'\in\Gamma(x)}\{\rho\in\mathbb F_Q:P_x(\rho)=P_{x'}(\rho)\}
  \quad\text{and}\quad
  \Bad_2(x,a)=\bigcup_{x'\in\Gamma(x)}\delta_{x'}^{-1}\ker\pi,
\]
where $\delta_{x'}=P_x(a)-P_{x'}(a)\ne 0$ for $a\in\Good_1(x)$, and $\ker\pi=\varphi(\{0\}^r\times\{0,1\}^{b-r})=\operatorname{span}_{\mathbb F_2}\{U^r,\ldots,U^{b-1}\}$.

\begin{lemma}[Computing seeds and ranks]
\label{lem:bad-sets}\label{lem:comp-seeds}
Let $b,r$ satisfy \eqref{eq:seed-parameters}, let $k=\lceil m/b\rceil$, and let $\Lambda$ be a list of $L'$ entries containing the exact elements of $\Gamma(x)$ (possibly with repetitions). Then:
\begin{enumerate}[label=(\alph*),leftmargin=*]
  \item The sorted, deduplicated list $\Bad_1(x)$ can be computed using $O(L'(k+m^2))$ field operations and sorting at most $s_1\le L'(k-1)$ elements of $\mathbb F_Q$.

  \item Given $a\in\Good_1(x)$, both $P_x(a)$ and the sorted, deduplicated list $\Bad_2(x,a)$ can be computed using $O\bigl(k+L'(k+2^{b-r})\bigr)$ field operations and sorting at most $s_2\le L'2^{b-r}$ elements of $\mathbb F_Q$.

  \item Once these lists are computed, selecting both seeds takes $O(s_1+s_2+2)$ comparisons, and recovering their ranks takes $O(\log(s_1+2)+\log(s_2+2))$ comparisons.
\end{enumerate}
In total, writing $s=s_1+s_2$, computing $\Tag(x,\sigma,\tau)$ or recovering $(\sigma,\tau)$ from $x$ and its tag takes
\[
  O\bigl(m+b+(L'+1)(m^2+2^{b-r})b^2+(s+1)b\log(s+2)\bigr)
\]
bit operations. For the parameters of Section~\ref{sec:dl-codes}, this is $O_t(n^{2t+2}\log^2 n)$ bit operations.
\end{lemma}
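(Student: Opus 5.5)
The plan is to treat the three parts in order, costing each with the primitives already established: Horner evaluation, Lemma~\ref{lem:roots} for root finding, and Lemma~\ref{lem:select} for selection and ranking.

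\textbf{Part (a).} For each entry $x'$ of $\Lambda$, I will form the difference polynomial $R=P_x-P_{x'}$ coefficientwise, using $O(k)$ field operations. Since $x'\ne x$, $R$ is nonzero of degree $d\le k-1$. Lemma~\ref{lem:roots} then returns its roots in $O(1+b^2d^2)$ field operations, which is $O(1+m^2)$ because $bd\le bk\le m+b$. Each polynomial contributes at most $k-1$ roots, so the concatenated list has $s_1\le L'(k-1)$ elements, and sorting and deduplicating it gives $\Bad_1(x)$. Repetitions in $\Lambda$ only inflate $L'$ and do not affect correctness.

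\textbf{Part (b).} I will first compute $P_x(a)$ by Horner's rule in $O(k)$ operations. For each entry $x'$, I compute $P_{x'}(a)$ in $O(k)$ operations and set $\delta_{x'}=P_x(a)-P_{x'}(a)$, which is nonzero because $a\in\Good_1(x)$. One inversion then gives $\delta_{x'}^{-1}$. I enumerate $\ker\pi=\operatorname{span}\{U^r,\dots,U^{b-1}\}$ once in Gray-code order, at one addition per element, and multiply each kernel element by $\delta_{x'}^{-1}$. This costs $O(2^{b-r})$ operations per competitor, and alternatively one can accumulate by adding $\delta_{x'}^{-1}U^j$ along the Gray code. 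The resulting list has at most $L'2^{b-r}$ elements, and I sort and deduplicate it to obtain $\Bad_2(x,a)$.

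\textbf{Part (c) and the total.} This follows directly from Lemma~\ref{lem:select} applied to the two sorted lists. For the bit-operation total, each field operation costs $O(b^2)$ bits, and sorting $s$ elements of $b$ bits costs $O(sb\log(s+2))$. Reading $x$ and padding it costs $O(m+b)$. Adding these costs gives the displayed bound, noting that $k\le m$ so the $L'k$ terms are absorbed into $L'm^2$.

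\textbf{Specialization to Section~\ref{sec:dl-codes}.} Here $L'\le L=O_t(n^{2t})$, $m\le n$, and $b=O_t(\log n)$. Since $2^{b-r}\le 2^b/2^r\le (4Ln)/(2L)=2n$, the middle term is $O_t(n^{2t}\cdot n^2\log^2 n)$. Also $s\le L'(k+2n)=O_t(n^{2t+1})$, so the sorting term is $O_t(n^{2t+1}\log^2 n)$, and the stated bound follows.

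The main obstacle is careful bookkeeping rather than anything conceptual. The delicate points are bounding $b^2d^2$ by $O(m^2)$ in the root-finding step and checking that $2^{b-r}=O(n)$ under the chosen parameters.
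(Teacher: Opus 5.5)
Your proposal follows the paper's proof essentially step for step: root finding on the difference polynomials for $\Bad_1(x)$, Horner evaluation of the nonzero differences $\delta_{x'}$ with Gray-code enumeration of $\delta_{x'}^{-1}\ker\pi$ for $\Bad_2(x,a)$, Lemma~\ref{lem:select} for part~(c), and the same bookkeeping with $2^{b-r}<2n$. One small tightening: $bd\le bk\le m+b$ by itself only gives $O((m+b)^2)$, so either note that $d\ge 1$ forces $k\ge 2$ and hence $b<m$, or use the paper's direct bound $bd\le b(k-1)<m$.
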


\begin{proof}
Padding $x$ and reading the coefficients of $P_x$ takes $O(kb)=O(m+b)$ bit operations. For each $x'\in\Lambda$, forming $R=P_x-P_{x'}$ takes $O(k)$ field operations, and since $\deg R\le k-1$ and $b(k-1)<m$, Procedure~\ref{proc:roots} finds the at most $k-1$ roots of $R$ in $O(1+b^2(k-1)^2)=O(1+m^2)$ field operations. Sorting and deduplicating the $s_1\le L'(k-1)$ roots gives $\Bad_1(x)$.

For (b), evaluating $P_x(a)$ and each $\delta_{x'}=P_x(a)-P_{x'}(a)$ by Horner's rule takes $O((L'+1)k)$ field operations. Since $a\in\Good_1(x)$, each $\delta_{x'}$ is nonzero; after computing $\delta_{x'}^{-1}$ and the basis $\delta_{x'}^{-1}U^r,\ldots,\delta_{x'}^{-1}U^{b-1}$, we can list the $2^{b-r}$ elements of $\delta_{x'}^{-1}\ker\pi$ in Gray-code order using one field addition per element. Sorting and deduplicating the resulting $s_2\le L'2^{b-r}$ elements gives $\Bad_2(x,a)$. Part (c) follows from Lemma~\ref{lem:select}.

Finally, multiplying field operations by $O(b^2)$ and comparisons by $O(b)$ gives the bit-operation bound. For the parameters of Section~\ref{sec:dl-codes}, we have $L'=O_t(n^{2t})$, $b=O_t(\log n)$, $m<n$, and $2^{b-r}\le 2n$, so $s=O_t(n^{2t+1})$, the field-arithmetic cost is $O_t(n^{2t+2}\log^2 n)$, and the sorting cost is $O_t(n^{2t+1}\log^2 n)$.
\end{proof}

\subsection{Enumerating competitors and preimages}
\label{subsec:dl-enumeration}

We now record the enumeration procedures underlying the list-size and running-time bounds in Section~\ref{subsec:dl-competitors}. The payload competitors, suffix neighborhoods, and channel preimages are all obtained by enumerating DL balls and filtering by length (with repetitions included in all bounds below).

\begin{lemma}[Enumerating DL balls]
\label{lem:dl-ball}
For every $u\in\{0,1\}^{\ell}$ and integer $q\ge 0$, the ball $B_q(u)$ can be listed deterministically (possibly with repetitions) in $O\bigl(M(\ell,q)(\ell+q+1)\bigr)$ bit operations with at most
\[
  M(\ell,q):=(q+1)\bigl(5(\ell+q)+2\bigr)^q
\]
entries.
\end{lemma}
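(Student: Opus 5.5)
The plan is a breadth-first enumeration by number of operations, with the list size controlled by a count of single-step neighbors. Define levels $\mathcal L_0=(u)$ and, for $j=1,\ldots,q$, let $\mathcal L_j$ be the list obtained by applying every single DL operation to every entry of $\mathcal L_{j-1}$. The output is the concatenation $\mathcal L_0\Vert\cdots\Vert\mathcal L_q$. Completeness follows from the definition of $d_{\mathrm{DL}}$ as a graph distance. Every $v$ with $d_{\mathrm{DL}}(u,v)=j\le q$ is reached by some sequence of exactly $j$ operations, so it appears in $\mathcal L_j$. Conversely, every entry of $\mathcal L_j$ lies at distance at most $j$ from $u$. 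Repetitions are allowed, so no deduplication is needed.

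Next I would bound the number of single-step neighbors of a string of length $\ell'$. The count is at most $2(\ell'+1)$ insertions (a position and a bit), $\ell'$ deletions, $\ell'$ substitutions, and $\ell'-1\le\ell'$ adjacent transpositions, giving at most $5\ell'+2$ in total. Every entry of $\mathcal L_{j-1}$ has length at most $\ell+j-1\le\ell+q$. By induction, therefore,
\[
|\mathcal L_j|\le\bigl(5(\ell+q)+2\bigr)^j\le\bigl(5(\ell+q)+2\bigr)^q .
\]
Summing over the $q+1$ levels gives at most $(q+1)\bigl(5(\ell+q)+2\bigr)^q=M(\ell,q)$ entries. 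The case $q=0$ is consistent with this, since the bound is $1$.

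For the running time, each entry of $\mathcal L_j$ is produced from its parent in $\mathcal L_{j-1}$ by copying the parent with one local modification. This costs $O(\ell+q+1)$ bit operations, because all strings have length at most $\ell+q$. The single-step operations of a given parent are generated by a fixed loop over positions and bits, whose overhead is absorbed into the same bound. Charging this cost to each produced entry, plus $O(\ell+1)$ for writing $u$ itself, gives $O\bigl(M(\ell,q)(\ell+q+1)\bigr)$ bit operations in total.

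The main obstacle is only bookkeeping: on a RAM, the per-entry cost must be linear in the string length. I would handle this by storing each level as an explicit list of strings and generating each child by a single pass over its parent. Storing edit scripts instead is an alternative, but it would add a factor of $q$ to the cost.
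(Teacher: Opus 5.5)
Your proof is correct and follows essentially the same route as the paper's: the paper also enumerates the depth-$q$ tree whose children are obtained by single DL operations, bounds the branching factor by $5(\ell+q)+2$ using the same count of insertions, deletions, substitutions and transpositions, and charges $O(\ell+q+1)$ bit operations to each child. Your level-by-level organization is simply one particular traversal order of that tree.
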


\begin{proof}
We traverse the tree of depth $q$ rooted at $u$, where the children of each node are obtained by applying one DL operation to the string at that node, and output the string at every visited node. Every string in $B_q(u)$ has length at most $\ell+q$, and a string of length $a$ admits $2(a+1)$ insertions, $a$ deletions, $a$ substitutions, and at most $a$ adjacent transpositions, for a total of at most $5a+2\le 5(\ell+q)+2$ operations. Thus the tree has at most $\sum_{j=0}^{q}(5(\ell+q)+2)^j\le M(\ell,q)$ nodes, and generating each child from its parent takes $O(\ell+q+1)$ bit operations.
\end{proof}

For fixed $t\ge 1$ and $n=m+N\ge 2t+1$, recall that $D_t^{\mathrm{DL}}=(2t+1)10^{2t}$. Applying Lemma~\ref{lem:dl-ball} gives the three required lists:
\begin{itemize}[nosep,leftmargin=*]
  \item \emph{Payload competitors:} By Corollary~\ref{cor:dl-competitors}, $\Gamma_{\mathrm{DL}}(x)=(B_{2t}(x)\cap\{0,1\}^{m})\setminus\{x\}$. Listing $B_{2t}(x)$ and keeping strings of length $m$ different from $x$ gives at most $M(m,2t)\le D_t^{\mathrm{DL}}n^{2t}$ entries in $O_t(n^{2t+1})$ bit operations (since $5(m+2t)+2\le 10n$).

  \item \emph{Suffix neighborhoods:} For $s\in\{0,1\}^{N}$, listing $B_{2t}(s)$ and keeping strings of length $N$ gives the $\mathcal R_N$-neighborhood $B_{2t}(s)\cap\{0,1\}^{N}$ in $O_t(N^{2t+1})$ bit operations, with size bounded by $D_N\le D_t^{\mathrm{DL}}N^{2t}$.

  \item \emph{Decoder preimages:} For any received word $y$ with $|y|\in[n-t,n+t]$, listing $B_t(y)$ and keeping strings of length $n$ gives $\mathcal E_n^{-1}(y)=B_t(y)\cap\{0,1\}^{n}$ with $O_t(n^t)$ entries in $O_t(n^{t+1})$ bit operations (while inputs with $|y|\notin[n-t,n+t]$ have $\mathcal E_n^{-1}(y)=\emptyset$).
\end{itemize}
Omitting adjacent transpositions from Lemma~\ref{lem:dl-ball} gives the analogous enumeration for ordinary edit balls $\mathcal E_q(u)$, which we use in Appendix~\ref{subsec:edit-decoding}.

\subsection{Encoding and decoding complexity for DL errors}

Combining Lemma~\ref{lem:bad-sets} with the competitor and preimage enumerations of Section~\ref{subsec:dl-competitors} yields the complexity bounds stated in Theorem~\ref{thm:dl-codes}.

\begin{theorem}[Encoding and decoding complexity]
\label{thm:complexity}
Fix $t\ge 1$ and the parameters of Section~\ref{sec:dl-codes}. After precomputation, the encoder \eqref{eq:protected-tag-encoder} and the generic decoder (Procedure~\ref{proc:generic-dec}) run in the following numbers of bit operations:
\begin{enumerate}[label=(\alph*),leftmargin=*]
  \item encoding: $O_t(n^{2t+2}\log^2 n)+O(\tau_A)$;
  \item payload recovery: $O_t(n^{t+1}\log n)+O_t(n^t)\tau_A$;
  \item full-message recovery: $O_t(n^{2t+2}\log^2 n)+O_t(n^t)\tau_A$.
\end{enumerate}
\end{theorem}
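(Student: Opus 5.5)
The plan is to account separately for each step of the encoder \eqref{eq:protected-tag-encoder} and of Procedure~\ref{proc:generic-dec}, using the enumeration bounds of Section~\ref{subsec:dl-competitors} together with Lemma~\ref{lem:bad-sets}.

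\emph{Encoding (a).} The encoder runs three steps in sequence. First, it lists the competitors $\Gamma_{\mathrm{DL}}(x)$ via Lemma~\ref{lem:dl-ball}; this gives $L'\le D_t^{\mathrm{DL}}n^{2t}$ entries in $O_t(n^{2t+1})$ bit operations. Second, it computes $\Bad_1(x)$, then $P_x(a)$ and $\Bad_2(x,a)$, and selects the two seeds. By Lemma~\ref{lem:bad-sets}, with $b=\lceil\log(2Ln)\rceil$, $r=\lceil\log(2L)\rceil$, $2^{b-r}\le 2n$ and $m<n$, this costs $O_t(n^{2t+2}\log^2 n)$. Here the dominant term is the $L'm^2b^2$ root finding, together with sorting $s=O_t(n^{2t+1})$ elements at $O(b\log s)$ per element. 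Third, it forms the tag, evaluates $\pi$, and applies $A$ once, for a cost of $O(\tau_A)$. Adding these gives (a).

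\emph{Payload recovery (b).} Only Step~1 of Procedure~\ref{proc:generic-dec} is needed. If $|y|\notin[n-t,n+t]$ the decoder rejects after reading $O(n)$ bits. Otherwise it lists $\mathcal E_n^{-1}(y)=B_t(y)\cap\{0,1\}^n$, which has $O_t(n^t)$ entries and takes $O_t(n^{t+1})$ bit operations. For each candidate $z=x'\Vert u$ it performs three operations. It tests membership of $u$ in $A(\{0,1\}^h)$ and inverts, costing $\tau_A$. It evaluates $P_{x'}(a)$ by Horner's rule, which is $k=O(n/\log n)$ field operations of $O(b^2)=O(\log^2 n)$ bits each, so $O(n\log n)$ bit operations. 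Finally, it does one multiplication and one projection, costing $O(\log^2 n)$. Summing over the candidates yields $O_t(n^{t+1}\log n)+O_t(n^t)\tau_A$.

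\emph{Full recovery (c).} Steps~2 and~3 run once, for the single selected $x'$. They consist of listing $\widetilde\Gamma(x')$, computing the two bad sets, and performing two \textsc{Rank} queries, which by Lemma~\ref{lem:bad-sets} cost $O_t(n^{2t+2}\log^2n)$. The well-definedness of Step~3 (each $\delta\ne0$) follows from Lemma~\ref{lem:generic-dec}.

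The main point requiring care is the bookkeeping in Lemma~\ref{lem:bad-sets}. We must check that $(L'+1)(m^2+2^{b-r})b^2=O_t(n^{2t}\cdot n^2\log^2 n)$. This holds because $2^{b-r}\le 2^{\log(2Ln)+1-\log(2L)}\le 4n$. We must also check that the Horner cost per candidate really is $O(n\log n)$ and not $O(n\log^2 n)$. This uses $kb^2\approx (m/b)b^2=mb$, which is valid as long as $b=O(\log n)$. Everything else is direct summation.
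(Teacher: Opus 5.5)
Your proposal is correct and follows the paper's proof essentially step for step. You list $\Gamma_{\mathrm{DL}}(x)$ and invoke Lemma~\ref{lem:bad-sets} for encoding, bound Step~1 of Procedure~\ref{proc:generic-dec} by $O_t(n^t)$ candidates each costing $\tau_A$ plus $O(kb^2)=O_t(n\log n)$ for Horner's rule, and charge Steps~2--3 once via Lemma~\ref{lem:bad-sets}; the only blemish is cosmetic, since your exponent computation actually gives $2^{b-r}\le 2n$ (as you state earlier) rather than $4n$.
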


\begin{proof}
Encoding lists $\Gamma_{\mathrm{DL}}(x)$ in $O_t(n^{2t+1})$ bit operations, computes the tag in $O_t(n^{2t+2}\log^2 n)$ bit operations by Lemma~\ref{lem:bad-sets}, and applies $A$ in time $\tau_A$. For decoding, after rejecting inputs with $|y|\notin[n-t,n+t]$, Step~1 of Procedure~\ref{proc:generic-dec} lists $\mathcal E_n^{-1}(y)=B_t(y)\cap\{0,1\}^n$ ($O_t(n^t)$ entries in $O_t(n^{t+1})$ time) and tests each candidate using time $\tau_A$ for $A^{-1}$ and $O(kb^2)=O_t(n\log n)$ bit operations for Horner's rule, giving (b). Once a candidate passes Step~1, Steps~2 and~3 compute $\Bad_1$ and $\Bad_2$ at most once, which adds $O_t(n^{2t+2}\log^2 n)$ bit operations by Lemma~\ref{lem:bad-sets}.
\end{proof}

\subsection{Reducing tag-code queries for edits}
\label{subsec:edit-decoding}

When the channel introduces only insertions, deletions, and substitutions (i.e., $\dL(c,y)\le t$), we can decode the exact same DL code of Section~\ref{sec:dl-codes} while reducing the number of tag-code queries from $O_t(n^t)$ to $O_t(\log^t n)$. The key observation is that under $t$ ordinary edits, the boundary between $x$ and $s$ shifts by at most $t$ positions in $y$, allowing us to recover candidate tags from the suffix of $y$ before enumerating payload candidates.

\begin{lemma}[Locating the boundary under edits]
\label{lem:edit-boundary}
If $c=x\Vert s$ with $|x|=m$ and $\dL(c,y)\le t$, then there is an index $j\in[\max\{0,m-t\},\min\{m+t,|y|\}]$ such that $\dL(x,y_{[1,j]})\le t$ and $\dL(s,y_{[j+1,|y|]})\le t$.
\end{lemma}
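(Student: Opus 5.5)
We adapt the boundary-marker argument from the proof of Lemma~\ref{lem:dl-projection}, now for ordinary edits and keeping both sides simultaneously. Let $k=\dL(c,y)\le t$ and fix a shortest sequence of edit operations $w_0=x\Vert s,w_1,\ldots,w_k=y$. We maintain a factorization $w_i=\ell_i\Vert\rho_i$ with $(\ell_0,\rho_0)=(x,s)$. Each operation is assigned to exactly one side. A substitution or deletion affects a single symbol, which lies either in $\ell_i$ or in $\rho_i$. An insertion at the marker position is assigned to $\ell_i$ by convention. In the absence of transpositions, no operation straddles the marker; this is exactly what makes the edit case cleaner than the DL case.

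Let $k_L$ and $k_R$ be the numbers of operations assigned to the left and to the right, so that $k_L+k_R=k\le t$. By induction on $i$, $\ell_i$ is obtained from $x$ by the left operations and $\rho_i$ from $s$ by the right operations. Applying the left operations in order to the prefix gives $\dL(x,\ell_k)\le k_L\le t$, and likewise $\dL(s,\rho_k)\le k_R\le t$. Set $j=|\ell_k|$, so that $\ell_k=y_{[1,j]}$ and $\rho_k=y_{[j+1,|y|]}$. This gives both distance bounds.

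For the range of $j$: each left operation changes the length by at most one, so $|j-m|\le k_L\le t$. Since $\ell_k$ is a prefix of $y$, we also have $0\le j\le|y|$. Hence $j\in[\max\{0,m-t\},\min\{m+t,|y|\}]$.

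The only delicate point is the bookkeeping that an operation acting on $w_i$ at a given position translates correctly into an operation on $\ell_i$ or $\rho_i$ with indices shifted by $|\ell_i|$. In particular, a deletion of the last symbol of $\ell_i$ or the first symbol of $\rho_i$ must be handled consistently, and the convention for an insertion exactly at the boundary must be fixed. Once this convention is made, the argument is routine, so I expect no real obstacle beyond stating it cleanly.
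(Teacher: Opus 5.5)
Your proof is correct. It uses the same left/right split of the error budget as the paper, but in a different formalism.

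The paper fixes an optimal edit alignment between $x\Vert s$ and $y$ and cuts it immediately after the column containing the last bit of $x$. Everything is then read off the two halves: their costs $t_1,t_2$ satisfy $t_1+t_2\le t$, they witness $\dL(x,y_{[1,j]})\le t_1$ and $\dL(s,y_{[j+1,|y|]})\le t_2$, and the length imbalance of the left half gives $|j-m|\le t_1$. You instead run the marker-tracking argument of Lemma~\ref{lem:dl-projection} on an operation sequence, with $k_L,k_R$ in the roles of $t_1,t_2$. The conclusions are identical, including the additive bound $k_L+k_R\le t$.

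The alignment cut is static, so it avoids the bookkeeping you flag. There is no index shifting, and the convention for an insertion exactly at the boundary is settled by where the cut is placed: inserted $y$-symbols lying between the last $x$-column and the first $s$-column simply fall on the right.

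Your dynamic version has the advantage that it extends with the same structure to adjacent transpositions. A transposition straddling the marker acts as one substitution on each side and preserves $|\ell_i|$. So if $d_{\mathrm{DL}}(c,y)\le t$, you still obtain $j$ with $|j-m|\le t$ and both one-sided $d_{\mathrm{DL}}$ distances at most $t$; only the additivity of the two costs is lost. Cutting an alignment does not handle such crossing moves as cleanly.
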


\begin{proof}
Fix an optimal edit alignment between $c=x\Vert s$ and $y$ of cost at most $t$, and cut it immediately after the column containing the last bit of $x$. If $j$ bits of $y$ lie to the left of the cut and the two halves of the alignment have costs $t_1$ and $t_2$, then $t_1+t_2\le t$, $\dL(x,y_{[1,j]})\le t_1$, $\dL(s,y_{[j+1,|y|]})\le t_2$, and $|j-m|\le t_1\le t$.
\end{proof}

Below, $\mathcal E_t(u)=\{v:\dL(u,v)\le t\}$ denotes the edit ball of radius $t$, which can be listed with $O_t((|u|+1)^t)$ entries in $O_t((|u|+1)^{t+1})$ time by omitting transpositions from the enumeration of Section~\ref{subsec:dl-competitors}.

\begin{procedure}[Decoder for ordinary edits]
\label{proc:dec}
Given a string $y$ with $|y|\in[n-t,n+t]$:
\begin{enumerate}[label=\arabic*.,leftmargin=*]
  \item \emph{Candidate tags.} Initialize $\mathcal T=\emptyset$. For each $j\in[\max\{0,m-t\},\min\{m+t,|y|\}]$, scan $u\in\mathcal E_t(y_{[j+1,|y|]})\cap\{0,1\}^N$; at the first $u\in A(\{0,1\}^h)$, add $(j,A^{-1}(u),u)$ to $\mathcal T$ and move to the next $j$.

  \item \emph{Candidate payloads.} For each $(j,T,u)\in\mathcal T$, write $T=\varphi^{-1}(a)\Vert\varphi^{-1}(\lambda)\Vert\eta$ and scan $x'\in\mathcal E_t(y_{[1,j]})\cap\{0,1\}^m$. If a candidate satisfies both $\pi(\lambda P_{x'}(a))=\eta$ and $\dL(x'\Vert u,y)\le t$, retain $(x',a,\lambda)$ and proceed to Step~3. (To recover only the payload, output $x'$ instead.) Return \emph{failure} only after all triples in $\mathcal T$ and their associated payload candidates have been exhausted.

  \item \emph{Seed ranks.} Run Steps~2 and~3 of Procedure~\ref{proc:generic-dec} on $(x',a,\lambda)$ using $\Gamma_{\mathrm{DL}}(x')$.
\end{enumerate}
\end{procedure}

\begin{lemma}[Correctness and cost of the edit decoder]
\label{lem:dec-correct}
For any codeword $c=\Enc(x,\sigma,\tau)$ of the DL code and any $y$ with $\dL(c,y)\le t$, Procedure~\ref{proc:dec} outputs $(x,\sigma,\tau)$ using $O_t(\log^t n)$ tag-code queries, $O_t(n^{t+1}\log n)+O_t(\log^t n)\tau_A$ bit operations for payload recovery, and $O_t(n^{2t+2}\log^2 n)+O_t(\log^t n)\tau_A$ bit operations for full-message recovery.
\end{lemma}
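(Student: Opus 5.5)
The plan has two parts: correctness of Procedure~\ref{proc:dec}, and a count of its cost.

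\textbf{Correctness.} Let $c=x\Vert s$ with $s=A(T)$ and $T=\Tag(x,\sigma,\tau)$, and suppose $\dL(c,y)\le t$.

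First I show the transmitted triple is found. By Lemma~\ref{lem:edit-boundary} there is an index $j^\ast$ in the scanned range with $\dL(x,y_{[1,j^\ast]})\le t$ and $\dL(s,y_{[j^\ast+1,|y|]})\le t$. Hence $s$ lies in the scanned set $\mathcal E_t(y_{[j^\ast+1,|y|]})\cap\{0,1\}^N$. The step I expect to need care is that Step~1 stops at the \emph{first} codeword $u$ of $A$ found for index $j^\ast$, and this $u$ must equal $s$. Any such $u$ satisfies $\dL(u,s)\le 2t$ by the triangle inequality, so $d_{\mathrm{DL}}(u,s)\le\dL(u,s)\le 2t$. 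The tag code separates distinct tags under the relation $d_{\mathrm{DL}}\le 2t$ by \eqref{eq:tag-code-separation}, so $u=s$. Therefore $(j^\ast,T,s)\in\mathcal T$.

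Next I show that no wrong candidate is accepted, whatever triple it comes from. Suppose some $(j,T',u)\in\mathcal T$ and some $x'$ pass both checks in Step~2. Then $\dL(x'\Vert u,y)\le t$, so $d_{\mathrm{DL}}(x'\Vert u,c)\le\dL(x'\Vert u,c)\le 2t$. By Lemma~\ref{lem:dl-projection}, $d_{\mathrm{DL}}(u,s)\le 2t$, and separation of the tag code gives $u=s$ and $T'=T$. If $x'\ne x$, then Corollary~\ref{cor:dl-competitors} places $x'$ in $\Gamma_{\mathrm{DL}}(x)$. The hash test would then contradict $\lambda\in\Good_2(x,a)$, so $x'=x$.

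Finally, the candidate $x$ itself is scanned under $j^\ast$ and passes both tests. So Step~2 is reached with $(x,a,\lambda)$ no matter the order of iteration. Step~3 then recovers $(\sigma,\tau)$ exactly as in Lemma~\ref{lem:generic-dec}.

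\textbf{Cost.} Step~1 ranges over $O(t)$ values of $j$. Each suffix $y_{[j+1,|y|]}$ has length $N+O(t)=O_t(\log n)$, so each edit ball has $O_t(\log^t n)$ entries. This gives $O_t(\log^t n)$ tag-code queries, and Step~2 makes no further queries.

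Step~2 handles $|\mathcal T|=O(t)$ triples. For each, the edit ball around $y_{[1,j]}$ has $O_t(n^t)$ entries, listed in $O_t(n^{t+1})$ time. Each entry costs $O_t(n\log n)$ bit operations for Horner's rule. The edit-distance check $\dL(x'\Vert u,y)\le t$ costs $O_t(n)$ with a banded dynamic program, and it is needed only for candidates that pass the hash test. Together these give $O_t(n^{t+1}\log n)$ for payload recovery.

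Step~3 adds $O_t(n^{2t+2}\log^2 n)$ by Lemma~\ref{lem:bad-sets}, which yields the stated bound for full-message recovery.
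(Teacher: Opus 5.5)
Your proposal is correct and follows essentially the same route as the paper. Lemma~\ref{lem:edit-boundary} plus tag-code separation shows that Step~1 records $(j^\ast,T,A(T))$, the true payload passes Step~2, any accepted pair must equal $c$, and your cost count matches the paper's. The only difference is that you re-derive the uniqueness of the accepted candidate directly from Lemma~\ref{lem:dl-projection}, the tag separation property, and $\lambda\in\Good_2(x,a)$, whereas the paper observes that such a candidate passes Step~1 of Procedure~\ref{proc:generic-dec} and invokes Lemma~\ref{lem:generic-dec}.
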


\begin{proof}
Write $c=x\Vert A(T)$. By Lemma~\ref{lem:edit-boundary}, there is $j^\ast$ with $A(T)\in\mathcal E_t(y_{[j^\ast+1,|y|]})$ and $x\in\mathcal E_t(y_{[1,j^\ast]})$. For any fixed $j$, two tag-codewords $u,u'\in\mathcal E_t(y_{[j+1,|y|]})\cap A(\{0,1\}^h)$ satisfy $d_{\mathrm{DL}}(u,u')\le\dL(u,u')\le 2t$ and hence $u=u'$. Thus Step~1 adds $(j^\ast,T,A(T))$ to $\mathcal T$, and $x$ passes Step~2. Conversely, any $(x',u)$ passing Step~2 satisfies $d_{\mathrm{DL}}(x'\Vert u,y)\le\dL(x'\Vert u,y)\le t$ and passes Step~1 of Procedure~\ref{proc:generic-dec}, so Lemma~\ref{lem:generic-dec} gives $x'\Vert u=c$ and correctness of Step~3.

For the running time, Step~1 checks at most $2t+1$ values of $j$; for each $j$, $|y_{[j+1,|y|]}|\le N+2t=O_t(\log n)$, so $\mathcal E_t(y_{[j+1,|y|]})$ has $O_t(\log^t n)$ entries and Step~1 makes $O_t(\log^t n)$ queries to $A$. For each of the at most $2t+1$ triples in $\mathcal T$, Step~2 enumerates $O_t(n^t)$ candidates $x'$, spending $O_t(n\log n)$ bit operations for Horner's rule and $O_t(n)$ bit operations to test $\dL(x'\Vert u,y)\le t$ via banded dynamic programming~\cite{Ukk85}. Finally, Step~3 takes $O_t(n^{2t+2}\log^2 n)$ bit operations by Lemma~\ref{lem:bad-sets}.
\end{proof}

\begin{table}[t]
\centering
\small
\caption{Summary of bit-operation costs for fixed $t$ and the parameters of Section~\ref{sec:dl-codes}.}
\label{tab:costs}
\begin{tabular}{@{}p{0.62\linewidth}p{0.33\linewidth}@{}}
\toprule
Operation & Bit operations \\
\midrule
Enumerate $\Gamma_{\mathrm{DL}}(x)$ & $O_t(n^{2t+1})$ \\
Find roots of difference polynomials (Procedure~\ref{proc:roots}) & $O_t(n^{2t+2}\log^2 n)$ \\
Evaluate differences and enumerate scaled copies $\delta_{x'}^{-1}\ker\pi$ & $O_t(n^{2t+1}\log^2 n)$ \\
Sort and deduplicate $\Bad_1(x)$ and $\Bad_2(x,a)$ & $O_t(n^{2t+1}\log^2 n)$ \\
Select seeds or recover ranks (Procedure~\ref{proc:select}) & $O_t(n^{2t+1}\log n)$ \\
\midrule
Enumerate preimages $\mathcal E_n^{-1}(y)$ & $O_t(n^{t+1})$ \\
Evaluate candidate hashes (Horner's rule) & $O_t(n^{t+1}\log n)$ \\
Tag-code queries: generic DL decoder (Procedure~\ref{proc:generic-dec}) & $O_t(n^t)\tau_A$ \\
Tag-code queries: edit decoder (Procedure~\ref{proc:dec}) & $O_t(\log^t n)\tau_A$ \\
\midrule
Precompute irreducible polynomial $f_b$~\cite{Sho90} & $\polylog(n)$ \\
Precompute tag code $A$: greedy / bootstrapped & $n^{6t+2+o(1)}$ / $\polylog(n)$ \\
\bottomrule
\end{tabular}
\end{table}

\section{Bootstrapping the tag code}
\label{sec:bootstrap}

The greedy tag-code construction in Section~\ref{sec:dl-codes} requires $n^{6t+2+o(1)}$ precomputation time. Since the tag itself has length $h=O_t(\log n)$ and the DL channel is self-splittable, we can reduce the precomputation time to $\operatorname{polylog}(n)$ by using our own greedy code construction at block length $N_h=O_t(\log n)$ as the tag code. We first state this one-step bootstrapping reduction for a general code family and then specialize it to the DL code family.

\subsection{A general bootstrapping lemma}

Given a family of codes of block length $\ell$ and message length $M(\ell)$, we obtain a code for $h$-bit tags by choosing the smallest $\ell>h$ with $M(\ell)\ge h$ and padding the $h$-bit tag with trailing zeroes. To invert this map and test membership in its image, we run the decoder of the length-$\ell$ code, check that the recovered message ends with zeroes, and re-encode.

\begin{lemma}[Bootstrapping]
\label{lem:bootstrap}
Let $\mathcal E=(\mathcal E_\ell)_{\ell\ge 1}$ be a channel with $u\in\mathcal E_\ell(u)$ for all $u\in\{0,1\}^{\ell}$. Suppose that for every $\ell\ge\ell_0$, there is a code $\mathcal C_\ell\subseteq\{0,1\}^{\ell}$ correcting $\mathcal E_\ell$, an injective encoder $\Enc_\ell:\{0,1\}^{M(\ell)}\to\mathcal C_\ell$, and a decoder $\Dec_\ell$ that halts on every input and returns $\mu$ on every valid channel output $y\in\mathcal E_\ell(\Enc_\ell(\mu))$. Assume that given $\ell$, one can compute $M(\ell)$ in time $\theta(\ell)$, perform the precomputation for $\mathcal C_\ell$ in time $P(\ell)$, and then run $\Enc_\ell$ and $\Dec_\ell$ in time $\rm Time(\ell)$, where $\theta,P,\rm Time,R$ are nondecreasing functions satisfying $\ell-M(\ell)\le R(\ell)\le\ell/2$ for all $\ell\ge\ell_0$.

Then for every $h\ge\ell_0$, there is an integer $N_h$ with $h<N_h\le 2h$ and $N_h-h\le 1+R(2h)$, together with an injective map $A_h:\{0,1\}^{h}\to\{0,1\}^{N_h}$ whose images are pairwise non-confusable under $\mathcal E_{N_h}$. Finding $N_h$ and precomputing $A_h$ takes $O\bigl(h\theta(2h)+P(2h)\bigr)$ time, after which evaluating $A_h$, testing membership in $A_h(\{0,1\}^{h})$, and computing $A_h^{-1}$ each take $O({\rm Time}(2h)+h)$ time.
\end{lemma}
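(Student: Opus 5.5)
The plan is to take $\ell$ to be the smallest integer exceeding $h$ with $\ell\ge\ell_0$ and $M(\ell)\ge h$, and to set $N_h=\ell$. First we show that $\ell\le 2h$. For every $\ell\ge\ell_0$ we have $M(\ell)\ge \ell-R(\ell)\ge \ell/2$. Hence $\ell=2h$ qualifies: $2h>h$, $2h\ge\ell_0$ because $h\ge\ell_0$, and $M(2h)\ge h$. The smallest qualifying $\ell$ is therefore at most $2h$.

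To bound $N_h-h$, look at $\ell-1$. If $\ell-1>h$, then $\ell-1$ fails to qualify. Since $\ell-1>h\ge\ell_0$, the only condition it can fail is $M(\ell-1)\ge h$, so $M(\ell-1)<h$. Combined with $M(\ell-1)\ge \ell-1-R(\ell-1)$, this gives $\ell-1<h+R(\ell-1)$. By monotonicity $R(\ell-1)\le R(2h)$, so $\ell\le h+R(2h)$. If instead $\ell-1\le h$, then $\ell=h+1$, and $N_h-h=1$. In both cases $N_h-h\le 1+R(2h)$.

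The map is $A_h(T)=\Enc_\ell(T\Vert 0^{M(\ell)-h})$, which is injective because $\Enc_\ell$ is. Its images are pairwise non-confusable under $\mathcal E_\ell$ because they are distinct codewords of $\mathcal C_\ell$, which corrects $\mathcal E_\ell$. To find $\ell$ we scan $\ell=h+1,\dots,2h$ and compute $M(\ell)$ at each step, costing at most $h\,\theta(2h)$ by monotonicity of $\theta$. Precomputation for $\mathcal C_\ell$ then costs $P(\ell)\le P(2h)$.

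For evaluation, membership testing and inversion on an input $u\in\{0,1\}^{\ell}$, we proceed as follows.
\begin{enumerate}
\item Run $\Dec_\ell(u)$, which halts on every input.
\item If it returns failure, or a message whose last $M(\ell)-h$ bits are not all zero, reject.
\item Otherwise re-encode the message and accept if and only if the result equals $u$; on acceptance, output the first $h$ bits as $A_h^{-1}(u)$.
\end{enumerate}

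Soundness of this test is immediate, since acceptance certifies $u=A_h(T)$. For completeness, suppose $u=A_h(T)$. Then $u\in\mathcal E_\ell(u)$ by the assumption $u\in\mathcal E_\ell(u)$, so $u$ is a valid channel output of the codeword it equals. The decoder must then return $T\Vert 0^{M(\ell)-h}$, and every check passes. Each operation uses one decoder or encoder call plus $O(h)$ work for padding and comparisons, for a total of $O(\mathrm{Time}(2h)+h)$ by monotonicity.

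The main point needing care is the bound $N_h-h\le 1+R(2h)$: it depends on the minimality argument at $\ell-1$ and on the condition $\ell-1\ge\ell_0$, which holds precisely because $h\ge\ell_0$. Everything else is routine bookkeeping using the monotonicity of $\theta$, $P$, $\mathrm{Time}$ and $R$.
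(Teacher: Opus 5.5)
Your proof is correct and follows essentially the same route as the paper: the same minimal choice of $N_h$ in $\{h+1,\dots,2h\}$, the same padded encoding $A_h(T)=\Enc_{N_h}(T\Vert 0^{M(N_h)-h})$, and the same decode, check-trailing-zeros, re-encode membership test. One small point: from $\ell-1<h+R(\ell-1)$ alone you only get $\ell<h+1+R(2h)$, and the sharper $\ell\le h+R(2h)$ needs the integrality step $M(\ell-1)\le h-1$; either version already gives $N_h-h\le 1+R(2h)$.
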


\begin{proof}
Let $N_h=\min\{\ell\in\{h+1,\ldots,2h\}:M(\ell)\ge h\}$, which is well-defined since $M(2h)\ge 2h-R(2h)\ge h$. If $N_h=h+1$, then $N_h-h=1$; otherwise $M(N_h-1)<h$, so $N_h-h<1+(N_h-1-M(N_h-1))\le 1+R(N_h-1)\le 1+R(2h)$.

For $T\in\{0,1\}^{h}$, define $A_h(T)=\Enc_{N_h}\bigl(T\Vert 0^{M(N_h)-h}\bigr)$. Since $\Enc_{N_h}$ is injective and its image corrects $\mathcal E_{N_h}$, $A_h$ is injective and distinct images of $A_h$ are non-confusable under $\mathcal E_{N_h}$. To test whether $u\in\{0,1\}^{N_h}$ lies in $A_h(\{0,1\}^{h})$ and compute $A_h^{-1}(u)$, we run $\Dec_{N_h}(u)$, check that it returns a message $\mu\in\{0,1\}^{M(N_h)}$ ending with $M(N_h)-h$ zeroes and satisfying $\Enc_{N_h}(\mu)=u$, and if so output $\mu_{[1,h]}$. If $u=A_h(T)$, then $u\in\mathcal E_{N_h}(u)$, so $\Dec_{N_h}(u)$ returns $T\Vert 0^{M(N_h)-h}$ and the test accepts with output $T$; conversely, if the re-encoding check $\Enc_{N_h}(\mu)=u$ passes, then $u=A_h(\mu_{[1,h]})$ regardless of how $\Dec_{N_h}$ behaves on invalid inputs.

Finally, finding $N_h$ by testing $\ell=h+1,\ldots,2h$ takes $O(h\theta(2h))$ time, precomputing $\mathcal C_{N_h}$ takes at most $P(2h)$ time, and each operation of $A_h$ makes at most one call to $\Enc_{N_h}$ and one call to $\Dec_{N_h}$ in time $O({\rm Time}(2h)+h)$.
\end{proof}

\subsection{Application to DL codes}
\label{subsec:bootstrap-dl}

Fix $t\ge 1$, and let $(\mathcal C_\ell)_{\ell\ge\ell_0}$ be the greedy DL code family of Theorem~\ref{thm:dl-codes}(b) at block length $\ell$, decoded with Procedure~\ref{proc:generic-dec}. At length $\ell$, the code uses parameters $L_\ell=D_t^{\mathrm{DL}}\ell^{2t}$, $b_\ell=\lceil\log(2L_\ell\ell)\rceil$, $r_\ell=\lceil\log(2L_\ell)\rceil$, $h_\ell=2b_\ell+r_\ell$, and suffix length $N_{\mathrm g}(\ell)=h_\ell+\lceil 2t\log(20h_\ell)+\log(2t+1)\rceil$. Its message length $M(\ell)=\ell-N_{\mathrm g}(\ell)+2b_\ell-2$ is a closed-form function computable in $\theta(\ell)=\operatorname{polylog}(\ell)$ time, with redundancy $\ell-M(\ell)\le R(\ell):=2t\log\ell+2t\log\log\ell+c_t$. Enlarge $\ell_0$ so that the greedy construction is valid and $R(\ell)\le\ell/2$ for every $\ell\ge\ell_0$. For sufficiently large $n$, the outer tag length satisfies $h\ge\ell_0$. By Theorem~\ref{thm:dl-codes}(b) and Theorem~\ref{thm:complexity}, we may take the nondecreasing bounds
\[
  P(\ell)=C_t\ell^{6t+2}(\log\ell)^{4t+1},
  \qquad
  {\rm Time}(\ell)=C_t\ell^{2t+2}\log^2\ell
\]
for a sufficiently large constant $C_t$. In particular, $P(\ell)=\ell^{6t+2+o(1)}$.

Applying Lemma D.1 to this family for an outer tag of length $h=(6t+2)\log n+O_t(1)$ yields a tag code $A_h:\{0,1\}^h\to\{0,1\}^{N_h}$ satisfying~\eqref{eq:tag-code-separation} for $N=N_h$. Its overhead satisfies
\[
  N_h-h\le 1+R(2h)=2t\log h+2t\log\log h+O_t(1)=2t\log\log n+O_t(\log\log\log n).
\]
Because $h=O_t(\log n)$, the precomputation time for $A_h$ (whose own inner tag code is a greedy table of length $O_t(\log h)=O_t(\log\log n)$) is $O\bigl(h\theta(2h)+P(2h)\bigr)=(\log n)^{6t+2+o(1)}=\operatorname{polylog}(n)$, and each operation of $A_h$ takes time $\tau_A=O({\rm Time}(2h)+h)=O_t(\log^{2t+2} n\cdot\log^2\log n)=\operatorname{polylog}(n)$.  Substituting $A_h$ and $\tau_A=\operatorname{polylog}(n)$ into Lemma~\ref{lem:tag-to-code} and Theorem~\ref{thm:complexity} gives the bootstrapped code of Theorem~\ref{thm:dl-codes}(a).

\section*{AI methodology statement}
We used generative AI tools (OpenAI GPT-6 Astra and Google Gemini Flash 3.8) to assist in the research and preparation of this manuscript. Specifically, the tools were used to explore and streamline the boundary-projection and competitor-ball framework in Section~\ref{sec:dl-codes}, as well as for language polishing, clarity improvements, and TikZ typesetting across the manuscript. The authors conceived the core seed recycling mechanism, independently checked and established the correctness and rigor of all mathematical statements, proofs, and analyses, and verified the accuracy and originality of all content, including references. The authors take full responsibility for the entire submission.

\bibliographystyle{alpha}
\bibliography{refs}

@article{Lev66,
  author  = {V. I. Levenshtein},
  title   = {Binary codes capable of correcting deletions, insertions, and reversals},
  journal = {Soviet Physics Doklady},
  volume  = {10},
  number  = {8},
  pages   = {707--710},
  year    = {1966}
}

@article{VT65,
  author  = {Varshamov, R. R. and Tenengol'ts, G. M.},
  title   = {Code Correcting Single Asymmetric Errors},
  journal = {Avtomatika i Telemekhanika},
  year    = {1965},
  volume  = {26},
  number  = {2},
  pages   = {288--292},
  note    = {In Russian},
  url     = {http://mi.mathnet.ru/at11293}
}

@inproceedings{Lev02,
  author    = {V. I. Levenshtein},
  title     = {Bounds for deletion/insertion correcting codes},
  booktitle = {Proc. IEEE International Symposium on Information Theory (ISIT)},
  year      = {2002}
}

@article{ullman1966near,
  title={Near-optimal, single-synchronization-error-correcting code},
  author={Ullman, J},
  journal={IEEE Transactions on Information Theory},
  volume={12},
  number={4},
  pages={418--424},
  year={1966},
  publisher={IEEE}
}

@article{ABGHK24,
  author  = {N. Alon and G. Bourla and B. Graham and X. He and N. Kravitz},
  title   = {Logarithmically larger deletion codes of all distances},
  journal = {IEEE Transactions on Information Theory},
  volume  = {70},
  number  = {1},
  pages   = {125--130},
  year    = {2024}
}

@misc{EnGad26,
  author = {E. En Gad},
  title  = {Polynomially larger deletion codes by linear hashing of substring counts},
  year   = {2026},
  note   = {arXiv:2609.19493v2}
}

@article{BGZ18,
  author  = {J. Brakensiek and V. Guruswami and S. Zbarsky},
  title   = {Efficient low-redundancy codes for correcting multiple deletions},
  journal = {IEEE Transactions on Information Theory},
  volume  = {64},
  number  = {5},
  pages   = {3403--3410},
  year    = {2018},
  note    = {Preliminary version in SODA 2016}
}

@article{GS19,
  author  = {R. Gabrys and F. Sala},
  title   = {Codes correcting two deletions},
  journal = {IEEE Transactions on Information Theory},
  volume  = {65},
  number  = {2},
  pages   = {965--974},
  year    = {2019}
}

@article{SRB20,
  author  = {J. Sima and N. Raviv and J. Bruck},
  title   = {Two deletion correcting codes from indicator vectors},
  journal = {IEEE Transactions on Information Theory},
  volume  = {66},
  number  = {4},
  pages   = {2375--2391},
  year    = {2020}
}

@article{GH21,
  author  = {V. Guruswami and J. H{\aa}stad},
  title   = {Explicit two-deletion codes with redundancy matching the existential bound},
  journal = {IEEE Transactions on Information Theory},
  volume  = {67},
  number  = {10},
  pages   = {6384--6394},
  year    = {2021}
}

@inproceedings{CJLW18,
  author    = {K. Cheng and Z. Jin and X. Li and K. Wu},
  title     = {Deterministic document exchange protocols, and almost optimal binary codes for edit errors},
  booktitle = {Proc. 59th IEEE Symposium on Foundations of Computer Science (FOCS)},
  pages     = {200--211},
  year      = {2018}
}

@inproceedings{Hae19,
  author    = {B. Haeupler},
  title     = {Optimal document exchange and new codes for insertions and deletions},
  booktitle = {Proc. 60th IEEE Symposium on Foundations of Computer Science (FOCS)},
  pages     = {334--347},
  year      = {2019}
}

@inproceedings{HS17,
  author    = {B. Haeupler and A. Shahrasbi},
  title     = {Synchronization strings: codes for insertions and deletions approaching the {S}ingleton bound},
  booktitle = {Proc. 49th ACM Symposium on Theory of Computing (STOC)},
  pages     = {33--46},
  year      = {2017}
}

@inproceedings{GHS20,
  author    = {V. Guruswami and B. Haeupler and A. Shahrasbi},
  title     = {Optimally resilient codes for list-decoding from insertions and deletions},
  booktitle = {Proc. 52nd ACM Symposium on Theory of Computing (STOC)},
  pages     = {524--537},
  year      = {2020}
}

@inproceedings{GHL21,
  author    = {V. Guruswami and X. He and R. Li},
  title     = {The zero-rate threshold for adversarial bit-deletions is less than 1/2},
  booktitle = {Proc. 62nd IEEE Symposium on Foundations of Computer Science (FOCS)},
  year      = {2021}
}

@article{SZ99,
  author  = {L. J. Schulman and D. Zuckerman},
  title   = {Asymptotically good codes correcting insertions, deletions, and transpositions},
  journal = {IEEE Transactions on Information Theory},
  volume  = {45},
  number  = {7},
  pages   = {2552--2557},
  year    = {1999}
}

@article{SB21,
  author  = {J. Sima and J. Bruck},
  title   = {On optimal $k$-deletion correcting codes},
  journal = {IEEE Transactions on Information Theory},
  volume  = {67},
  number  = {6},
  pages   = {3360--3375},
  year    = {2021}
}

@inproceedings{SGB20syndrome,
  author    = {J. Sima and R. Gabrys and J. Bruck},
  title     = {Syndrome compression for optimal redundancy codes},
  booktitle = {Proc. IEEE International Symposium on Information Theory (ISIT)},
  pages     = {751--756},
  year      = {2020}
}

@inproceedings{SGB20opt,
  author    = {J. Sima and R. Gabrys and J. Bruck},
  title     = {Optimal systematic $t$-deletion correcting codes},
  booktitle = {Proc. IEEE International Symposium on Information Theory (ISIT)},
  pages     = {769--774},
  year      = {2020}
}

@inproceedings{SPCH21,
  author    = {W. Song and N. Polyanskii and K. Cai and X. He},
  title     = {On multiple-deletion multiple-substitution correcting codes},
  booktitle = {Proc. IEEE International Symposium on Information Theory (ISIT)},
  pages     = {2655--2660},
  year      = {2021}
}

@article{SPCH22,
  author  = {W. Song and N. Polyanskii and K. Cai and X. He},
  title   = {Systematic codes correcting multiple-deletion and multiple-substitution errors},
  journal = {IEEE Transactions on Information Theory},
  volume  = {68},
  number  = {10},
  pages   = {6402--6416},
  year    = {2022}
}

@misc{LGF25,
  author = {Y. Li and R. Gabrys and F. Farnoud},
  title  = {Constructing low-redundancy codes via distributed graph coloring},
  year={2025},
  note={arXiv:2512.04197v1}
}

@article{SWWY23,
  author  = {I. Smagloy and L. Welter and A. Wachter-Zeh and E. Yaakobi},
  title   = {Single-deletion single-substitution correcting codes},
  journal = {IEEE Transactions on Information Theory},
  volume  = {69},
  number  = {12},
  pages   = {7659--7671},
  year    = {2023}
}

@misc{SG24,
  author = {Y. Sun and G. Ge},
  title  = {Binary codes for correcting two edits},
  year   = {2024},
  note   = {arXiv:2403.11766}
}

@misc{PZ24,
  author = {Y. Pi and Z. Zhang},
  title  = {Two-insertion/deletion/substitution correcting codes},
  year   = {2024},
  note   = {arXiv:2401.11231}
}

@article{Lin92,
  author  = {N. Linial},
  title   = {Locality in distributed graph algorithms},
  journal = {SIAM Journal on Computing},
  volume  = {21},
  number  = {1},
  pages   = {193--201},
  year    = {1992}
}

@article{Ber70,
  author  = {E. R. Berlekamp},
  title   = {Factoring polynomials over large finite fields},
  journal = {Mathematics of Computation},
  volume  = {24},
  number  = {111},
  pages   = {713--735},
  year    = {1970}
}

@article{Sho90,
  author  = {V. Shoup},
  title   = {New algorithms for finding irreducible polynomials over finite fields},
  journal = {Mathematics of Computation},
  volume  = {54},
  number  = {189},
  pages   = {435--447},
  year    = {1990}
}

@article{Ukk85,
  author  = {E. Ukkonen},
  title   = {Algorithms for approximate string matching},
  journal = {Information and Control},
  volume  = {64},
  number  = {1--3},
  pages   = {100--118},
  year    = {1985}
}

@article{CGK12,
  author  = {G. M. Church and Y. Gao and S. Kosuri},
  title   = {Next-generation digital information storage in {DNA}},
  journal = {Science},
  volume  = {337},
  number  = {6102},
  pages   = {1628},
  year    = {2012}
}

@article{PHT08,
  author  = {S. S. P. Parkin and M. Hayashi and L. Thomas},
  title   = {Magnetic domain-wall racetrack memory},
  journal = {Science},
  volume  = {320},
  number  = {5873},
  pages   = {190--194},
  year    = {2008}
}

@article{Mit09,
  author  = {M. Mitzenmacher},
  title   = {A survey of results for deletion channels and related synchronization channels},
  journal = {Probability Surveys},
  volume  = {6},
  pages   = {1--33},
  year    = {2009}
}

@inproceedings{CPSV00,
  author    = {G. Cormode and M. Paterson and S. C. Sahinalp and U. Vishkin},
  title     = {Communication complexity of document exchange},
  booktitle = {Proc. 11th ACM-SIAM Symposium on Discrete Algorithms (SODA)},
  pages     = {197--206},
  year      = {2000}
}

@article{Dam64,
  author  = {F. J. Damerau},
  title   = {A technique for computer detection and correction of spelling errors},
  journal = {Communications of the ACM},
  volume  = {7},
  number  = {3},
  pages   = {171--176},
  year    = {1964}
}

@article{GYM18,
  author  = {R. Gabrys and E. Yaakobi and O. Milenkovic},
  title   = {Codes in the {D}amerau distance for deletion and adjacent transposition correction},
  journal = {IEEE Transactions on Information Theory},
  volume  = {64},
  number  = {4},
  year    = {2018}
}

@misc{WVT23,
  author = {S. Wang and V. K. Vu and V. Y. F. Tan},
  title  = {Codes for correcting asymmetric adjacent transpositions and deletions},
  year   = {2023},
  note   = {arXiv:2301.11680. Preliminary versions in ITW 2022 and ISIT 2023}
}

@article{Ten84,
  author  = {G. Tenengolts},
  title   = {Nonbinary codes, correcting single deletion or insertion},
  journal = {IEEE Transactions on Information Theory},
  volume  = {30},
  number  = {5},
  pages   = {766--769},
  year    = {1984}
}

@article{NCS24,
  author  = {T. T. Nguyen and K. Cai and P. H. Siegel},
  title   = {A new version of $q$-ary {V}arshamov--{T}enengolts codes with more efficient encoders: the differential {VT} codes and the differential shifted {VT} codes},
  journal = {IEEE Transactions on Information Theory},
  volume  = {70},
  number  = {10},
  year    = {2024}
}

@article{SC23,
  author  = {W. Song and K. Cai},
  title   = {Non-binary two-deletion correcting codes and burst-deletion correcting codes},
  journal = {IEEE Transactions on Information Theory},
  volume  = {69},
  number  = {10},
  pages   = {6470--6484},
  year    = {2023}
}

@misc{LTX24,
  author = {S. Liu and I. Tjuawinata and C. Xing},
  title  = {Explicit construction of $q$-ary 2-deletion correcting codes with low redundancy},
  year   = {2023},
  note   = {arXiv:2306.02868. Also in IEEE Transactions on Information Theory, 2024}
}

@inproceedings{SGB20optq,
  author    = {J. Sima and R. Gabrys and J. Bruck},
  title     = {Optimal codes for the $q$-ary deletion channel},
  booktitle = {Proc. IEEE International Symposium on Information Theory (ISIT)},
  pages     = {740--745},
  year      = {2020}
}

@article{LN16,
  author  = {T. A. Le and H. D. Nguyen},
  title   = {New multiple insertion/deletion correcting codes for non-binary alphabets},
  journal = {IEEE Transactions on Information Theory},
  volume  = {62},
  number  = {5},
  pages   = {2682--2693},
  year    = {2016}
}

@misc{STV25,
  author = {M. Schaller and B. Toesca and V. K. Vu},
  title  = {A new construction of non-binary deletion correcting codes and their decoding},
  year   = {2025},
  note   = {arXiv:2501.13534. Preliminary version in ISIT 2025}
}

@article{JMSB09,
  author  = {A. Jiang and R. Mateescu and M. Schwartz and J. Bruck},
  title   = {Rank modulation for flash memories},
  journal = {IEEE Transactions on Information Theory},
  volume  = {55},
  number  = {6},
  pages   = {2659--2673},
  year    = {2009}
}

@article{Lev92,
  author  = {V. I. Levenshtein},
  title   = {On perfect codes in deletion and insertion metric},
  journal = {Discrete Mathematics and Applications},
  volume  = {2},
  number  = {3},
  pages   = {241--258},
  year    = {1992}
}

@article{FSM13,
  author  = {F. Farnoud and V. Skachek and O. Milenkovic},
  title   = {Error-correction in flash memories via codes in the {U}lam metric},
  journal = {IEEE Transactions on Information Theory},
  volume  = {59},
  number  = {5},
  pages   = {3003--3020},
  year    = {2013}
}

@article{GYFSBD16,
  author  = {R. Gabrys and E. Yaakobi and F. Farnoud and F. Sala and J. Bruck and L. Dolecek},
  title   = {Codes correcting erasures and deletions for rank modulation},
  journal = {IEEE Transactions on Information Theory},
  volume  = {62},
  number  = {1},
  pages   = {136--150},
  year    = {2016}
}

@article{WNCV25,
  author  = {S. Wang and T. Nguyen and Y. M. Chee and V. K. Vu},
  title   = {Permutation and multi-permutation codes correcting multiple deletions},
  journal = {IEEE Transactions on Information Theory},
  volume  = {71},
  number  = {9},
  pages   = {6759--6770},
  year    = {2025}
}

@misc{WBNVW26,
  author = {B. Wu and Q. D. Bui and K. Niu and V. K. Vu and S. Wang},
  title  = {Order-optimal systematic permutation codes for correcting $t$ deletions},
  year   = {2026},
  note   = {arXiv:2609.37578}
}

@article{CB11,
  author  = {Y. Cassuto and M. Blaum},
  title   = {Codes for symbol-pair read channels},
  journal = {IEEE Transactions on Information Theory},
  volume  = {57},
  number  = {12},
  pages   = {8011--8020},
  year    = {2011}
}

@inproceedings{CV20,
  author    = {Y. M. Chee and V. K. Vu},
  title     = {Codes correcting synchronization errors for symbol-pair read channels},
  booktitle = {Proc. IEEE International Symposium on Information Theory (ISIT)},
  pages     = {746--750},
  year      = {2020}
}

@misc{XC26,
  author = {H. Xie and Z. Chen},
  title  = {First-order optimal codes for an adversarial nanopore channel},
  year   = {2026},
  note   = {arXiv:2601.21236}
}

@misc{YG26,
  author = {Z. Ye and G. Ge},
  title  = {Deletion-correcting codes for the $\ell$-symbol read channel},
  year   = {2026},
  note   = {arXiv:2606.26434}
}

@article{CKVVY18,
  author  = {Y. M. Chee and H. M. Kiah and A. Vardy and V. K. Vu and E. Yaakobi},
  title   = {Coding for racetrack memories},
  journal = {IEEE Transactions on Information Theory},
  volume  = {64},
  number  = {11},
  pages   = {7094--7112},
  year    = {2018}
}

@article{con2025oblivious,
  title={Oblivious Deletion Codes},
  author={R. Con and R. Li},
  journal={arXiv preprint arXiv:2506.18878},
  year={2025}
}

@article{GGRW23,
  author  = {R. Gabrys and V. Guruswami and J. Ribeiro and K. Wu},
  title   = {Beyond Single-Deletion Correcting Codes: Substitutions and Transpositions},
  journal = {IEEE Transactions on Information Theory},
  volume  = {69},
  number  = {1},
  pages   = {169--186},
  year    = {2023},
  doi     = {10.1109/TIT.2022.3202856}
}

@article{HS21survey,
  author  = {Haeupler, Bernhard and Shahrasbi, Amirbehshad},
  title   = {Synchronization Strings and Codes for Insertions and Deletions---A Survey},
  journal = {IEEE Transactions on Information Theory},
  volume  = {67},
  number  = {6},
  pages   = {3190--3206},
  year    = {2021},
  doi     = {10.1109/TIT.2021.3056317}
}

\end{document}